\tikzstyle shortorb=[minimum size=2mm,inner sep=0pt,outer sep=0pt,shape=circle,fill=black]
\tikzstyle longorb=[minimum size=2mm,inner sep=0pt,outer sep=0pt,shape=circle,fill=white,draw]
\tikzstyle dot=[minimum size=1mm,inner sep=0pt,outer sep=0pt,shape=circle,fill=black]
\tikzstyle cross=[minimum size=1mm,inner sep=0pt,outer sep=0pt,shape=rectangle,fill=black]
\tikzstyle dynnode=[circle, inner sep=0pt, dashed, text width=1.6cm, align=center, draw=black, thick, fill=white]
\tikzset{>=stealth}
\tikzset{->-/.style={decoration={
  markings,
  mark=at position #1 with {\arrow[scale=1.8]{>}}},postaction={decorate}}}
\tikzset{
  lineC/.style={
	  color of colormap={#1},
		draw=.!80!black,
		fill=.!80!white,
},}%
\tikzset{pmseq/.style={decoration={
  markings,
  mark=between positions 1em and 1 step 1.2em with {
    \node [anchor=center,draw,circle,minimum size=0mm,inner sep=0pt,outer sep=0pt,scale=1.2,fill=white]  {\tt\StrChar{#1}{\pgfkeysvalueof{%
    /pgf/decoration/mark info/sequence number}}};}
  }, postaction={decorate}}
}
\providecommand{\U}[1]{\protect\rule{.1in}{.1in}}
\newsavebox{\mysavebox}
\tikzset{>=stealth}
\tikzset{>=stealth}
\numberwithin{equation}{section}
\newcommand{\ba}{\begin{eqnarray}}
\newcommand{\ea}{\end{eqnarray}}
\newcommand{\be}{\begin{equation}}
\newcommand{\ee}{\end{equation}}
\tikzstyle{startstop} = [rectangle, rounded corners, minimum width=3cm, minimum height=1cm,text centered, draw=black, fill=blue!10]
\tikzstyle{startstop} = [rectangle, rounded corners, minimum width=3cm, minimum height=1cm,text centered, draw=black, fill=blue!10]
\tikzstyle{io} = [trapezium, trapezium left angle=70, trapezium right angle=110, minimum width=3cm, minimum height=1cm, text centered, draw=black, fill=blue!30]
\tikzstyle{process} = [rectangle, minimum width=3cm, minimum height=1cm, text centered, draw=black, fill=orange!30]
\tikzstyle{decision} = [diamond, minimum width=3cm, minimum height=1cm, text centered, draw=black, fill=green!30]
\tikzstyle{arrow} = [thick,->,>=stealth]
\newtheorem{lemma}{Lemma}
\newtheorem{theorem}{Theorem}
\newtheorem{corollary}{Corollary}
\newtheorem{definition}{Definition}
\begin{document}

\date{November 2017}

\title{Punctures and Dynamical Systems}

\institution{PENN}{\centerline{${}^{1}$Department of Physics and Astronomy, University of Pennsylvania, Philadelphia, PA 19104, USA}}

\institution{UNC}{\centerline{${}^{2}$Department of Physics, University of North Carolina, Chapel Hill, NC 27599, USA}}

\authors{Falk Hassler\worksat{\PENN , \UNC}\footnote{e-mail: {\tt fhassler@unc.edu}}
and Jonathan J. Heckman\worksat{\PENN}\footnote{e-mail: {\tt jheckman@sas.upenn.edu}}}

\abstract{With the aim of better understanding the class of 4D theories
generated by compactifications of 6D superconformal field theories (SCFTs),
we study the structure of $\mathcal{N}=1$ supersymmetric punctures for
class $\mathcal{S}_{\Gamma }$ theories, namely the 6D\ SCFTs obtained from
M5-branes probing an ADE\ singularity. For M5-branes probing a $\mathbb{C}^2 / \mathbb{Z}_{k}$
singularity, the punctures are governed by a dynamical system in which
evolution in time corresponds to motion to a neighboring node in an affine A-type quiver.
Classification of punctures reduces to
determining consistent initial conditions which produce periodic orbits. The
study of this system is particularly tractable in the case of a single
M5-brane. Even in this ``simple'' case, the solutions exhibit a remarkable
level of complexity: Only specific rational values for the initial momenta
lead to periodic orbits, and small perturbations in these values lead to
vastly different late time behavior. Another difference from half BPS punctures
of class $\mathcal{S}$ theories includes the appearance of a continuous complex
``zero mode'' modulus in some puncture solutions. The construction of punctures with
higher order poles involves a related set of recursion relations. The resulting structures
also generalize to systems with multiple M5-branes as well as probes of
D- and E-type orbifold singularities.}

\maketitle

\tableofcontents

\enlargethispage{\baselineskip}

\setcounter{tocdepth}{2}

\newpage

\section{Introduction}

Compactifications of higher-dimensional theories produce a wealth of insights
into the construction and study of lower-dimensional quantum field theories.
In some sense, the natural starting point for addressing many issues of
compactification is to start with the highest dimension quantum field theories
for which supersymmetry and conformal symmetry can be combined, namely
compactifications of six-dimensional superconformal field theories
(6D\ SCFTs). The first evidence for the existence of 6D\ SCFTs appeared in references
\cite{Witten:1995ex, Witten:1995zh, Strominger:1995ac, Seiberg:1996qx} (see also \cite{WittenSmall,
Ganor:1996mu,MorrisonVafaII,Seiberg:1996vs, Bershadsky:1996nu,
Brunner:1997gf, Blum:1997fw, Aspinwall:1997ye, Intriligator:1997dh, Hanany:1997gh}, and there has
recently been renewed interest in the subject, both in terms of classifying
the resulting theories \cite{Heckman:2013pva, Gaiotto:2014lca, DelZotto:2014hpa,
DelZotto:2014fia, Heckman:2015bfa, Bhardwaj:2015xxa, Chang:2017xmr}, as well
as extracting non-trivial data from these theories and their compactifications
\cite{Apruzzi:2013yva, Heckman:2014qba, DelZotto:2015isa, Gaiotto:2015usa, Ohmori:2015pua, Franco:2015jna, DelZotto:2015rca, Heckman:2015ola, Cordova:2015fha, Hanany:2015pfa, Aganagic:2015cta, Louis:2015mka, Ohmori:2015pia, Coman:2015bqq, Cremonesi:2015bld, Heckman:2016ssk, Cordova:2016xhm, Morrison:2016nrt,  Heckman:2016xdl, Cordova:2016emh, Kim:2016foj, Shimizu:2016lbw, Mekareeya:2016yal, DelZotto:2016pvm, Apruzzi:2016nfr, Razamat:2016dpl, Bah:2017gph, Mitev:2017jqj, Bah:2017wxp, DelZotto:2017pti, Apruzzi:2017iqe, Heckman:2017uxe, Kim:2017toz, Razamat:2017hda}.

For 4D theories obtained from 6D\ SCFTs, the defining data for a
compactification includes specifying a Riemann surface with punctures, namely
marked points with prescribed boundary conditions for various operators of the
6D\ SCFT. A particularly clean class of examples are 4D $\mathcal{N}=2$
theories obtained from compactifications of $\mathcal{N}=(2,0)$ theories on
Riemann surfaces with first order poles for operators at marked points. For a
class $\mathcal{S}$ theory associated with a Lie algebra $\mathfrak{g}_{ADE}$ of ADE type,
the punctures are then specified by
embeddings of the Lie algebra $\mathfrak{su}(2)$ in $\mathfrak{g}_{ADE}$.
These are in turn characterized by nilpotent orbits in the Lie algebra $\mathfrak{g}_{ADE}$,
and for A-type theories this has a simple pictorial
representation in terms of Young diagrams, as used for example in
reference \cite{Gaiotto:2009we}.

The vast number of additional 6D\ SCFTs with $\mathcal{N}=(1,0)$ supersymmetry
suggests a corresponding proliferation of possible 4D $\mathcal{N}=1$ theories
obtained from subsequent compactification. While it is still an open question
to determine the structure of punctures in all 6D SCFTs, in the special case
of the class $\mathcal{S}_{\Gamma}$ theories, namely M5-branes probing an
ADE\ singularity $\mathbb{C}^{2}/\Gamma_{ADE}$, the defining equations for
punctures have at least been worked out \cite{Gaiotto:2015usa, Heckman:2016xdl}.
What remains to be done, however, is develop a classification scheme for possible boundary
conditions, analogous to what exists for $1/2$ BPS\ simple punctures of class
$\mathcal{S}$ theories.

To a certain extent, all of the $1/2$ BPS\ punctures of class $\mathcal{S}%
_{\Gamma}$ theories descend from the special case of $\mathcal{N}=1$
supersymmetric punctures of A-type class $\mathcal{S}$ theories, subject to
the additional conditions imposed by a Douglas-Moore type orbifold projection
\cite{Douglas:1996sw}. In the special case of simple punctures, the $1/4$ BPS\ puncture
equations of class $\mathcal{S}$ theories are \cite{Xie:2013gma}
(see also \cite{Gaiotto:2015usa, Heckman:2016xdl}):
\begin{equation}
\lbrack\Sigma,Q]=Q\text{, \ \ }[\Sigma,\widetilde{Q}]=\widetilde{Q}\text{,
\ \ }[Q,\widetilde{Q}]=0\text{, \ \ }[Q,Q^{\dag}]+[\widetilde{Q}%
,\widetilde{Q}^{\dag}] = \Sigma, \label{ClassSpunctures}%
\end{equation}
where $\Sigma,Q$, and $\widetilde{Q}$ are $N\times N$ matrices with complex
entries, and $\Sigma$ is Hermitian. Making the replacement
$N\mapsto \left\vert \Gamma\right\vert N$ and also demanding these matrices
transform in suitable representations of $\Gamma$ then leads to $\mathcal{N}=1$
supersymmetric punctures of class $\mathcal{S}_{\Gamma}$ theories \cite{Gaiotto:2015usa, Heckman:2016xdl}.
In the special case where $[Q,\widetilde{Q}^{\dag}]=0$, the problem reduces to the
study of embeddings of $\mathfrak{su}(2)^{l}\rightarrow\mathfrak{su}(\left\vert \Gamma\right\vert N )$,
subject to the additional constraints of an orbifold projection.
This of course prompts the question as to whether such solutions are
\textquotedblleft typical.\textquotedblright

Our aim in this paper will be to take some preliminary steps in classifying
punctures for class $\mathcal{S}_{\Gamma}$ theories, focusing in particular on
the case of simple punctures for M5-branes probing $\mathbb{C}^{2} / \mathbb{Z}_k$.
Some of the solutions we present also embed in the D-and E-type
class $\mathcal{S}_{\Gamma}$ theories.

The main idea we introduce is that the puncture equations can be visualized in
terms of a discrete dynamical system. Indeed, instead of working in terms of
large $Nk\times Nk$ matrices, we can alternatively use a \textquotedblleft
quiver basis\textquotedblright\ in which the orbifold projection has already
been imposed. In this case, depicted in figure~\ref{fig:atype}, we can label
the nodes of the quiver by an index $i=1,...,k$, and visualize this index as
a time step in a discrete dynamical system. The evolution equations are, in
terms of matrices $q(i)$, $\widetilde{q}(i)$ and $p(i)$ respectively
associated with the parent matrices
$Q$, $\widetilde{Q}$ and $\Sigma$:%
\begin{align}
q(i) &= p(i)q(i)-q(i)p(i+1)  \\
-\widetilde{q}(i) &= \widetilde{q}(i)p(i)-p(i+1)\widetilde{q}(i)  \\
0 &= q(i)\widetilde{q}(i)-\widetilde{q}(i-1)q(i-1)  \\
p(i) &= \left[  q(i)q^{\dag}(i)-\widetilde{q}^{\dag}(i)\widetilde{q}(i)\right]
-\left[  q^{\dag}(i-1)q(i-1)-\widetilde{q}(i-1)\widetilde{q}^{\dag
}(i-1)\right] .
\end{align}
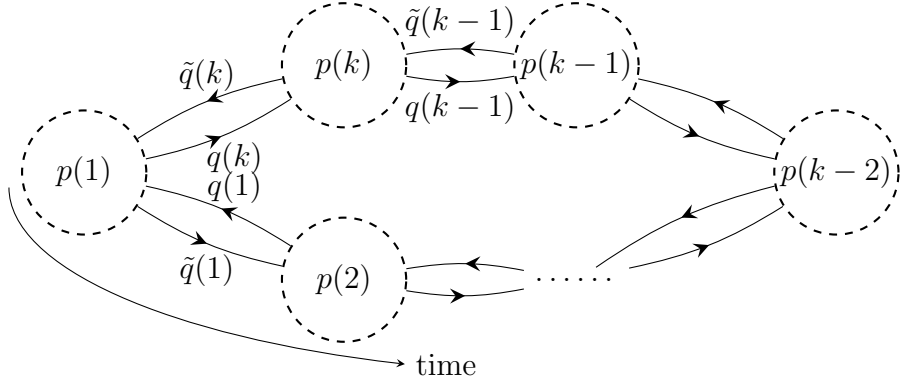
\begin{figure}
  \centering
  \begin{tikzpicture}
    \path (0,0) arc [start angle=0, end angle=360, x radius=5cm, y radius=1.5cm]
    node [pos=.0,dynnode] (nodekm2) {\mbox{$p(k-2)$}}
    node [pos=.2,dynnode] (nodekm1) {\mbox{$p(k-1)$}}
    node [pos=.3,dynnode] (nodek) {$p(k)$}
    node [pos=.500000,dynnode] (node1) {$p(1)$}
    node [pos=.7,dynnode] (node2) {$p(2)$}
    node [pos=.8] (dots) {$\dots \dots$};
    \draw[->-=.5] (nodekm1) to [bend right = 10] node[above] {$\tilde q(k-1)$} (nodek);
    \draw[->-=.5] (nodek)   to [bend right = 10] node[below] {$q(k-1)$} (nodekm1);
    \draw[->-=.5] (nodekm2) to [bend right = 10] (nodekm1);
    \draw[->-=.5] (nodekm1) to [bend right = 10] (nodekm2);
    \draw[->-=.5] (dots)    to [bend right = 10] (nodekm2);
    \draw[->-=.5] (nodekm2) to [bend right = 10] (dots);
    \draw[->-=.5] (node2)   to [bend right = 10] (dots);
    \draw[->-=.5] (dots)    to [bend right = 10] (node2);
    \draw[->-=.5] (node1)   to [bend right = 10] node[below] {$\tilde q(1)$} (node2);
    \draw[->-=.5] (node2)   to [bend right = 10] node[above,xshift=5pt,yshift=-2pt] {$q(1)$} (node1);
    \draw[->-=.5] (nodek)   to [bend right = 10] node[above] {$\tilde q(k)$} (node1);
    \draw[->-=.5] (node1)   to [bend right = 10] node[below,xshift=5pt,yshift=2pt] {$q(k)$} (nodek);
    \path[draw,->] (-11,-.2) arc [start angle=180, end angle=250, x radius=8cm, y radius=2.5cm]
    node [pos=1,anchor=west] {time};
  \end{tikzpicture}
  \caption{A-type quiver with the direction along the quiver interpreted as the time of the
  dynamical system.}\label{fig:atype}
\end{figure}
So, once we know the values of the matrices at timestep $i$, the evolution to
timestep $i+1$ is implicitly specified. A similar iteration procedure also holds
for punctures with higher order poles. In this case,
there is an evolution, not just in \textquotedblleft time\textquotedblright%
\ but also in \textquotedblleft space.\textquotedblright\

Already for the case of $N=1$ where we have a single M5-brane and an IR\ free
theory, the structure of simple punctures generated by this dynamical system
is surprisingly complex. To illustrate, we study in detail this class of
solutions, obtaining a full classification of initial conditions which produce a puncture.
The method also leads to a new class of punctures
for interacting 6D\ SCFT. For example, since we can also embed these solutions
inside large $N\times N$ matrices, we automatically generate $1/4$
BPS\ punctures for class $\mathcal{S}$ theories of $A_{N-1}$ type. For the higher rank
$N>1$ class $\mathcal{S}_{\Gamma}$ theories, these solutions also embed as solutions
which are diagonal block by block.

For a single M5-brane, namely $N=1$, the form of the dynamical system is more explicit and can be
fully automated. Introducing a position coordinate%
\begin{equation}
x(i)=\left\vert q(i)\right\vert ^{2}-\left\vert \widetilde{q}(i)\right\vert
^{2}\text{,}%
\end{equation}
the evolution from one value of position $x(i)$ and momentum $p(i)$ to the
next is:%
\begin{equation}
\left[
\begin{array}
[c]{c}%
p(i+1)\\
x(i+1)
\end{array}
\right]  =\left[
\begin{array}
[c]{cc}%
1 & 0\\
1 & 1
\end{array}
\right]  \left[
\begin{array}
[c]{c}%
p(i)\\
x(i)
\end{array}
\right]  -\left[
\begin{array}
[c]{r}%
\text{sgn }x(i)\\
\text{sgn }x(i)
\end{array}
\right]  ,
\end{equation}
where sgn$(x)=x/\left\vert x\right\vert $ for $x\ne0$ and 0 otherwise. This
evolution holds provided $x(i) \neq 0$. When $x(i_{\ast})=0$ vanishes at some
timestep, we can restart the evolution with another choice of initial condition.

Compared with other discrete dynamical systems, the appearance of the term
sgn$(x(i))$ leads to significant subtleties which are often only apparent in
sufficiently long quivers. Among other things, the presence of this term
obstructs a straightforward continuum limit, and small perturbations lead to
quite different behavior at later time steps.
Our goal will be to determine which initial conditions $p(1)$
and $x(1)$ actually produce viable punctures.

We show that a puncture corresponds to the
special condition that the orbit eventually returns to itself, namely we have
a periodic orbit. Moreover, periodic orbits are only achieved when the
initial momentum $p(1)$ is a rational number, which in turn means that all
other momenta are also rational and can all be expressed with same choice of denominator.
So, we can write $p(i)=a(i)/b$ with $a(i)$ and $b$ relatively prime integers.
We prove that:
\begin{equation}
b\neq0\text{ mod }4 \Rightarrow \text{Periodic Orbit}.
\end{equation}
Based on extensive numerical tests, we also find that the converse holds, namely when $b = 0 \text{ mod } 4$,
that we never have a periodic orbit. In the latter case, the late time values of the $x(i)$ become unbounded.

Returning to the system of physical punctures, we also see that one can
consider \textquotedblleft eternal punctures\textquotedblright\ where no
$x(i)$ vanish, and \textquotedblleft terminal punctures\textquotedblright\ in
which some of the $x(i)$ vanish. In the former case, there is really a single
dynamical system which continues for all time, while in the latter case, we
can restart the dynamical system with a new initial condition after each value
of $i$ for which $x(i)$ vanishes.\footnote{Note that in the dynamical system,
we are of course free to simply continue past this point without restarting
with a new initial condition.} In terms of the dynamical system, the two
classes of punctures have initial conditions:%
\begin{align}
\text{Eternal Punctures}  &  \text{: \ \ }x(1)\in%
\mathbb{R}
\text{ and }x(1)\neq p(1)=\frac{a(1)}{b}\text{ with }b\neq0\text{ mod }4\\
\text{Terminal Punctures}  &  \text{: \ \ }x(1)=p(1)=\frac{a(1)}{b}\text{ with
}b\neq0\text{ mod }4.
\end{align}
An interesting feature of the eternal punctures which does not appear for
$1/2$ BPS\ punctures of class $\mathcal{S}$ theories or for previously studied
punctures of $\mathcal{N} = (1,0)$ theories is the appearance of an overall zero mode: For
sufficiently small $\delta x$, the initial values for the position $x(1)$ and
$x(1)+\delta x$ produce the same eternal puncture. In the equations defined
by a puncture, this is also accompanied by a complex phase which cannot be removed by a unitary change
of basis. Note that the length of the orbit $k$ in an eternal puncture depends on $x(1)$ and $p(1)$,
so we can also write $k(x(1),p(1))$. Alternatively, we can hold fixed $p(1)$, and
for the set of $k$'s which appear, there is an interval of admissible $x(1)$'s which correspond
to the same puncture.

We can reach all periodic orbits by gluing together terminal
punctures, and then perturbing these solutions by the small parameter
$\delta x$. From this perspective, the
rather special case where we embed $su(2)$ in $su(N)$ corresponds to the
restricted case where $b$ is one or two. This illustrates the vast increase
in possible punctures for 4D $\mathcal{N}=1$ vacua.

The dynamical system we encounter is also of interest in its own right, and we
also analyze some additional aspects of its behavior, including sensitivity to
initial conditions and number of orbits as a function of orbit length. There
are fairly regular patterns for many of these quantities, indicating
additional non-trivial structure.

This non-trivial structure also persists for punctures with higher order poles, where there is
again a recursion relation, this time from one pole order to the next. Once
we pass beyond simple order poles, the evolution descends to a linear set of first order discrete evolution
equations rather than the second order behavior exhibited by the first order poles.

An important feature of the puncture solutions thus obtained is that they also generate punctures for more general
6D SCFTs. For example, in the case of $N$ M5-branes probing an A-type singularity, the scalars $q$, $\widetilde{q}$ and $\Sigma$
appearing in the puncture equations are promoted to $N \times N$ matrices. The particular ansatz where all such matrices are diagonal yields
a large class of new punctures, each governed by a decoupled dynamical system. More ambitiously, we can also consider dynamical systems
involving the full matrix structure. In this case, the appearance of a timelike evolution across quiver nodes still applies, though is
more involved. Similar considerations clearly apply for punctures of other 6D SCFTs.

The rest of this paper is organized as follows. In section \ref{sec:DYNAMO} we
establish the basic connection between punctures and dynamical systems.
We next turn to the classification of initial conditions for the dynamical system with $N=1$,
first studying the case with all $x(i)$ non-zero in section \ref{sec:ETERNAL} and
then turning to the case where some $x(i)$ vanish in section \ref{sec:TERMINAL}.
We analyze the structure of higher order poles in section \ref{sec:POLES}.
Section \ref{sec:HIGHER} discusses generalizations to the case of higher rank.
We present our conclusions and potential future directions in section \ref{sec:CONC}.
In Appendix \ref{app:DOMINO} we present some formal proofs establishing
which initial conditions of the dynamical system yield a periodic orbit.

\section{Punctures and Dynamical Systems \label{sec:DYNAMO}}

In this section we establish a correspondence between punctures
and dynamical systems. In doing so we reduce the problem of classification
to determining which initial conditions yield a puncture.

Recall first that the $\mathcal{S}_{k}$ theories
are given by $N$ M5-branes probing the transverse
geometry $\mathbb{R}_{\bot}\times\mathbb{C}^{2}/%
\mathbb{Z}
_{k}$. Punctures for such theories are obtained by dimensionally reducing on a long
cylinder $S^{1}\times\mathbb{R}_{\geq0}$. The reduction along the circle
factor of the cylinder geometry produces a 5D $\mathcal{N}= 1$ gauge theory
with scalar degrees of freedom given by $N\times N$ matrices of fields,
$\Sigma$, $Q$ and $\widetilde{Q}$, and we impose supersymmetry preserving
boundary conditions on the semi-infinite interval. There is then a power series
expansion for modes near the boundary $(t= 0)$ of $\mathbb{R}_{\geq0}$:
\begin{equation}
Q=\sum\limits_{n>0}\frac{Q_{n}}{t_{n}}\,,\quad\widetilde{Q}=\sum
\limits_{n>0}\frac{\widetilde{Q}_{n}}{t^{n}}\quad\text{and}\quad\Sigma
=\sum\limits_{n>0}\frac{\Sigma_{n}}{t^{n}},
\end{equation}
and the puncture is governed by the matrix equations \cite{Heckman:2016xdl}:
\begin{align}
\sum\limits_{k+l=m}[\Sigma_{k},Q_{l}]  &  =(m-1)Q_{m-1} & \sum\limits_{k+l=m}%
[Q_{k},\widetilde{Q}_{l}]  &  =0\nonumber\\
\sum\limits_{k+l=m}[\Sigma_{k},\widetilde{Q}_{l}]  &  =(m-1)\widetilde{Q}%
_{m-1} & \sum\limits_{k+l=m}[Q_{k},Q_{l}^{\dagger}]+[\widetilde{Q}%
_{k},\widetilde{Q}_{l}^{\dagger}]  &  =(m-1)\Sigma_{m-1},\label{higherpoleconstr}
\end{align}
here, the subscript on each matrix denotes the order of the pole.

Proceeding order by order in the poles, we see that for first order poles,
i.e., $m=1$, we have the quadratic equations:%
\begin{equation}
\lbrack\Sigma_{1},Q_{1}]=Q_{1}\text{, \ \ }[\Sigma_{1},\widetilde{Q}%
_{1}]=\widetilde{Q}_{1}\text{, \ \ }[Q_{1},\widetilde{Q}_{1}]=0\text{,
\ \ }\Sigma_{1}=[Q_{1},Q_{1}^{\dag}]+[\widetilde{Q}_{1},\widetilde{Q}%
_{1}^{\dag}].
\end{equation}
Since we shall mainly focus on the first order poles, we shall drop the
subscript for the $\Sigma_{1}$, $Q_{1}$ and $\widetilde{Q}_{1}$, and simply
write $\Sigma$, $Q$ and $\widetilde{Q}$, respectively.

Observe that at second order and above, the equations become linear in the
corresponding pole order. In this sense, once we solve the first order
equations, all subsequent orders can be iteratively solved by linear
transformations. Therefore, we can \textquotedblleft
evolve\textquotedblright\ from one pole order to the next in the generic
case and it is enough to focus on the first order poles. Note that there
are also some special cases where the series of higher poles stops at some
point. We will discuss them in detail in section~\ref{sec:POLES}.

Let us make a few comments on the space of solutions. First, we note that
there is an obvious redundancy by a unitary change of basis. Two punctures
lead to the same physical theory under transformations of the form:%
\begin{equation}
\Sigma\mapsto U^{\dag}\Sigma U\text{, \ \ }Q\mapsto U^{\dag}QU\text{,
\ \ }\widetilde{Q}^{\dag}\mapsto U^{\dag}\widetilde{Q}^{\dag}U,
\end{equation}
for $U$ a unitary matrix in $U(N)$. Note also that the pair $Q$ and
$\widetilde{Q}^{\dag}$ rotate as a doublet under the $SU(2)$ R-symmetry of the
parent 5D theory. This is broken to an $\mathcal{N}=1$ subalgebra by the
presence of the puncture. Counting up the total number of degrees of freedom,
we have $5N^{2}$ real degrees of freedom, subject to $4N^{2}$ constraint
equations, and $N^{2}$ \textquotedblleft gauge redundancies.\textquotedblright
From this perspective, one might expect to only find a discrete
point set of solutions, and in many cases this is indeed correct.
The caveat to this is that sometimes there can
be residual zero modes.''

Let us now further specialize to the case of first order poles for the class
$\mathcal{S}_{k}$ theories. Working in terms of $Nk\times Nk$ matrices, we
decompose into blocks, each of which is an $N\times N$ matrix. Applying the
orbifold projection of reference \cite{Douglas:1996sw}, the surviving entries of the matrices
in a \textquotedblleft quiver basis\textquotedblright\ are given by a set of
linear maps between $k$ different $N$-dimensional vector spaces $V_{1}%
,...,V_{k}$:%
\begin{equation}
p(i):V_{i}\rightarrow V_{i}\text{, \ \ }q(i):V_{i+1}\rightarrow V_{i}\text{,
\ \ }\widetilde{q}(i):V_{i}\rightarrow V_{i+1}.
\end{equation}
Embedding in the original $Nk\times Nk$ matrices, we can also write:%
\begin{align}\label{SIGMA1}
\Sigma_{1}  &  =\left[
\begin{array}
[c]{cccc}%
p(1) &  &  & \\
& \ddots &  & \\
&  & p(k-1) & \\
&  &  & p(k)
\end{array}
\right]  \\
Q_{1}  &  =\left[
\begin{array}
[c]{cccc}
& q(1) &  & \\
&  & \ddots & \\
&  &  & q(k-1)\\
q(k) &  &  &
\end{array}
\right] \\
\widetilde{Q}_{1}  &  =\left[
\begin{array}
[c]{cccc}
&  &  & \widetilde{q}(k)\\
\widetilde{q}(1) &  &  & \\
& \ddots &  & \\
&  & \widetilde{q}(k-1) &
\end{array}
\right]  . \label{Qtilde1}
\end{align}
The puncture equations now reduce to:%
\begin{align}
q(i) &= p(i)q(i)-q(i)p(i+1)\\
-\widetilde{q}(i) &= \widetilde{q}(i)p(i)-p(i+1)\widetilde{q}(i) \\
0 &= q(i)\widetilde{q}(i)-\widetilde{q}(i-1)q(i-1)  \\
p(i) &= \left[  q(i)q^{\dag}(i)-\widetilde{q}^{\dag}(i)\widetilde{q}(i)\right]
-\left[  q^{\dag}(i-1)q(i-1)-\widetilde{q}(i-1)\widetilde{q}^{\dag
}(i-1)\right] .
\end{align}
Again, there is a redundancy in the solutions given by acting by a set of
unitary matrices:%
\begin{equation}
p(i)\mapsto U^{\dag}(i)p(i)U(i)\text{, \ \ }q(i)\mapsto U^{\dag}%
(i)q(i)U(i+1)\text{, \ \ }\widetilde{q}^{\dag}\mapsto U^{\dag}(i)\widetilde{q}U(i+1),
\end{equation}
for $U(i)$ a unitary transformation of the vector space $V_{i}$. An important feature of this
set of equations is that it is recursive in structure, namely if we possess a solution, we can iteratively solve for the
matrices at node $i+1$ using the matrices at node $i$. In this sense, punctures always generate a dynamical system.
Of course, it may prove difficult to explicitly construct solutions.

Our plan in the remainder of this section will be to study the formal
structure of these puncture equations. The analytically
most tractable case is $N=1$, namely a single M5-brane
probing an orbifold singularity. This already leads to a surprisingly rich
structure, and one of our aims in the remainder of this paper will be to fully
characterize the solutions in this special case.

\subsection{Specialization to $N=1$}

To gain further understanding of the possible solutions to this system of
equations, we now specialize even further to the case of $N=1$. In this case,
the matrices at each time step are just numbers, and we can write the whole
system as:%
\begin{align}
q(i) &= \left[  p(i)-p(i+1)\right]  q(i) \\
-\widetilde{q}(i) &= \left[  p(i)-p(i+1)\right]  \widetilde{q}(i)  \\
0 &= q(i)\widetilde{q}(i)-\widetilde{q}(i-1)q(i-1)  \label{qqtcomm}\\
p(i) &= \left[  \left\vert q(i)\right\vert ^{2}-\left\vert \widetilde{q}(i)\right\vert
^{2}\right]  -\left[  \left\vert q(i-1)\right\vert ^{2}-\left\vert
\widetilde{q}(i-1)\right\vert ^{2}\right] .
\end{align}
For any time step $i$, at most one of $q(i)$ or
$\widetilde{q}(i)$ can be non-zero (and thus line (\ref{qqtcomm}) is always
satisfied). Assuming we are in the generic situation where at least one is
non-zero, we fully characterize the solution by the recursion relations:%
\begin{equation}
\left[
\begin{array}
[c]{c}%
p(i+1)\\
x(i+1)
\end{array}
\right]  =\left[
\begin{array}
[c]{cc}%
1 & 0\\
1 & 1
\end{array}
\right]  \left[
\begin{array}
[c]{c}%
p(i)\\
x(i)
\end{array}
\right]  -\left[
\begin{array}
[c]{r}%
\text{sgn }x(i)\\
\text{sgn }x(i)
\end{array}
\right]  , \label{evolve}%
\end{equation}
where:%
\begin{equation}
x(i)=\left\vert q(i)\right\vert ^{2}-\left\vert \widetilde{q}(i)\right\vert
^{2}\text{.}%
\end{equation}
One can also work purely in terms of a second order relation:%
\begin{equation}
-x(i-1)+2x(i)-x(i+1)=\text{sgn }x(i),
\end{equation}
so one can visualize this as a discrete wave equation subject to a non-trivial source.

To a certain extent, even the particular values of the $x(i)$ are redundant.
The only combinatorial data which actually enters in specifying the puncture
is the sign of $x(i)$. We can therefore introduce a spin variable
$s(i) = \mathrm{sgn}(x(i))$ with values $\pm1$ or $0$ depending on whether $x(i)$ is
positive, negative, or vanishes.

Now, from the perspective of finding puncture solutions, if we ever encounter
a value $i_{\ast}$ such that $x(i_{\ast}) = 0$, we can simply supply another
initial condition to the dynamical system and continue evolving. Said
differently, a sequence of non-zero values of $x(i)$ determines a puncture,
and we can of course produce another puncture by appending to it another such
sequence by restarting the dynamical system with a different choice of initial conditions.

Along these lines, we refer to a \textquotedblleft terminal
puncture\textquotedblright\ as one for which $x(k)=0$. Additionally, there are
punctures for which no value of $x(i)$ vanishes. In such situations, the fact
that both parent matrices $Q$ and $\widetilde{Q}$ are nilpotent means that
there is at least one sign flip in the sequence of $x(i)$'s. We refer to this
as an \textquotedblleft eternal puncture.\textquotedblright\ Summarizing, we
have two types of punctures to consider:%
\begin{align}
\text{Terminal Puncture}\text{: }  &  x(0)=x(k)=0\\
\text{Eternal Puncture}\text{: }  &  x(i)\neq 0\text{ for all }i=1,...,k.
\end{align}
Note that whereas for a puncture we naturally identify $x(0)$ and $x(k)$, in
the context of a dynamical system, nothing forces us to do so.

Indeed, from the perspective of the dynamical system we can just study the
evolution equations of line (\ref{evolve}); we simply
feed in a choice of initial conditions at $i=0$, namely $p(0)$ and
$x(0)$, and then proceed to evolve it for all time steps $i\in%
\mathbb{Z}
_{\geq0}$. In the process of this evolution, it can happen that $x(i)$
vanishes for several choices of $i$, and we can label this subsequence of
values as $k_{1},...,k_{m},...$. Each such termination leads to a valid
solution to the puncture equations for some choice of quiver size (typically
different from $k$). In the context of the dynamical system, however, we can
continue to evolve past this point of vanishing. It could even happen that the
sequence hits zero at some value, but does not (yet) repeat its profile through
the phase space.\footnote{A consequence of our analysis, however, is that if an orbit
for $x(i)$ passes through zero twice then it necessarily executes a periodic orbit.}

\subsection{Periodic Orbits}

Even so, the only initial conditions we need concern ourselves with are those
with a periodic orbit, though a priori the period length may differ from $k$.
In the case where no $x(i)$ vanish for the puncture equations, namely we have
an eternal puncture, it is clear that for us to get a solution to the puncture
equations, the dynamical system must execute a periodic orbit. In the case of
a terminal puncture, we recall that necessarily, such a puncture has $x(0)=0$
and $x(k)=0$. In terms of the $x(i) $,\ recall that we can express the
puncture equations as a second order discrete difference:%
\begin{equation}
-x(i-1)+2x(i)-x(i+1)=\text{sgn }x(i).
\end{equation}
Summing from $i=k-n+1$ to $i=k+n-1$ for some $n\geq1$ yields a telescoping
series, so we obtain:%
\begin{equation}
-x(k-n)-x(k+n)=\underset{i=k-n+1}{\overset{k+n-1}{\sum}}\text{sgn }x(i).
\label{telescoped}%
\end{equation}
Proceeding by induction, we see that for $n=1$, we need to evaluate sgn
$x(k)=0$, so we learn that $x(k-1)=-x(k+1)$. \ Assuming the relation:%
\begin{equation}
x(k-n)+x(k+n)=0 \label{updown}%
\end{equation}
holds for $n=1,...,N$, it clearly also holds for $n=N+1$, since the signs on
the righthand side of line (\ref{telescoped}) cancel pairwise. This
establishes relation (\ref{telescoped}) for all $n$. Since we essentially just
run the evolution in reverse as we cross through a zero, we see that in the
case of a puncture, where $x(0)=x(k)=0$, the orbit necessarily repeats after
at most $2k$ steps. That is to say, we can set $n=k-i$ to obtain:%
\begin{equation}
x(i)=-x(2k-i)=x(2k+i)\text{.}%
\end{equation}
where in the second equality, we used the fact that $x(2k)=0$, so
$x(2k-i)=-x(2k+i)$.

This establishes that for the analysis of punctures, it is enough to focus on
periodic orbits of the dynamical system. Additionally, we see that even if
$x(0)=x(k)=0$, the length of the orbit may be $2k$ rather than $k$. We
reference these two possibilities as \textquotedblleft short
orbits\textquotedblright\ and \textquotedblleft long orbits\textquotedblright:%
\begin{align}
\text{Short Orbit}\text{: Period of Length }  &  k\\
\text{Long Orbit}\text{: Period of Length }  &  2k.
\end{align}
Note that eternal punctures always come from short orbits, whereas a terminal
puncture could be either a short or long orbit.

Now, because we have a trajectory which repeats after at most $2k$ steps, we
see at once that the momenta must be quantized in units of $1/k$ or $1/2k$.
The only case which corresponds to representations of $\mathfrak{su}(2)$ is the very
special case where the momenta is a half integer or an integer.

We now turn to the the initial conditions necessary for our dynamical system
to execute a periodic orbit.

\section{Eternal Punctures \label{sec:ETERNAL}}

Our aim in this section will be to determine initial conditions which produce eternal
puncture. A helpful feature of this case is that there is a natural geometric
interpretation of the solutions. To see this, we return to the
large matrices $Q$ and $\widetilde{Q}$, which are interpreted as the matrix
collective coordinates for branes moving on the space $\mathbb{C}^{2}/\Gamma$.
Now, although these matrices are nilpotent, we see that for an eternal
puncture, the combinations:%
\begin{equation}
Q_{\pm}=Q\pm\widetilde{Q}^{\dag}%
\end{equation}
have non-vanishing determinant. Indeed, note that since:%
\begin{equation}
\text{Tr}(Q_{\pm}^{j})=0\text{ \ \ for \ \ }1\leq j<k,
\end{equation}
we also learn that the eigenvalues of $Q_{+}$ and $Q_{-}$ are:%
\begin{equation}
\text{Eigen}(Q_{+})=\text{Eigen}(Q_{-})=\left\{  \zeta\times
\underset{i=1}{\overset{k}{\prod}}(q(i) \pm\widetilde{q}^{\dag}(i))\text{
\ \ s.t. \ \ }\zeta^{k}=1\right\}  .
\end{equation}
Here, we have used the fact that there must be an even number of sign flips in
the $x(i)$, since the momentum needs to return to its initial value after $k$
time steps.

Even so, we must exercise some caution with our geometric interpretation
because the coordinates $Q_{+}$ and $Q_{-}$ do not commute. There is a sense
in which we have a semi-classical limit, however, because we can consider the
special limit where the rank of $[Q_{+},Q_{-}]$ is much smaller than $k$. In
this case, the relative number of total sign flips is also quite small.

The rest of this section is organized as follows. First, we show that for an
orbit of length $k$, the momenta is quantized in units of $1/k$. Additionally,
we show that in this case, there is no quantization in the value of the
$x(i)$'s, and in fact, that there is an overall zero mode for the system. We
then turn to a detailed analysis of the dynamical system associated with
eternal punctures, showing in particular that for nearly all values of the
momenta, we do indeed obtain a periodic orbit. There are, however, a few cases
which do not appear to exhibit this structure, corresponding to the special
cases where $p(1)=a/b$ for $b = 0$ mod $4$.

\subsection{Rational Momenta}

The first non-trivial observation we can make is that the momenta of the
system necessarily take values in the rational numbers. To see this, consider
again the $k$th iteration of the dynamical system:%
\begin{align}
p(k+1)  &  =p(1)-\underset{i=1}{\overset{k}{\sum}}\text{sgn }x(i)\\
x(k+1)  &  =kp(1)+x(1)-\underset{i=1}{\overset{k}{\sum}}(k-i)\text{sgn }x(i).
\end{align}
On the other hand, since we have assumed $x(k+1)=x(1)$, the second line
already tells us that $p(1)$ is a rational number such that $kp(1)$ is an
integer:%
\begin{equation}
p(1)=\frac{1}{k}\underset{i=1}{\overset{k}{\sum}}(k-i)\text{sgn }x(i).
\end{equation}
The challenge, of course, is that we do not a priori know which signs for
$x(i)$ will yield a consistent solution to the dynamical system. Note also
that there is a priori no reason for the $x(i)$ to take values in the rational
numbers.

\subsection{Zero Mode}

Indeed, the dynamical system leaves unfixed an overall \textquotedblleft zero
mode.\textquotedblright\ Suppose
that we have managed to find a consistent choice of initial conditions for
$x(1)$ and $p(1)$ which solves the conditions of the dynamical system. Now,
for $\delta x$ sufficiently small, we can perturb each of the $x(i)$ so that
we do not change the sign of any $x(i)$:%
\begin{equation}
x(i)\mapsto x(i)+\delta x\text{ \ \ with \ \ sgn }x(i)=\text{sgn}(x(i)+\delta
x). \label{perto}%
\end{equation}
Observe that because only the differences of $x(i)$ show up in this system,
this \textquotedblleft zero mode\textquotedblright\ in fact decouples. So in
general, there is a small continuous modulus associated with eternal punctures.
Additionally, we see that whereas the momentum is naturally quantized, the
position can in principle take on arbitrary real values.

An additional comment is that there is also an unfixed overall phase for the
$q(i)$ and $\widetilde{q}^{\dag}(i)$. By a
choice of unitary transformation, we can map each $q(i)+\widetilde{q}^{\dag}(i)$ to:%
\begin{equation}
\left(  q(i)+\widetilde{q}^{\dag}(i)\right)  \mapsto U(i)\left(
q(i)+\widetilde{q}^{\dag}(i)\right)  U^{\dag}(i+1)
\end{equation}
for some choice of complex phases $U(i)$. Doing so, we can eliminate most of
the phases using the constraints:
\begin{equation}
\arg(U(i)U^{\dag}(i+1))=-\arg(q(i)+\widetilde{q}^{\dag}(i)),
\end{equation}
so there is one overall phase which cannot be eliminated by a change of basis.
This is in some sense the \textquotedblleft bosonic partner\textquotedblright%
\ to the radial mode $\delta x$, as one would expect in an
$\mathcal{N}=1$ supersymmetric theory. Whereas the complex
phase of the $q$ and $\widetilde{q}$'s completely decouples from our analysis,
we will see that suitable tuning of $\delta x$ allows us to interpolate from
eternal punctures to terminal punctures. For higher values of $N$, we
anticipate that in general, the corresponding \textquotedblleft zero
modes\textquotedblright\ can also mix, so that the matrices cease to remain
diagonal.

By inspection, the eigenvalues are arranged along a circle, as befits the
interpretation in terms of image branes. In this picture, the perturbation of
line (\ref{perto}) corresponds to moving each image in or out. Similar
considerations hold from analyzing the \textquotedblleft radius
squared\textquotedblright\ obtained from Tr$(Q_{+}^{\dag}Q_{+}+Q_{-}^{\dag
}Q_{-})$.

In the low energy effective field theory, the fluctuation $\delta x$ signals the presence of a 
free chiral multiplet, which we associate with a corresponding Goldstone mode. Note that the field range of 
this mode is limited to small fluctuations. Indeed, as we increase the size of $\delta x$, we can 
jump from one value of $k$ to another value. In the stringy geometry, this completely changes the compactification 
geometry. In the low energy effective field theory specified by the puncture, this value of $k$ shows up 
in the spectrum of defect / line operators, as per reference \cite{DelZotto:2015isa}. From this perspective, 
we interpret possible jumping behavior in the 4D vacua as a phenomenon akin to skyrmionic excitations. We leave a 
complete treatment of this interesting phenomenon for future work.

\subsection{Minimal Momenta and Maximal Positions}

Much as in other classical systems, it is helpful to analyze the special cases
where $\left\vert p(i)\right\vert $ is as small as possible. Unlike a system
with continuous time steps, however, this minimal value need not be zero. As
it is helpful in characterizing other aspects of our solutions, we now study
in detail the minimal momenta obtained in the orbit of an eternal puncture.
The general claim we make is that in the course of its evolution, the momenta
will always pass to a small value, $p$:%
\begin{equation}
\left\vert p\right\vert \leq1/2.
\end{equation}
Moreover, the corresponding value of the position leads to an approximation of
the maximum which would be obtained in the continuum limit of infinitesimal
time steps.

To see why this always occurs, it is enough to consider the special case where
$p(1)>0$ and $x(1)>0$. Indeed, if $p(1)<0$ and $x(1)<0$, the same argument
will apply, and in the case where $x(1)$ and $p(1)$ have opposite sign, we
observe that by evolving the system for a sufficient number of time steps,
they eventually have the same sign anyway.

Consider, then, the case where $p(1)>0$ and $x(1)>0$. In this case, we proceed
for some number of time steps until the sign of $x(i)$ changes from positive
to negative. To determine where this occurs, write $p(1)$ as:%
\begin{equation}
p(1)=p+m,
\end{equation}
where $m\in%
\mathbb{Z}
_{\geq0}$ and $\left\vert p\right\vert \leq1/2$. If $m=0$, then there is
nothing to show, so we assume to the contrary that $m>0$. Now, after the first
time step, the new values for our dynamical system are:%
\begin{align}
p(2)  &  =p+m-\text{sgn }x(1)=p+m-1\\
x(2)  &  =x(1)+p(1)-\text{sgn }x(1)=x(1)+p+m-1\text{,}%
\end{align}
so again, $x(2)>0$. Thus, there will be a sequence of $+$ signs for $x(j)$,
and eventually there will be a sign flip at some later value of $j$.
Continuing in this way, we seek out the largest value of $j$ such that:%
\begin{equation}
x(j)>0\text{ \ \ for \ \ }1\leq j\leq j_{\ast}\text{ \ \ and \ \ }x(j_{\ast
}+1)<0.
\end{equation}
Iterating the dynamical system $j$ times in this range, we have, by
assumption, that:%
\begin{equation}
x(j+1)=x(1)+jp(1)-\frac{j(j+1)}{2}=x(1)+j(p+m)-\frac{j(j+1)}{2}.
\end{equation}
This is a quadratic polynomial in $j$, and its zeros occur at:%
\begin{equation}
j_{\pm}=\left(  p+m-\frac{1}{2}\right)  \pm\sqrt{\left(
p+m-\frac{1}{2}\right)  ^{2}+2x(1)},
\end{equation}
which in general is not an integer. The first integer value of $j$ which
yields a negative value, namely $x(j_{\text{flip}})<0$ is then given by
rounding up using the ceiling function:%
\begin{equation}
j_{\text{flip}}=\text{Ceil}\left[  \left(  p+m-\frac{1}%
{2}\right)  +\sqrt{\left(  p+m-\frac{1}{2}\right)  ^{2}%
+2x(1)}\right]  . \label{jflip}%
\end{equation}
We can also establish a crude lower bound for $j_{\text{flip}}$ since $x(1)>0
$:%
\begin{equation}
j_{\text{flip}}\geq2p+2m-1,
\end{equation}
so in other words, the integral part of $p(1)$ is bounded above by:%
\begin{equation}
m\leq p+\frac{j_{\text{flip}}+1}{2}.
\end{equation}
Due to this, we see that $p(i)$ can indeed decrease for $m$ steps whilst
$x(i)$ still remains positive. At the $(m+1)$th step, the value of $p(m+1)$
and $x(m+1)$ is therefore:%
\begin{align}
p(m+1)  &  =p\\
x(m+1)  &  =x(1)+\frac{m^{2}}{2}+m\left(  p-\frac{1}{2}\right)  .
\end{align}
We also see that this minimal value of $p(i)$ coincides with a maximal value
of $x(i)$, much as one would expect in extremizing a continuous function. To
see this, note first that $x(i)$ is strictly increasing as we move from
$i=1,...,m+1$. Additionally, at $i=m+2$, we have that $x(m+1)$ is still
positive, but that:%
\begin{align}
p(m+2)  &  =p-\frac{1}{2}\\
x(m+2)  &  =x(m+1)+(p-1),
\end{align}
so $x(m+2)<x(m+1)$, since $(p-1)<0$. See figure~\ref{fig:curves}
for a depiction of this local maximum.
\begin{figure}
  \centering
  \includegraphics{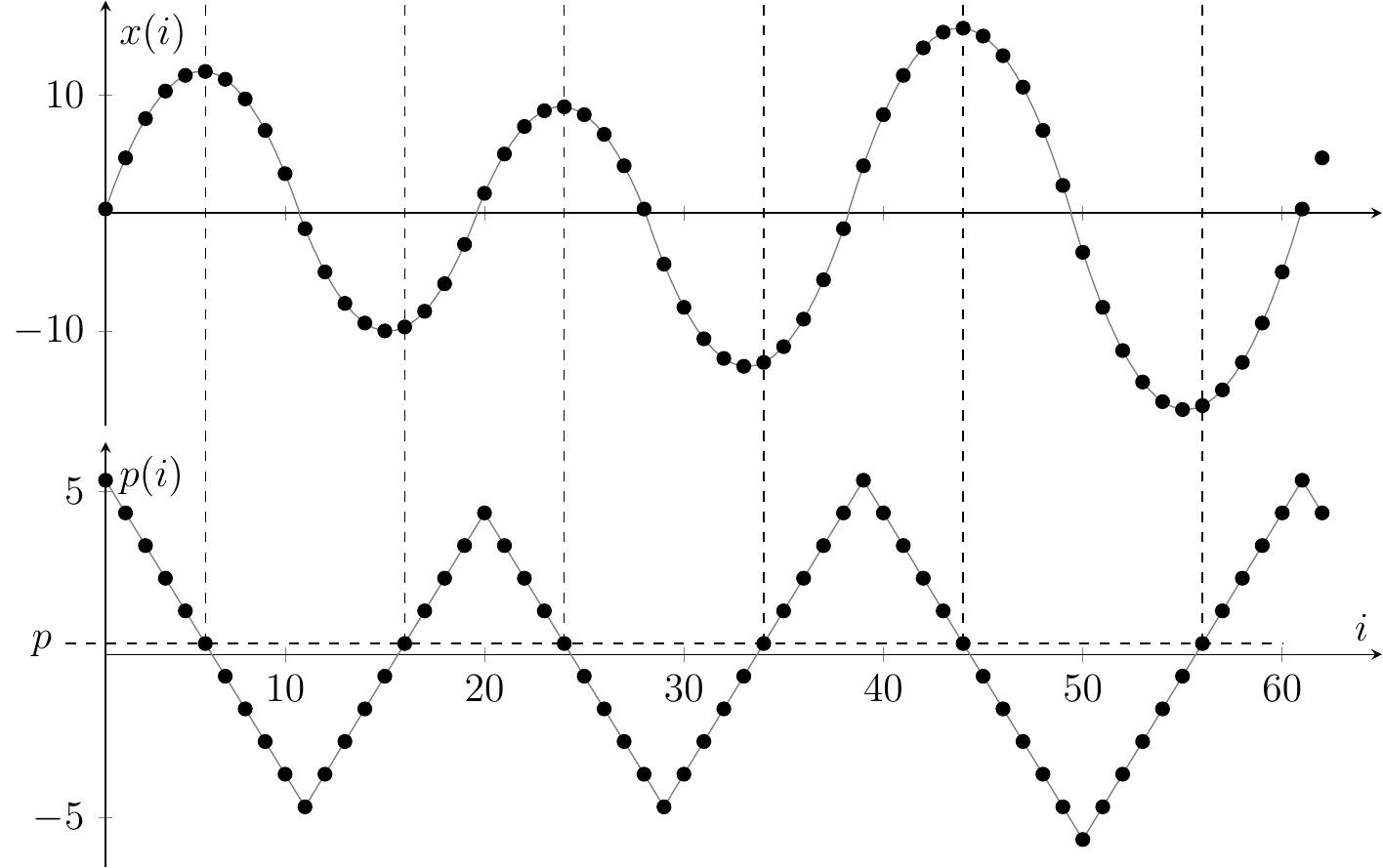}
  \caption{Approximated maxima and minima of $x(i)$ arise for all $i$ where
$p(i)=p$ holds. In the plotted example, we have $x(1) = 1/3$, and $p(1) = x(1) + 5$, with $p = 1/3$. For this case, we realize an eternal puncture where no $x(i)$ vanishes, and the total length of the orbit is $k = 60$.}\label{fig:curves}
\end{figure}
Given this, it is also natural to ask where the next local extremum will
occur. Indeed, after reaching the \textquotedblleft top of the
hill\textquotedblright, we see that the $x(i)$ start to decrease until
eventually we pass to a negative value of $x(j_{\text{flip}})$ where the
position flips sign:%
\begin{align}
p(j_{\text{flip}})  &  =p(1)-(j_{\text{flip}}-1)\label{pjflip}\\
x(j_{\text{flip}})  &  =x(1)+(j_{\text{flip}}-1)p(1)-\frac{j_{\text{flip}%
}(j_{\text{flip}}-1)}{2}. \label{xjflip}%
\end{align}
The $x(i)$ for $i\geq j_{\text{flip}}$ then remain negative for some
additional number of time steps.

Returning to our original argument, we also see that $p(j_{\text{flip}})$ can
be written as:%
\begin{equation}
p(j_{\text{flip}}) =p+\left(  m-(j_{\text{flip}%
}-1)\right)  ,
\end{equation}
namely the term in parentheses is a negative integer, and $p$ is
again a rational number between $-1/2$ and $+1/2$. Continuing with our
previous discussion, we know that since $p(j_{\text{flip}})$ and
$x(j_{\text{flip}})$ are both negative, the values of $x(i)$ will now be
negative for a while, and will eventually reach a minimal value for $p(i)$
where the momentum reaches its smallest norm, namely $p$. Letting
$i_{\text{max}}$ denote the local value of the first \textquotedblleft
maximum\textquotedblright\ where $x(i_{\text{max}})>0$ and $p(i_{\text{max}%
})=p$, and $i_{\text{min}}$ denote the first minimum where
$x(i_{\text{min}})<0$ and $p(i_{\text{min}})=p$, we see that:%
\begin{equation}
p(i_{\text{min}})=p(i_{\text{max}})-\underset{i_{\text{max}}\leq j\leq
i_{\text{min}}-1}{\sum}\text{sgn }x(j),
\end{equation}
so in particular, since the momentum is the same at the \textquotedblleft top
of the hill\textquotedblright\ and the \textquotedblleft bottom of the
hill\textquotedblright, namely $p(i_{\text{min}})=p(i_{\text{max}%
})=p$, we have an equal number of $+$ and $-$ signs for sgn
$x(j)$ in this range.

We can also calculate the number of time steps between each local maximum and
minimum. Indeed, since we now know that the trajectory hits an equal number of
$+$ and $-$ signs in passing between the two extrema, it is enough to start at
$i=i_{\text{max}}$ and evolve until the $x(i)$ flip sign. Doing so, we require
the existence of a positive integer $l_{+}$ with:%
\begin{align}
x(l_{+}+i_{\text{max}})  &  >0\\
x(l_{+}+i_{\text{max}}+1)  &  <0.
\end{align}
So, since we also know:%
\begin{equation}
x(l_{+}+i_{\text{max}}+1)=x(i_{\text{max}})+l_{+}p(i_{\text{max}})-\frac
{l_{+}(l_{+}+1)}{2},
\end{equation}
we can solve for the roots of this quadratic equation in $l_{+}$ to find the
corresponding value of $l_{+}$:%
\begin{equation}
l_{+}=\text{Floor}\left[  \left(  p-\frac{1}{2}\right)
+\sqrt{\left(  p-\frac{1}{2}\right)  ^{2}+2x(i_{\text{max}}%
)}\right]  .
\end{equation}
Going another $l_{+}$ steps, we reach a minimum:%
\begin{equation}
x(i_{\text{min}})=x(2l_{+}+i_{\text{max}}+1)=x(i_{\text{max}})+2l_{+}%
p(i_{\text{max}})-\left(  \frac{l_{+}}{2}\right)  ^{2}.
\end{equation}
We can also evaluate the total length of time taken in passing from the local
maximum to the local minimum:%
\begin{equation}
i_{\text{min}}-i_{\text{max}}=2\text{Ceil}\left[  \left(  p%
-\frac{1}{2}\right)  +\sqrt{\left(  p-\frac{1}{2}\right)
^{2}+2x(i_{\text{max}})}\right]  ,
\end{equation}
where we have used the fact that the number of $+$'s and $-$'s is exactly the
same in passing from the local maximum to the local minimum. Passing now from
the local minimum to the next maximum, we obtain a rather similar expression,
but where we now start with $x(i_{\text{min}})<0$, eventually reaching a local
maximum. Labelling the local extrema as $i=i_{\text{ext}}^{(1)},i_{\text{ext}%
}^{(2)},i_{\text{ext}}^{(3)},...$, in which $i_{\text{ext}}^{(\text{odd})}$
denotes a local maximum, and $i_{\text{ext}}^{(\text{even})}$ denotes a local
minimum for the position $x(i)$, the distance $l^{(m)}$ between subsequent
extrema is:%
\begin{equation}\label{l(m)}
l^{(m)}=i_{\text{ext}}^{(m+1)}-i_{\text{ext}}^{(m)}=2\text{Ceil}\left[
\left(  p\text{sgn }x(i_{\text{ext}}^{(m)})-\frac{1}{2}\right)
+\sqrt{\left(  p\text{sgn }x(i_{\text{ext}}^{(m)})-\frac{1}%
{2}\right)  ^{2}+2\left\vert x(i_{\text{ext}}^{(m)})\right\vert }\right] .%
\end{equation}

Since we are ultimately interested in determining periodic orbits of the
dynamical system, it is clearly fruitful to focus on this special subset of
values. The evolution from one extremum to the next is again controlled by a
recursion relation:%
\begin{equation}\label{iteration}
x(i_{\text{ext}}^{(m+1)})=x(i_{\text{ext}}^{(m)})+l^{(m)}p_{\text{min}%
}-\left(  \text{sgn }x(i_{\text{ext}}^{(m)})\right)  \times\left(
\frac{l^{(m)}}{2}\right)  ^{2}.
\end{equation}
Rather importantly, we also see that the value of the momentum is the same for
all of these special values:%
\begin{equation}
p=p(i_{\text{ext}}^{(1)})=p(i_{\text{ext}}^{(2)})=\dots \,.
\end{equation}
To find initial conditions for which we execute a periodic orbit, it is
therefore enough to find two values of extremal values for which the value of
$x(i)$ repeats, namely:%
\begin{equation}
x(j_{\text{ext}})=x(j_{\text{ext}}^{\prime}).
\end{equation}
Our aim in the following subsection will be to determine the conditions
necessary to achieve such a periodic orbit.

\subsection{Periodic Orbit Conditions}\label{porbitconds}

We would now like to understand in greater detail the requirement that our
orbit is actually periodic. The purpose of Appendix \ref{app:DOMINO}
is to formally prove:
\begin{equation}
p(1)  =\frac{a}{b}\text{ \ \ with \ \ }a,b\text{ coprime and }b \neq 0\,\text{mod}\,4 \Rightarrow \text{Periodic Orbit}
\end{equation}
When $b = 0$ mod $4$, we also find strong numerical evidence that the orbit
never closes. We have explicitly checked this up to orbit lengths $k \leq 2000$.

The main element of the proof is to study what happens when the $x(i)$ become very
large. Indeed, we know that for a periodic orbit there is always an upper bound
on all the $x(i)$, so an orbit which is not periodic must necessarily suffer from
unbounded growth in the $x(i)$. Thankfully, as $x(i)$ becomes very large, some of
the non-analytic behavior in the locations of sign flips is also reduced. The main
strategy of the proof presented in the Appendix is that in this regime, the effects
of the square root appearing in the length of a contiguous block of $+/-$'s disappears,
and a simplified dynamical system involving the ceiling function is all that remains. This
is still a challenging dynamical system to analyze, but the number theoretic aspects of
its behavior are amenable to an exact analysis.

To see how this comes about, we first present an analytic treatment for the time
evolution governed by \eqref{iteration}. In order to increase the readability of the
equations in this subsection, we use the abbreviations:
\begin{equation}
  y^{(m)} = x(i_\text{ext}^{(m)})
    \quad \text{and} \quad
  l = l^{(1)}\,.
\end{equation}
In addition to the evolution $y^{(m)} \rightarrow y^{(m+1)}$ in \eqref{iteration},
we also need its inverse $y^{(m)} \rightarrow y^{(m-1)}$ in the following
discussion. It is given by
\begin{align}\label{iterationinv}
  y^{(m-1)} &= y^{(m)}-\tilde l^{(m)} p - \left( \frac{ \tilde l^{(m)}}{2}\right) ^{2}
  \sgn ( y^{(m)} - p ) \quad \text{with}\\
  \tilde l^{(m)} &= 2\text{Ceil}\left[ - p \, \sgn ( y^{(m)} - p ) - \frac{1}{2} + \sqrt{\left(
    - p \, \sgn ( y^{(m)} - p ) - \frac12 \right) ^2 + 2 \left\vert y^{(m)} - p \right\vert } \right] \,. \label{tildel(m)}
\end{align}
Finally note, that the equations \eqref{iteration} and \eqref{l(m)} possess a
$\mathbb{Z}_2$ symmetry acting as
\begin{equation}
  y^{(m)} \rightarrow - y^{(m)}\,, \quad p \rightarrow - p\,, \quad l^{(m)} \rightarrow l^{(m)}\,.
\end{equation}
Therefore, it is sufficient to restrict the following discussion to
$0 < p < 1/2$. From the results for positive $p$, the time evolution for
negative $p$ follows immediately by flipping the sign of $y^{(m)}$.

In general, it not possible to find an analytic expression for all $y^{(m)}$
starting from an arbitrary initial value $y^{(1)}$. This is due to three
properties of the equations \eqref{l(m)} and \eqref{iteration}:
\begin{enumerate}
  \item Equation \eqref{l(m)} contains the highly non-linear ceiling function
  \item Equation \eqref{l(m)} contains a square root
  \item Equation \eqref{iteration} is non-linear in $l^{(m)}$\,.
\end{enumerate}
Item one captures a fundamental feature of the dynamical system. Hence, we
can not get rid of the ceiling function. But, we can deal with the two other
points, if we impose some additional restrictions on $y^{(m)}$. To see how
this works, we first switch from $y^{(m)}$ to the more adapted variable
$\Delta^{(m)}$. It is implicitly defined as:
\begin{equation}
  y^{(m)} = \frac{l - 1}8 \left((l - 1)\sgn y^{(m)} + 4 ( \Delta^{(m)}
    - p )\right)\,.
\end{equation}
In terms of this new variable \eqref{l(m)} reads
\begin{equation}\label{l(m)new}
  l^{(m)} = l - \sgn y^{(m)} - 1 + 2 \text{Ceil} \left( \Delta^{(m)} +
    \delta_1(\Delta^{(m)}, y^{(m)}) \right)
\end{equation}
with
\begin{equation}\label{delta1}
  \delta_1(\Delta, y) = \begin{cases}
    \frac{1-l}2 + p - \Delta + \frac12 \sqrt{ l^2 + l ( 4 \Delta - 4 p - 2) +
      4 p^2 - 4 \Delta + 2} & y > 0 \\
    \frac{1-l}2 - p + \Delta + \frac12 \sqrt{ l^2 + l ( 4 p - 4 \Delta - 2) +
      4 p^2 + 4 \Delta + 2} & y < 0\,.
  \end{cases}
\end{equation}
In the same fashion, we rewrite \eqref{iteration} as
\begin{equation}
  \Delta^{(m+1)} = \Delta^{(m)} + 1 + 2 p - 2 \text{Ceil} \big( \Delta^{(m)} +
    \delta_1(\Delta^{(m)}, y^{(m)}) \big) + \delta_2(\Delta^{(m)}, y^{(m)})
\end{equation}
with
\begin{equation}\label{delta2}
  \delta_2(\Delta, y) = \sgn y \, \frac{( 1 + 4 p - 2 \text{Ceil} (\Delta))
    (2 \text{Ceil}(\Delta) - 1)}{2(l - 1)}\,.
\end{equation}
We repeat these steps also for the inverse evolution \eqref{iterationinv} and
\eqref{tildel(m)} to obtain
\begin{equation}
  \tilde l^{(m)} = l - \sgn ( y^{(m)} - p ) - 1 + 2 \text{Ceil} \left(
    \Delta^{(m)} - 2 p + \delta_1 ( \Delta^{(m)}, y^{(m)} - p ) \right)
\end{equation}
and
\begin{equation}\label{inviter}
  \Delta^{(m-1)} = \Delta^{(m)} + 1 - 2 p - 2 \text{Ceil} \left( \Delta^{(m)}
    - 2 p + \delta_1 ( \Delta^{(m)}, y^{(m)} - p ) \right) +
    \tilde \delta_2(\Delta^{(m)}, y^{(m)})
\end{equation}
with
\begin{equation}\label{tildedelta2}
  \tilde \delta_2( \Delta^{(m)}, y^{(m)}) =  \sgn ( y - p ) \, \frac{\Big( 1
    - 4 p - 2 \text{Ceil} (\Delta - 2 p)\Big)\Big( 2 \text{Ceil}(\Delta
    - 2 p) - 1 \Big)}{2(l - 1)}\,.
\end{equation}
Note that all the results presented so far are just a mere rewriting of
the equations \eqref{iteration}, \eqref{l(m)}, \eqref{iterationinv} and
\eqref{tildel(m)}. They do not involve any approximations. In this form
all contributions from the square root in \eqref{l(m)} and the non-linear
parts in \eqref{iteration} are captured by the functions
$\delta_1(\Delta, y)$ and $\delta_2(\Delta, y)$. The same holds for the
inverse iteration. Following this observation, we implement a similar
splitting for $\Delta^{(m)}$. More specifically, it splits into a leading
contribution $\Delta^{(m)}_\mathrm{L}$ (L for leading) and corrections
$\delta^{(m)}$
\begin{equation}\label{DeltaSplitting}
  \Delta^{(m)} = \Delta^{(m)}_\mathrm{L} + \delta^{(m)}\,.
\end{equation}
To make this splitting unique, we make two self-consistent definitions:
\begin{equation}
  \delta^{(1)} \equiv - \delta_1(\Delta^{(1)}, y^{(1)})
\end{equation}
and
\begin{equation}\label{leading}
  \Delta^{(m+1)}_\mathrm{L} \equiv \Delta^{(m)}_\mathrm{L} + 1 + 2 p -
    2 \text{Ceil} ( \Delta^{(m)}_\mathrm{L} )\,.
\end{equation}
An immediate consequence of substituting equation \eqref{DeltaSplitting} into
equation \eqref{l(m)new} for $m=1$ is that
\begin{equation}\label{boundDeltaL1}
  -1 < \Delta^{(1)}_\mathrm{L} < 1 \,.
\end{equation}

The major advantage of the leading contribution $\Delta^{(m)}_\mathrm{L}$
is that the only non-linearity it contains is the ceiling function. Thus,
we can derive an analytic expression for the time evolution
$\Delta^{(m)}_\mathrm{L}$ starting from the initial value
$\Delta^{(1)}_\mathrm{L}$. First, we go one step further than before
\begin{equation}
  \Delta^{(m+2)}_\mathrm{L} = \Delta^{(m)}_\mathrm{L} + 4 p +
    2 \text{Ceil} ( \Delta^{(m)}_\mathrm{L} ) - 2 \text{Ceil} (%
    \Delta^{(m)}_\mathrm{L} + 2 p )\,.
\end{equation}
Applying this equation multiple times implies
\begin{equation}\label{solDeltaL}
  \Delta^{(m+2 n)}_\mathrm{L} = \Delta^{(m)}_\mathrm{L} + 4 n p +
    2 \sum\limits_{l=0}^{2n-1} (-1)^l \, \text{Ceil}(%
    \Delta^{(m)}_\mathrm{L} + 2 l p )\,.
\end{equation}

Here then, is the crux of the analysis: Although we started with a highly
non-linear dynamical system with both a square root function and a ceiling
function, the reduced evolution equation as captured by equation
(\ref{solDeltaL}) only involves integral quantities. As such, it is easier to
analyze the periodic behavior of this system. In particular, we can already
anticipate that there is something special about the value of
$b \text{ mod } 4$ in the ratio $p = a / b$ because of the factors
of $4$ appearing in equation (\ref{solDeltaL}). The caveat to this is of course,
that the ``rounding errors'' that we make at each stage of our evolution
equation remain bounded and small. One of the purposes of Appendix \ref{app:DOMINO}
is to establish this property in the special limit where the initial value $x(1)$
is sufficiently large (as measured in units of $b$).

Qualitatively, we also understand from lemma \ref{lemma2} in Appendix
\ref{app:DOMINO} why orbits with $b \text{ mod } 4$ do not close (even if we do
not give a formal proof by analyzing all the rounding errors explicitly).
This lemma states that after $b/2$ extrema, the dominant contribution
$\Delta^{(m)}_\mathrm{L}$ increases by one:
\begin{equation}
  \Delta^{(m+b/2)}_\mathrm{L} = \Delta^{(m)}_\mathrm{L} + 1\,.
\end{equation}
Neglecting the subleading corrections, this results implies that the
length of contiguous blocks of $+/-$ grows over time and therewith
$|x(i_\text{min})|$/$|x(i_\text{max})|$. Figure \ref{fig:nonperiodicorbit}
demonstrates this effect for $p(1)=5/4$ and $x(1)=1/4$.
\begin{figure}[t!]
  \centering
  \includegraphics{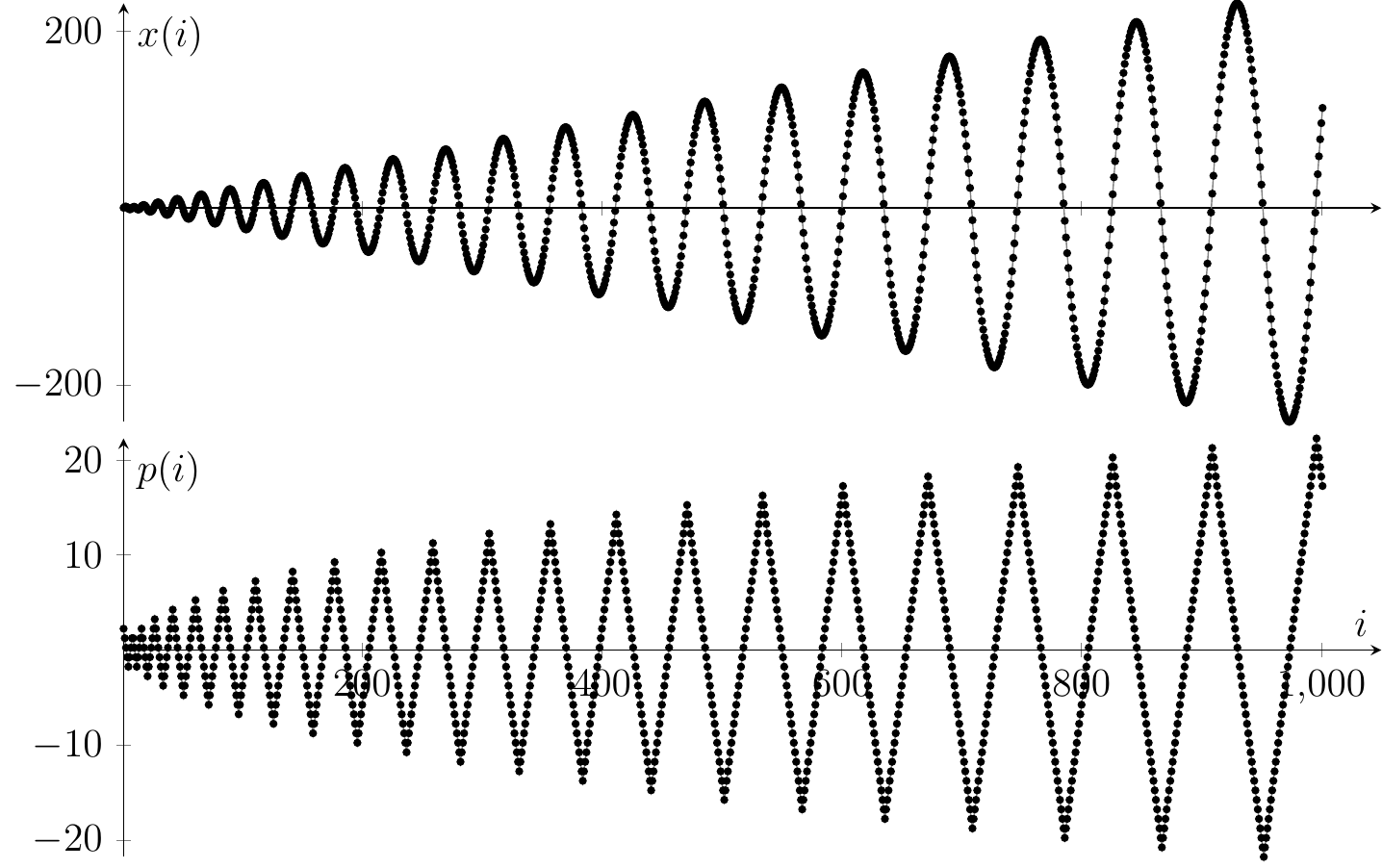}
  \caption{A generic feature of initial conditions with $p=a/b$ and $b\text{ mod }4=0$
is that the enveloping curves of $x(i)$ and $p(i)$ grow over time. Therefore all
examples we have studied numerically do not seem to close into a periodic orbit. Here,
this feature is depicted for the initial conditions  $p(1)=5/4$, $x(1)=1/4$ and the first
1000 $x(i)$/$p(i)$ of the time evolution.}
\label{fig:nonperiodicorbit}
\end{figure}

Another byproduct of our analysis is that when $x(1)$ is sufficiently large and
we have a periodic orbit, we have a ``dominant orbit,'' namely, the leading order
contribution completely captures the evolution of the dynamical system, and all
rounding errors remain small. For such dominant orbits, we can even
obtain an analytic formula for the value of $k$, the length of the orbit:
\begin{equation}
k = l^{(\ast)} k' + 2 k' p
\end{equation}
where $l^{(\ast)}$ is a particular choice of $l^{(m)}$, as defined in equation
(\ref{l(m)}), and $k'$ is the number of extrema in the orbit. The particular value
of $m$ which enters here is the one for which:
\begin{equation}
0 < \mathrm{sgn} (y^{(m)}) \Delta_{\mathrm{L}}^{(m)} < \frac{1}{k'}.
\end{equation}

\section{Terminal Punctures \label{sec:TERMINAL}}

Having understood some of the general properties of eternal punctures, we now
turn to the structure of the dynamical system in the case of terminal
punctures. In this case, it can happen that the full description of
the puncture will decompose into several independent pieces, because at each
termination point, we can restart at the next time step with a new choice of
initial condition.

With this in mind, we consider a puncture for which $x(0)=x(k)=0$. Given some
choice of initial conditions, we would like to track the subsequent evolution
under equation (\ref{evolve}):%
\begin{equation}
\left[
\begin{array}
[c]{c}%
p(i+1)\\
x(i+1)
\end{array}
\right]  =\left[
\begin{array}
[c]{cc}%
1 & 0\\
1 & 1
\end{array}
\right]  \left[
\begin{array}
[c]{c}%
p(i)\\
x(i)
\end{array}
\right]  -\left[
\begin{array}
[c]{r}%
\text{sgn }x(i)\\
\text{sgn }x(i)
\end{array}
\right]  . \label{evolagain}%
\end{equation}
To begin, we have the necessary condition:%
\begin{equation}
x(1)=p(1)=p(0),
\end{equation}
so in some sense, everything is dictated by the choice of a single real
parameter, $p(0)$. Remarkably, the late time behavior of the dynamical system turns out to be
quite sensitive to this initial condtion.

Our next task is to understand which values of this initial condition can
produce a terminal puncture.\ Along these lines, we observe that upon solving
for the $p(i)$ and $x(i)$, we can iterate until for some $i=i_{\ast}^{(1)}$,
$x(i_{\ast}^{(1)})=0$. Note that by assumption that we have a non-trivial
terminal puncture, this occurs for $2\leq i_{1}^{\ast}\leq k$. Since this also
means $x(i)$ is non-zero for these intermediate values, the puncture equations
and the dynamical system lead to the same conditions on the $x(i)$ and $p(i)$.
Iteratively solving for the $x(i)$, we then obtain the necessary conditions
summarized by the matrix equation:%
\begin{equation}
\left[
\begin{array}
[c]{ccccc}%
2 & -1 &  &  & \\
-1 & 2 & -1 &  & \\
& -1 & \ddots & -1 & \\
&  & -1 & 2 & -1\\
&  &  & -1 & 2
\end{array}
\right]  \left[
\begin{array}
[c]{c}%
x(1)\\
x(2)\\
\vdots\\
x(k-2)\\
x(i_{\ast}^{(1)}-1)
\end{array}
\right]  =\left[
\begin{array}
[c]{c}%
\text{sgn }x(1)\\
\text{sgn }x(2)\\
\vdots\\
\text{sgn }x(i_{\ast}^{(1)}-2)\\
\text{sgn }x(i_{\ast}^{(1)}-1)
\end{array}
\right]  ,
\end{equation}
where the $(i_{1}^{\ast}-1)\times(i_{1}^{\ast}-1)$ matrix is the Cartan matrix
for an $A$-type root system, in the obvious notation. Inverting the Cartan
matrix, we see that all of the $x(i)$ are necessarily rational numbers, so we
must require our initial condition $p(0)$ to be a rational number, which we
write as:%
\begin{equation}
p(0) = p(1)=x(1)=\frac{a(1)}{b}\text{ \ \ for }a(1),b\text{ relatively prime.}%
\end{equation}

In the context of the dynamical system, nothing stops us from continuing to
evolve our values of the $x(i)$. As we have already remarked, a
\textquotedblleft short orbit\textquotedblright\ is one for which $x(i_{\ast
}^{(1)}+1)=x(1)$, whereas for a \textquotedblleft long orbit\textquotedblright%
\ we must instead cycle around once more before we repeat back to $x(1)$. In
the context of finding punctures, of course, there is no need to continue
evolving with the dynamical system once we reach a termination point. Indeed,
we are free to restart the dynamical system with some other choice of initial
conditions, in which case we obtain a terminal puncture with multiple
irreducible components.

We observe that we can pass from a terminal puncture back to an eternal
puncture by making a sufficiently small perturbation in the value of $x(1)$,
namely:%
\begin{equation}
x(1)\mapsto x(1)+\delta x.
\end{equation}
If the terminal puncture yields a periodic orbit, it follows that the system
with the perturbed initial condition will also yield a periodic orbit. Indeed,
this is the key feature of having a small continuous parameter in the eternal punctures.

But for the eternal punctures we already argued that to reach a periodic
orbit, a necessary and sufficient condition is that $p(1)=a(1)/b$ with $b\neq0$
mod $4$. So, we see that this condition also hold for terminal punctures:%
\begin{equation}
\text{Terminal Puncture:\ } p(1)=x(1)=\frac{a(1)}{b}\text{ \ \ with }b\text{
}\neq\text{mod }4\text{.}%
\end{equation}
We reach these special orbits by appropriate tuning of the parameter $\delta
x$.

Now, from the perspective of the puncture equations, we can equally well
characterize the evolution by the values of $x(i)$, or by simply stating the
sequence of signs that are obtained from a particular choice of initial
condition. As we change the value of the initial condition, however, this
sequence will bifurcate to two distinct sequences, and can then bifurcate
further after some further number of time steps. Of course, the delicate part
in the analysis is that a priori, we do not know which of the $2^{k}$ possible
sequences will actually yield a solution to the puncture equations.

Let us give a few examples to illustrate the general idea. For $p(0) = p(1) = k/2$ we find
that $\mathrm{sgn}(x(i))=+1$ for $1\leq i\leq k$. Moreover, the parent matrix
$\widetilde{Q}=0$, so we generate a representation of $\mathfrak{su}(2)$. Indeed, the
eigenvalues of the generator of the Cartan subalgebra in such a representation
are always half integers or integers, so this is not
particularly surprising. Note also that if $p(1) =a(1) / b$ has $b$ different from $1$
or $2$, we necessarily do not have a representation of $\mathfrak{su}(2)$.

A non-trivial class of examples in which the signs alternate occurs when
$k=2n$ is even:%
\begin{equation}
l=\frac{n}{2n+1}\Rightarrow\left\{  \text{sgn }x(i)\right\}  _{i=1}%
^{k}=\underset{2n}{\underbrace{+,-,...,+,-}.}%
\end{equation}

To determine viable sequences of signs, we therefore use a standard tool in
dynamical systems: We consider without loss of generality the value of the
parameter $l$, and seek out bifurcation points where the sign of $x(i)$ is
about to split. These special bifurcation points tell us the locations of
finite length periodic orbits. Iterating in this way, we can determine both
the initial condition as well as the corresponding value of $k$ appearing in
the puncture equations.

Applying the evolution equation (\ref{evolagain}) on the interval $[0,\infty
)$, this line segment splits into two segments specified by the conditions
$x(2)<0$ and $x(2)>0$. The bifurcation point amounts to splitting the interval
up into two subintervals for possible values of the initial condition, namely
$0\leq p(0) <1/2$ and $1/2< p(0) <\infty$. By repeating this procedure one finds more
and more new line segments and the partition of the parameter interval for $p(0)$
gets finer. Figure~\ref{fig:lineseg} depicts this process for the first three
iterations.
\begin{figure}[t]
  \centering
  \includegraphics{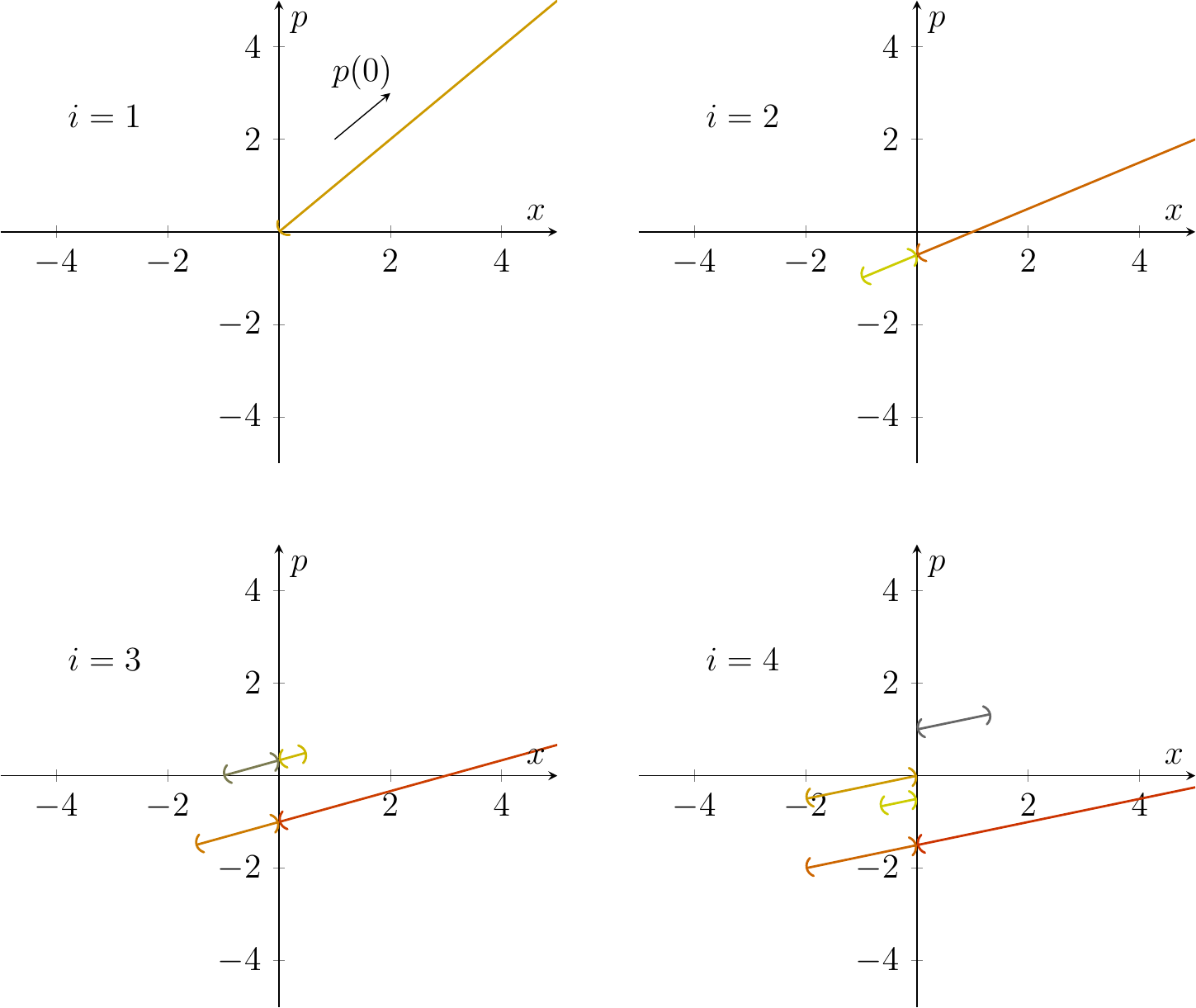}
  \caption{Three iterations of the initial line segment $[0 , \infty )$ which give rise to bifurcating behavior
  in the initial condition $l$.}
  \label{fig:lineseg}
\end{figure}
In the upper left corner, there is the initial line segment
$[0,\infty)$ which is completely in the right quadrant with $x>0$. In order to
iterate this segment, we apply the evolution equation (\ref{evolagain}) to
each of its points and so obtain the diagram for $i=2$ on the upper right.
Moreover, we split the line segment into two parts so that each one only has
points in one quadrant either $x>0$ or $x<0$. After iterating and splitting
them again, one arrives at the diagram for $i=3$ in the lower left corner. So
far the number of line segments has grown exponentially as a function of $k$.
This drastically changes starting with the fourth iteration, resulting in the
last diagram of figure~\ref{fig:lineseg}. Here only one of the four line
segments arising by iteration from the last step is located in two quadrants
and therefore needs splitting. Of course we can also identify the various
segments in this diagram with the following intervals:
\begin{equation}
\hspace*{-1em}%
\begin{tikzpicture} \matrix (m) [matrix of math nodes, column sep=4em] { i=1 & 0 & & & & & \infty\,. \\[1em] i=2 & 0 & & 1/2 & & & \infty\,. \\[1em] i=3 & 0 & 1/3 & 1/2 & 1 & & \infty\,. \\[1em] i=4 & 0 & 1/3 & 1/2 & 1 & 3/2 & \infty\,. \\ k & 1 & 3 & 2 & 3 & 4 & \\ }; \draw[lineC=500,(-,thick] (m-1-2.east) -- (m-1-7.west |- m-1-2) node[midway, above,black] {\tt +}; \draw[lineC=333,(-),thick] (m-2-2.east) -- (m-2-4.west |- m-2-2) node[midway, above,black] {\tt +-}; \draw[lineC=666,(-,thick] (m-2-4.east) -- (m-2-7.west |- m-2-4) node[midway, above,black] {\tt ++}; \draw[lineC=200,(-),thick] (m-3-2.east) -- (m-3-3.west |- m-3-2) node[midway, above,black] {\tt +--}; \draw[lineC=400,(-),thick] (m-3-3.east) -- (m-3-4.west |- m-3-3) node[midway, above,black] {\tt +-+}; \draw[lineC=600,(-),thick] (m-3-4.east) -- (m-3-5.west |- m-3-4) node[midway, above,black] {\tt ++-}; \draw[lineC=800,(-,thick] (m-3-5.east) -- (m-3-7.west |- m-3-5) node[midway, above,black] {\tt +++}; \draw[lineC=167,(-),thick] (m-4-2.east) -- (m-4-3.west |- m-4-2) node[midway, above,black] {\tt +--+}; \draw[lineC=333,(-),thick] (m-4-3.east) -- (m-4-4.west |- m-4-3) node[midway, above,black] {\tt +-+-}; \draw[lineC=500,(-),thick] (m-4-4.east) -- (m-4-5.west |- m-4-4) node[midway, above,black] {\tt ++--}; \draw[lineC=667,(-),thick] (m-4-5.east) -- (m-4-6.west |- m-4-5) node[midway, above,black] {\tt +++-}; \draw[lineC=833,(-,thick] (m-4-6.east) -- (m-4-7.west |- m-4-6) node[midway, above,black] {\tt ++++}; \end{tikzpicture}
\end{equation}
of the parameter $p(1)$ where we use the same color coding as in
figure~\ref{fig:lineseg}. Additionally, the signs $+$ and $-$ denote the
quadrants ($x(i)>0$ and $x(i)<0$) which these segments are in after the $i$-th
iterations. They are read from left to right. It is obvious that the first one
is always $+$, because the initial interval is completely in the quadrant
$x(1)>0$. Doing further iterations, result in a refinement of this diagram.
There are a few initial conditions $l\in\{1/3,1/2,1,3/2\}$ which we excluded
so far. They give rise to the four smallest periodic orbits of the dynamical
system. Their length $k$ is equivalent to the value of $i$ where the value of
$l$ appears as a interval boundary for the first time. For a depiction of the
branching, see figure \ref{fig:orbittree}.
\begin{figure}
  \centering
  \includegraphics{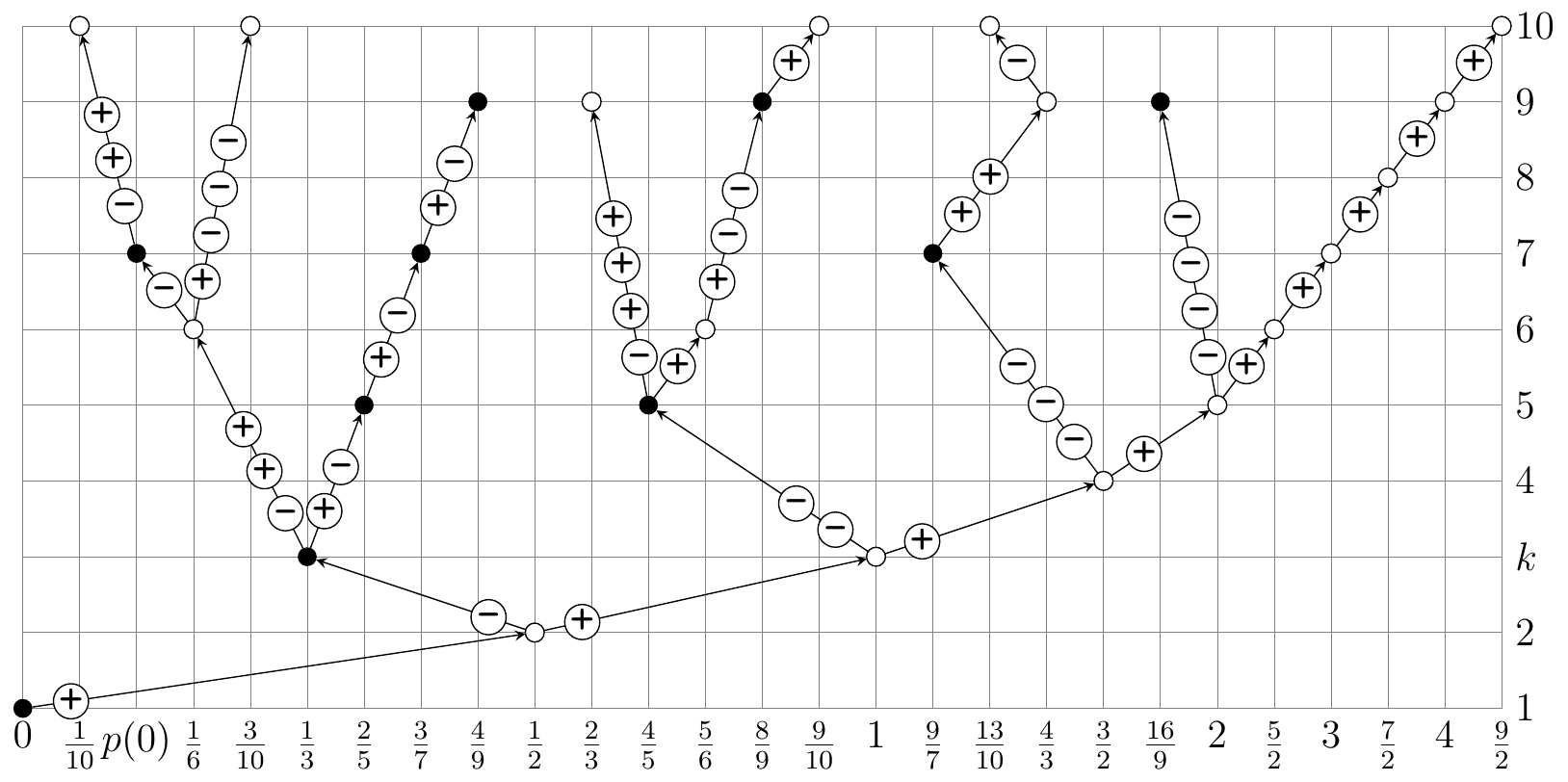}
  \caption{Periodic orbits which represent the 1/2 BPS punctures of the $A_k$ theory for $k\le 10$ if one link between the quiver nodes is turned off. Whether a node is filled or not distinguishes between short and long
orbits. Each link between two nodes carries the sequence of $\sgn x(i)$ which has to be appended to the shorter orbit in order to obtain the longer one.}\label{fig:orbittree}
\end{figure}

\subsection{Irreducible Representations}

Though the dynamical system is clearly quite sensitive to initial condition
data, it nevertheless exhibits some regular features, at least for certain
values of the orbit length $k$. To track this behavior, we first list out
$m_{k}$, the number of terminal punctures with length $k$ and $x(k)=0$, with
no vanishing of any $x(i)$ for $0<i<k$. The multiplicity of these
\textquotedblleft irreducible representations\textquotedblright\ for small
values of $k$ are as follows:%
\begin{equation}%
\begin{tabular}
[c]{cccccccccccccccccccc}%
$k$ & 2 & 3 & 4 & 5 & 6 & 7 & 8 & 9 & 10 & 11 & 12 & 13 & 14 & 15 & 16 & 17 &
18 & 19 & 20\\
$m_{k}$ & 2 & 4 & 2 & 6 & 6 & 8 & 2 & 12 & 10 & 12 & 6 & 14 & 14 & 22 & 2 &
18 & 18 & 20 & 10\thinspace.
\end{tabular}
\end{equation}
This sequence contains some regular subsequences as can already be seen from
the first contributions but becomes more obvious by looking at a larger
number of $k$ as depicted in figure~\ref{fig:nosol}.

Some of the multiplicities satisfy rather regular patterns. For example, we
have:
\begin{equation}
m_{k_{j}(n)}=4n-2\quad\text{for all}\quad j\in\mathbb{N}\quad\text{with}\quad
k_{j}(n)=2^{j}(2n-1)\,.
\end{equation}
All solutions belonging to these subsequences are long orbits. In total this
regular contribution fixes half of the sequence $m_{k}$, namely for $k$ even.
This also explains why it only contains long orbits. Short orbits require $k$
to be odd. For odd $k$, there seems to be no regular pattern, as can also be
seen from figure \ref{fig:nosol}.

\begin{figure}[t]
  \centering
  \includegraphics{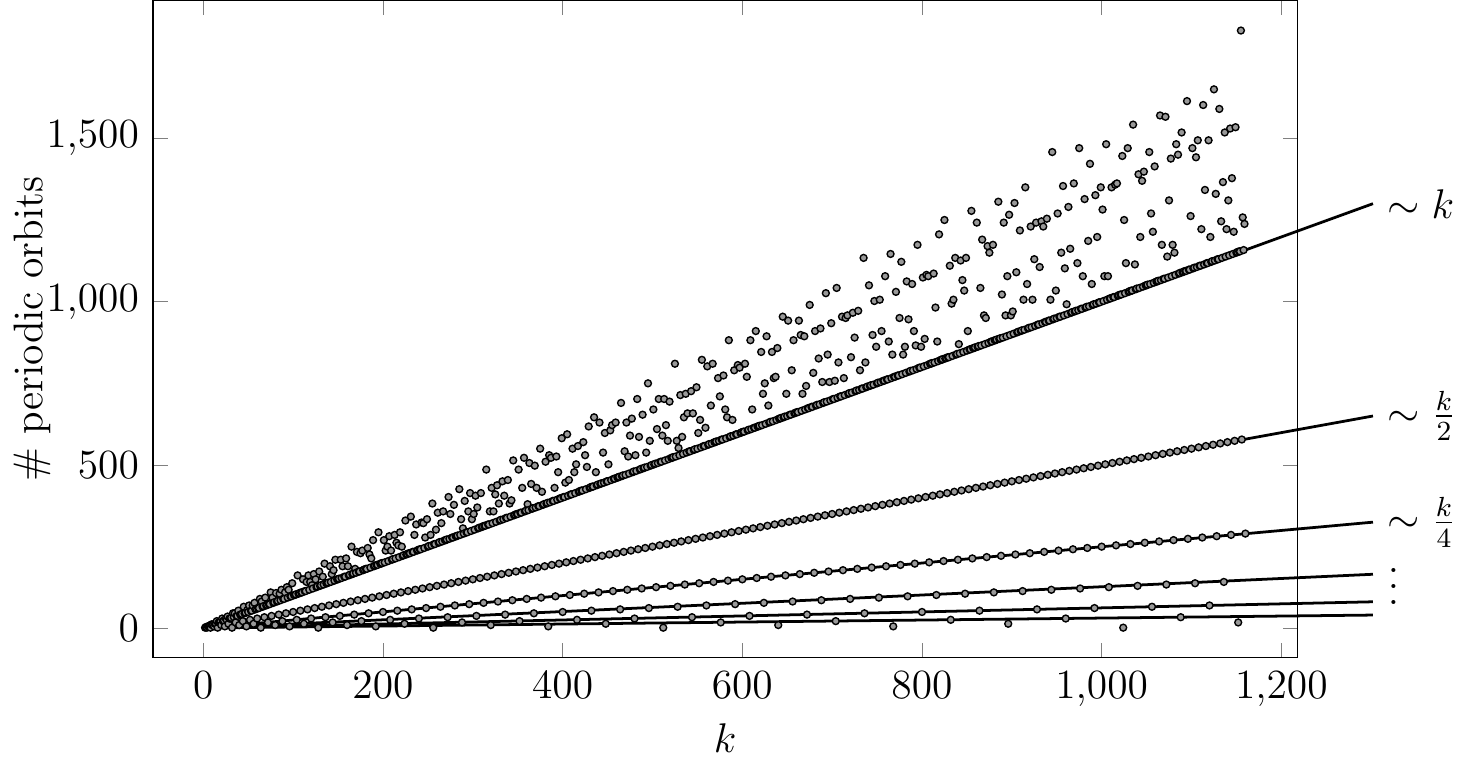}
  \caption{Number of periodic orbits with the length $k$. The regular
subsequences are marked by lines with slope $k/2^{j}$.}%
\label{fig:nosol}%
\end{figure}

\section{Higher Order Poles \label{sec:POLES}}

Having presented a general characterization of simple poles, we now turn to the structure of punctures
with higher order poles. As we have already remarked in section \ref{sec:DYNAMO},
the structure of the first order pole involves non-linear quadratic evolution in
the values of the $q$ and $\widetilde{q}$'s. At second order and above,
however, the structure becomes linear in these parameters, since the lower
order poles essentially serve as \textquotedblleft boundary
conditions\textquotedblright\ for evolution in the pole order. It thus follows
that as we evolve from one pole order to the next, there is sometimes a degree
of freedom associated with whether we continue to higher order, but for the
most part, all of this is fully fixed by the lower order terms. This holds for
general values of $k$ and $N$ in our puncture equations. Here, we only study the $N=1$
case explicitly for which we have full control over the first order. With this
in mind, we assume that we have been given a consistent solution $q_1(i)$, $\tilde q_1(i)$
and $p_1(i)$, and then deduce the structure of the higher order poles for this puncture.
Indeed, for ``generic'' choices of initial conditions at lower order, we can evolve to a
higher order solution. There are, however, possible obstructions to continuing this evolution indefinitely to higher order poles,
and part of our aim in this section will be to determine these obstructions. Modulo this caveat, however, we see a sharp
sense in which the dynamical system involves evolution in ``time'' (namely from one quiver node to its neighbor) and ``space'' (namely from one pole order to the next).

For the higher order contributions $\Sigma_m(i)$, $Q_m(i)$ and $\tilde Q_m(i)$, we use
the same parameterization as for the leading ones in \eqref{SIGMA1}-\eqref{Qtilde1}.
It is convenient to combine this data to a vector
\begin{equation}
\vec{v}_{m}(i)=%
\begin{bmatrix}
q_{m}(i) & \tilde{q}_{m}(i) & p_{m}(i)
\end{bmatrix}
\end{equation}
for every node of the $A$-type quiver. The constraint equations
\eqref{higherpoleconstr} gives rise to a system of affine maps
system of affine maps which connect the puncture data along the quiver.
We split this system into a linear part $M_{m}(i)$ plus the
offset contribution $\vec{n}_{m}(i)$. It gives rise to the iteration
prescription
\begin{equation}
\vec{v}_{m}(i+1)=M_{m}(i)\vec{v}_{m}(i)+\vec{n}_{m}(i)
\label{eqn:iterationvecvm}%
\end{equation}
for the puncture data at level $m$. Assuming that we can invert and solve the
first order pole equations, the explicit expressions for $M_{m}(i)$ and
$\vec{n}_{m}(i)$ follow after some algebra directly from \eqref{higherpoleconstr}
\begin{equation}
M_{m}(i)=%
\begin{cases}%
\begin{bmatrix}
-\alpha_{m}(i)+\beta_{m}(i) & 0 & \phantom{-}\delta_{m}(i)\\
0 & \beta_{m}(i) & 0\\
-\gamma_{m} & 0 & 1
\end{bmatrix}
& \text{for }x(i)>0\text{ and }x(i+1)>0\\%
\begin{bmatrix}
0 & \beta_{m}(i) & 0\\
\phantom{-}\alpha_{m}(i)-\beta_{m}(i) & 0 & -\delta_{m}(i)\\
-\gamma_{m}(i) & 0 & 1
\end{bmatrix}
& \text{for }x(i)>0\text{ and }x(i+1)<0\\%
\begin{bmatrix}
0 & \phantom{-}\alpha_{m}(i)-\beta_{m}(i) & \phantom{-}\delta_{m}(i)\\
\beta_{m}(i) & 0 & 0\\
0 & \gamma_{m}(i) & 1
\end{bmatrix}
& \text{for }x(i)<0\text{ and }x(i+1)>0\\%
\begin{bmatrix}
\beta_{m}(i) & 0 & 0\\
0 & -\alpha_{m}(i)+\beta_{m}(i) & -\delta_{m}(i)\\
0 & \gamma_{m}(i) & 1
\end{bmatrix}
& \text{for }x(i)<0\text{ and }x(i+1)<0
\end{cases}
\end{equation}
and
\begin{equation}
\vec{n}_{m}(i)=%
\begin{cases}%
\begin{bmatrix}
\phantom{-}\gamma_{m}^{\prime}(i)+\alpha_{m}^{\prime}(i)\delta_{m}(i) &
\beta_{m}^{\prime}(i) & \phantom{-}\alpha_{m}^{\prime}(i)
\end{bmatrix}
^{T} & \text{for }x(i)>0\text{ and }x(i+1)>0\\%
\begin{bmatrix}
\beta_{m}^{\prime}(i) & -\gamma_{m}^{\prime}(i)-\alpha_{m}^{\prime}%
(i)\delta_{m}(i) & \phantom{-}\alpha_{m}^{\prime}(i)
\end{bmatrix}
^{T} & \text{for }x(i)>0\text{ and }x(i+1)<0\\%
\begin{bmatrix}
\phantom{-}\gamma_{m}^{\prime}(i)-\tilde{\alpha}_{m}^{\prime}(i)\delta
_{m}(i) & \beta_{m}^{\prime}(i) & -\tilde{\alpha}_{m}^{\prime}(i)
\end{bmatrix}
^{T} & \text{for }x(i)>0\text{ and }x(i+1)<0\\%
\begin{bmatrix}
\beta_{m}^{\prime}(i) & -\gamma_{m}^{\prime}(i)+\tilde{\alpha}_{m}^{\prime
}(i)\delta_{m}(i) & -\tilde{\alpha}_{m}^{\prime}(i)
\end{bmatrix}
^{T} & \text{for }x(i)>0\text{ and }x(i+1)<0
\end{cases}
\end{equation}
In order to write them in a compact form, the abbreviations
\[
\alpha_{m}(i)=\gamma_{m}(i)\delta_{m}(i)\,,\quad\beta_{m}(i)=\sqrt{\left\vert
\frac{x(i)}{x(i+1)}\right\vert }\,,\quad\gamma_{m}(i)=\frac{m-2}{\sqrt
{|x(i)|}}\,,\quad\delta_{m}(i)=\frac{m-1}{2\sqrt{|x(i+1)|}}%
\]
and
\[
\alpha_{m}^{\prime}(i)=\frac{a_{m}(i)}{\sqrt{x(i)}}\,,\quad\tilde{\alpha}%
_{m}^{\prime}(i)=\frac{\tilde{a}_{m}(i)}{\sqrt{x(i)}}\,,\quad\beta_{m}%
^{\prime}(i)=-\frac{b_{m}(i)}{\sqrt{|x(i+1)|}}\,,\quad\gamma_{m}^{\prime
}(i)=\frac{c_{m}(i+1)}{2\sqrt{|x(i+1)|}}%
\]
are convenient. For the next to leading order contribution $m$=$3$, the offset
part $\vec{n}_{m}(i)$=$0$ vanishes and we only have to take into account
$M_{m}(i)$. Once an initial value $\vec{v}_{n}(1)$ is chosen,
\eqref{eqn:iterationvecvm} allows to calculate all remaining $\vec{v}_{n}(i)$
iteratively. Hence, we only have to specify an appropriate initial value. To
this end, we employ the monodromy
\begin{equation}
\mathbf{M}_{m}=M_{m}(k)M_{m}(k-1)\dots M_{m}(1)
\end{equation}
which has to fulfill
\begin{equation}
\vec{v}_{3}(k+1)=\mathbf{M}_{3}\vec{v}_{3}(1)=\vec{v}_{3}(1)
\end{equation}
to be compatible with the periodic boundary conditions of the quiver. Note
that $\mathbf{M}_{m}$ is an element of $SL(3,\mathbb{C})$. This property is
based on the determinant of $M_{m}(i)$
\begin{equation}
\det M_{m}(i)=\beta_{m}(i)^{2}=\frac{|x(i)|}{|x(i+1)|}%
\end{equation}
which gives immediately rise to
\begin{equation}
\det\mathbf{M}_{m}=\prod\limits_{i=1}^{k}\frac{|x(i)|}{|x(i+1)|}=\frac
{|x(1)|}{|x(k+1)|}=1\,.
\end{equation}
However, not every eigenvector of $\mathbf{M}_{3}$ with eigenvalue $1$ is a
valid initial condition. There are is the additional constraint
\begin{equation}
\tilde{q}_{m-1}(i)=\frac{\tilde{a}_{m}(i)}{m}\quad(x(1)>0)\quad\quad
\text{or}\quad q_{m-1}(i)=\frac{a_{m}(i)}{m}\quad(x(1)<0)
\label{eqn:constrinitcond}%
\end{equation}
depending whether $x(1)$ is positive or negative. It also follows from
\eqref{higherpoleconstr} and is preserved under the iteration prescription
\eqref{eqn:iterationvecvm}. Thus, the non-trivial part of the monodromy is
reduced from $SL(3,\mathbb{C})$ to $SL(2,\mathbb{C})$ and a fixed point which
respects \eqref{eqn:constrinitcond} can only arise if Tr$\mathbf{M}_{3}=3$ holds.
This is because one eigenvalue of $\mathbf{M}_{3}$ has to be $1$ due to
\eqref{eqn:constrinitcond}. Furthermore, the two remaining eigenvalue,
$\lambda_{1}$ and $\lambda_{2}$ are either complex or real. If they are
complex, $\lambda_{2}=\lambda_{1}^{\dagger}$ has to hold because
$\mathbf{M}_{3}$ is real. Additionally, we find
\begin{equation}
\lambda_{1}\lambda_{2}=1\quad\text{and therefore}\quad\lambda_{1}=\frac
{1}{\lambda_{2}}%
\end{equation}
as it is required for $\det\mathbf{M}_{3}=1$. We cannot directly use the
complex eigenvalues, because their eigenvalues are necessary complex too but
$\vec{v}_{3}(1)$ has to be real. Still, we are able to assign $\vec{v}_{3}(1)$
to the sum of the two eigenvectors. Like their eigenvalues, they are complex
conjugated to each other and so their sum is real. In order to obtain a fix
point this ansatz requires $\lambda_{1}+\lambda_{2}=2$ which implies
$\tr\mathbf{M}_{3}=3$. In this case there is exactly one fixed point. It is
unique up to a rescaling. For real eigenvalues at least $\lambda_{1}$ or
$\lambda_{2}$ has to be one. Otherwise there is no fixed point. But in this
case all eigenvalues are one and $\mathbf{M}_{3}$ is the identity matrix
$1_{3}$. As a consequence $\vec{v}_{3}(1)$ is not restricted at all. Rescaling
of $\vec{v}_{3}(1)$ has a very natural interpretation in terms of constraints
\eqref{higherpoleconstr}. They are invariant under the transformation
\begin{equation}
Q_{m}\rightarrow\lambda^{m-1}Q_{m}\,,\quad\tilde{Q}_{m}\rightarrow
\lambda^{m-1}Q_{m}\quad\text{and}\quad\Sigma_{m}\rightarrow\lambda^{m-1}%
\Sigma_{m}%
\end{equation}
with some $\lambda\in\mathbb{R}_{+}$ which is equivalent to $\vec{v}%
_{3}(1)\rightarrow\lambda\vec{v}_{3}(1)$. So without loss of generality, we
fix $\lVert\vec{v}_{3}(1)\rVert=\sqrt{ q_3^2(1) + \tilde q_3^2(1) + p_3^2(1) }=1$.

If we go beyond $m$=3, there is also an offset contribution
\begin{equation}
\vec{\mathbf{n}}_{m}=\vec{n}_{m}(k)+M_{m}(k)\vec{n}_{m}(k-1)+\dots
+M_{m}(k)\dots M_{m}(2)\vec{n}_{m}(1)
\end{equation}
to the fixed point equation
\begin{equation}
\mathbf{M}_{m}\vec{v}_{m}(1)+\vec{\mathbf{n}}_{m}=v_{m}(1)\,.
\label{eqn:fixedpointwithn}%
\end{equation}
By taking into account the constraints on the initial conditions
\eqref{eqn:constrinitcond} this equation is again reduces to a two dimensional
subspace. Let us assume that $\vec{\mathbf{n}}_{m}$ does not vanish. Otherwise
the discussion for the $m=3$ case applies without any modification. We already
have noticed that for $\tr\mathbf{M}_{m}=3$, the matrix $\mathbf{M}_{m}-1_{3}$
does not have full rank. Most severe is this situation if $\mathbf{M}%
_{m}=1_{3}$. In this case \eqref{eqn:fixedpointwithn} does not admit a
solution at all. Alternatively, $\vec{\mathbf{n}}_{m}$ could be in the one
dimensional image of $\mathbf{M}_{m}-1_{3}$. Then, there is a one dimensional
family of fixed points. It is well-known from linear algebra that this family
is the superposition of an inhomogeneous part and all homogeneous
contributions. Latter also give rise to the dimension of the solution space.
They live in the kernel of $\mathbf{M}_{m}-1_{3}$ which is one dimensional for
$\tr\mathbf{M}_{m}=3$ and $\mathbf{M}_{m}\neq1_{3}$ after removing the
direction constrained by \eqref{eqn:constrinitcond}. Finally if $\tr\mathbf{M}%
_{m}\neq3$, $\mathbf{M}_{m}-1_{3}$ is invertible and gives rise to a unique
solution. All these different cases can be summarized in the decision
tree\tikzset{
treenode/.style = {shape=rectangle, rounded corners,
draw, align=center},
env/.style      = {treenode, font=\ttfamily\normalsize},
dummy/.style    = {circle,draw}
}
\begin{equation}
\begin{tikzpicture}[grow=right,level distance=9em, level 1/.style={sibling distance=6em}, level 2/.style={sibling distance=3em}, edge from parent/.style={draw, -latex}, every node/.style={font=\footnotesize},sloped] \node [env] {$\vec{\mathbf{n}}_m$=0} child { node [env] {$\tr \mathbf{M}_m=3$} child { node[dummy] {0} edge from parent node[below] {no} } child { node[env] {$\mathbf{M}_m=1_3$} child { node[env] {$\vec{\mathbf{n}}_m\in\mathrm{Im}(\mathbf{M}_m-1_3)$} child{ node [dummy] {x} edge from parent node[below] {no} } child{ node [dummy] {1} edge from parent node[above] {yes} } edge from parent node[below] {no} } child { node[dummy] {x} edge from parent node[above] {yes} } edge from parent node[above] {yes} } edge from parent node [below] {no} } child { node [env] {$\tr \mathbf{M}_m=3$} child { node [dummy] {x} edge from parent node [below] {no} } child { node [env] {$\mathbf{M}_m=1_3$} child { node[dummy] {1} edge from parent node[below] {no} } child { node[dummy] {2} edge from parent node[above] {yes} } edge from parent node [above] {yes} } edge from parent node [above] {yes} }; \end{tikzpicture}
\end{equation}
Here the number in the circle denotes the dimension of the solution space,
while $x$ represents no solution at all.

In the generic case, $\tr\mathbf{M}_3=3$, $\mathbf{M}_3\ne1_3$ and tr$\mathbf{M}_m\ne 3$,
for all other $m>3$. Then, the solution of the first order problem fixes all higher
orders completely. Note that it is always possible to set all $Q_m(i)$, $\tilde Q_m(i)$
and $\Sigma_m(i)$ to zero for all $m$ larger or equal to $n$. We call such solutions
$n$-trivial. For the generic case there exists $3$-trivial solution, a $4$-trivial
solution and it goes on towards an $\infty$-trivial solution. In the special case, where
one hits the node x for a fixed $m$, there are only $3$-, $\dots$ $m$-trivial solutions.

Finally, there are some special points  where additional degrees of
freedom occur. Identifying them requires to evaluate the trace of the
monodromy $\mathbf{M}_{m}$ for a family of periodic orbits with $x(k)\neq0$.
It is instructive to study a simple example to present the relevant steps.
Take the initial conditions $4/5\leq x(1)=x\leq1$ and $p(1)=3/2$. They give
rise to a periodic orbits of length $k=6$ and
\begin{equation}%
\begin{tabular}
[c]{l|cccccc}%
$i$ & 1 & 2 & 3 & 4 & 5 & 6\\\hline
$x(i)$ & $x$ & $x+1/2$ & $x$ & $x-3/2$ & $x-2$ & $x-3/2$\\
$p(i)$ & $3/2$ & $1/2$ & $-1/2$ & $-3/2$ & $-1/2$ & $1/2$\thinspace.
\end{tabular}
\end{equation}
Now, we calculate
\begin{equation}
\tr\mathbf{M}_{3}-3=\sum_{n=0}^{12}c_{n}(x)m^{n}%
\end{equation}
which is a polynomial in $m$ with coefficients depending on $x$. There is no
simply analytic expression for these coefficients. Still we can calculate them
numerically without much effort. The same is true for the zeros $m_{i}$ of the
polynomial. For every value of $x$ in the given interval they include $3$.
Hence we can always construct a non-trivial next to leading order. Moreover,
there is a distinguished value for $x$ in the interval where $4$ this another
integer zero. A numerical analysis shows that this value is $x=0.88055824$. It
gives rise to a family of $m=4$ solutions with one free parameter.

\section{Comments on the Higher Rank Case \label{sec:HIGHER}}

Much of our focus up to this point has been on the relatively
\textquotedblleft simple\textquotedblright\ case of $N=1$, but arbitrary $k$.
From the perspective of 6D SCFTs, these punctures are most directly associated with
the theory of a single M5-brane probing the transverse geometry $\mathbb{R}%
_{\bot}\times\mathbb{C}^{2}/\mathbb{Z}_{k}$. Of course, given our full
characterization of the $N=1$ and general $k$ case, we can now produce
additional solutions to the puncture equations for $N$ M5-branes probing a
general ADE\ singularity, as well as new $1/4$ BPS\ punctures
of class $\mathcal{S}$ theories. For the class $\mathcal{S}_{\Gamma}$
theories, we follow a similar procedure to that given in reference \cite{Heckman:2016xdl},
namely we pick independent one-dimensional linear subspaces
associated with each quiver node, and draw independent paths which connect to
each such subspace. In reference \cite{Heckman:2016xdl}, this was used to produce
$\mathfrak{su}(2)^{\mathcal{P}}$ representations, where $\mathcal{P}$
denotes the number of independent paths.
In that context, the sign of all of the $x(i)$ of a given path are
fixed to all be the same, namely we obtain a directed path. From the present
perspective, we can now alter the relative signs of these segments to obtain a
far broader class of solutions. In particular, we are guaranteed that these
are new solutions, since as we have already remarked, each such path cannot
correspond to a representation of $\mathfrak{su}(2)$. See figure~\ref{fig:Dtype}
\begin{figure}
  \centering
  \includegraphics{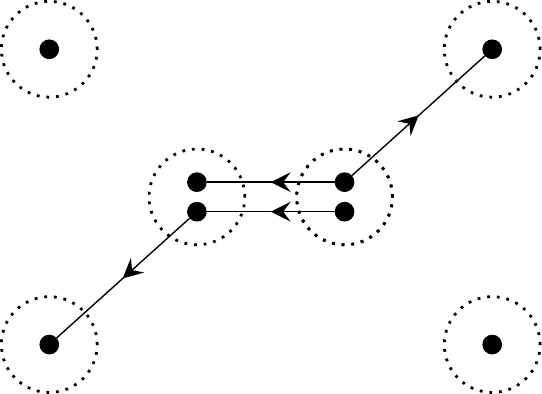}
  \caption{Demonstration of the undirected path approach for a simple D-type quiver.}
  \label{fig:Dtype}
\end{figure}
for a depiction of one such undirected path for a D-type affine quiver.

It is also natural to consider particular small values of $k$, such as $k = 2$, but 
with $N$ arbitrary. Solving the puncture equations in this case seems to be a problem 
of similar difficulty to simply specifying all $1/4$ BPS punctures of class $\mathcal{S}$ 
theories. With this in mind, it seems more fruitful to see what we can say about the 
general structure of solutions as a function of $N$ and $k$.

Indeed, more ambitiously, one might also hope to recast the general rank $N$ case in
terms of a perhaps more involved dynamical system now involving matrices. Our
plan in the remainder of this section will be to show how this can at least be
formally stated, though we defer the corresponding analysis of consistent
initial conditions to future work.

To this end, we first repackage the puncture equations for
the class $\mathcal{S}_{k}$ theories in terms of a recursion relation.
Suppose, then, that we have specific values of $q(i)$, $\widetilde{q}(i)$ and
$p(i)$, which respectively denote position and momenta in the system. In the
generic case, the doublet:%
\begin{equation}
\left[
\begin{array}
[c]{c}%
q(i)\\
\widetilde{q}^{\dag}(i)
\end{array}
\right]
\end{equation}
is a rank $N$ matrix. In fact, we can see that the generic linear combination
of the form:%
\begin{equation}
\widehat{q}_{\varepsilon}(i)\equiv q(i)+\varepsilon\widetilde{q}^{\dag}(i)
\end{equation}
is invertible. One way to establish this is to work in a basis where all
$p(i)$ are diagonal. In this basis, we observe that the matrix entries for the
puncture equations include the constraints:%
\begin{align}
\left[  p_{A}(i)-p_{B}(i+1)\right]  q_{AB}(i)  &  =q_{AB}(i)\\
\left[  p_{A}(i)-p_{B}(i+1)\right]  \widetilde{q}_{AB}^{\dag}(i)  &
=-\widetilde{q}_{AB}^{\dag}(i),
\end{align}
so the matrix entries $q_{AB}(i)$ and $\widetilde{q}_{AB}^{\dag}(i)$ cannot
simultaneously be non-zero. For generic values of the parameter $\varepsilon$,
we then see that both $\widehat{q}_{\varepsilon}(i)$ and $\widehat{q}%
_{-\varepsilon}(i)$ are invertible. Assuming this is the case, we can now
solve for the next value of $p(i+1)$ in terms of the previous time step:%
\begin{equation}
p(i+1)=\widehat{q}_{\varepsilon}^{-1}(i)p(i)\widehat{q}_{\varepsilon
}(i)-\widehat{q}_{\varepsilon}^{-1}(i)\widehat{q}_{-\varepsilon}(i).
\end{equation}
By the same token, we can also evaluate the new value of the \textquotedblleft
norm\textquotedblright\ $q(i+1)q^{\dag}(i+1)-\widetilde{q}^{\dag
}(i+1)\widetilde{q}(i+1)$:%
\begin{equation}
q(i+1)q^{\dag}(i+1)-\widetilde{q}^{\dag}(i+1)\widetilde{q}(i+1)=p(i+1)+q^{\dag
}(i)q(i)-\widetilde{q}(i)\widetilde{q}^{\dag}(i). \label{Drecurse}%
\end{equation}
To cast this in the form of a more explicit recursion relation, it is helpful
to work in terms of the polar decomposition for the matrices $q$ and
$\widetilde{q}$. For a complex matrix $q(i)$, this amounts to introducing a
unitary matrix $W(i)$ (complex phases) and a positive definite matrix $R(i)$ (the norm)
so that:
\begin{equation}
q(i)=R(i)W(i)\text{, \ \ }\widetilde{q}(i)=\widetilde{W}(i)\widetilde{R}(i),
\end{equation}
where we have also introduced a similar relation for $\widetilde{q}(i)$. In
terms of these matrices, the recursion relations read as:%
\begin{align}
p(i+1)  &  =\widehat{q}_{\varepsilon}^{-1}(i)p(i)\widehat{q}_{\varepsilon
}(i)-\widehat{q}_{\varepsilon}^{-1}(i)\widehat{q}_{-\varepsilon}(i).\\
R^{2}(i+1)-\widetilde{R}^{2}(i+1)  &  =p(i+1)+W^{\dag}(i)R^{2}%
(i)W(i)-\widetilde{W}(i)\widetilde{R}^{2}(i)\widetilde{W}^{\dag}(i),
\end{align}
So, provided we know $R$, $W$, $\widetilde{R}$ and $\widetilde{W}$ at a given
time step, we can feed this in and solve for the next time step. In practice,
of course, this is still a rather involved procedure, but in principle it
shows that a dynamical system continues to persist for all $N$.

Note that the recursion relation requires an additional input whenever we
cannot solve the system, i.e., when $\widehat{q}_{\varepsilon}$ does not
possess an inverse, and this is the higher rank analog of having a terminal puncture
in the $N = 1$ case.

It is also instructive to consider the evolution of the \textquotedblleft
center of mass\textquotedblright\ degree of freedom for the system. Taking the
trace and dividing by $N$, we obtain:%
\begin{align}
\frac{1}{N}\text{Tr}\left[  R^{2}(i+1)-\widetilde{R}^{2}(i+1)\right]   &
=\frac{1}{N}\text{Tr}\left[  p(i+1)\right]  +\frac{1}{N}\text{Tr}\left[
R^{2}(i)-\widetilde{R}^{2}(i)\right] \\
\frac{1}{N}\text{Tr}\left[  p(i+1)\right]   &  =\frac{1}{N}\text{Tr}\left[
p(i)\right]  -\frac{1}{N}\text{Tr}\left[  \widehat{q}_{\varepsilon}%
^{-1}(i)\widehat{q}_{-\varepsilon}(i)\right]  . \label{momentumevolve}%
\end{align}
This is not quite a closed system, because the second term on the righthand
side of line (\ref{momentumevolve}) is a number between $-1$ and $+1$. Said
differently, we need to include the higher order matrix powers to fully fix
this parameter. Note, however, that at large $N$, this last term can
approximate an arbitrary real number in the interval $[-1,+1]$, so in this
sense the jumps in the momenta can become quite small and uniform. At small
$N$, however, the system is highly discretized, and the jumps in momenta are
clearly more pronounced.

\section{Conclusions \label{sec:CONC}}

One of the important pieces of defining data for compactifications of 6D SCFTs on
Riemann surfaces involves the choice of boundary data at marked points.
In this work we have classified the structure of punctures in
perhaps the simplest case of a single M5-brane probing a $\mathbb{C}%
^{2}/\mathbb{Z}_{k}$ singularity. Starting from these solutions, we also
obtain a broad class of punctures for multiple M5-branes probing general ADE singularities.
The essential idea in our approach is to recast the structure of punctures in
terms of a dynamical system with position $x(i)$ and momentum $p(i)$ in which
evolution in time corresponds to moving from one node of an affine A-type
quiver to its neighbor. Classification of regular punctures (namely simple
poles) thus reduces to determining which initial conditions produce periodic
orbits of this dynamical system. Higher order poles for punctures follow
recursively from the lower order pole solutions. We have shown that for
periodic orbits, the momenta always take values in the rational numbers and
furthermore, satisfy the condition $p(i)=a(i)/b$ with $a(i),b$ relatively prime
integers and $b$ not divisible by $4$. Moreover, the resulting dynamical system
exhibits a remarkable sensitivity to initial conditions.
In the remainder of this section we discuss some future avenues for investigation.

One direction which would clearly be interesting to understand better would be
the structure of regular punctures for a single M5-brane probing a D- or
E-type singularity, and in particular the associated
dynamical system. At least for D-type theories, there is a similar notion of
\textquotedblleft long time evolution.\textquotedblright\ From this perspective,
the fact that the E-type theories have a small number of nodes suggests that
it should be possible to fully classify this case.

From a conceptual point of view, it is quite tempting to interpret the
time evolution of our dynamical system
directly in the geometry of the target space. Indeed, we can identify the
Taub-NUT circle of the resolution of
$\mathbb{C}^{2}/\mathbb{Z}_{k}$ with this time coordinate.
Along these lines, we have also seen that at least for sufficiently long orbits (namely for appropriate choices of the initial conditions), the analytic behavior of the dynamical system greatly simplifies. Geometrically, this appears to correspond to a limit where the orbifold $\mathbb{C}^{2}/\mathbb{Z}_{k}$ collapses to an $\mathbb{R}^3$. The T-dual description is also illuminating
as it takes us to a large number of parallel NS5-branes. It would be interesting to understand the arithmetic properties of the dynamical
system from this perspective.

In the large $N$ limit of many M5-branes, these probe theories have
holographic duals of the form $AdS_{7}\times S^{4}/\Gamma$ (see e.g. \cite{Gaiotto:2014lca,
DelZotto:2014hpa, Apruzzi:2013yva}), and
compactification on a Riemann surface with punctures have been studied in part
in \cite{Bah:2017wxp, Gaiotto:2009gz, Apruzzi:2015wna}. As we have already remarked, we can take our
dynamical system and embed it into higher rank theories. It would be very
interesting to understand the holographic description of this dynamical system.

Finally, we have seen that the structure of the punctures depends in a very sensitive way on the initial
momentum $p(1)$ of the dynamical system.  Indeed, the sort of branching behavior exhibited in figure \ref{fig:orbittree}
is reminiscent of the ultrametric trees which appear in some measures of complexity for spin glasses, and certain
enumeration problems connected with string vacua (see e.g. \cite{Denef:2011ee}). Here, we have found a
very concrete and well controlled example of this phenomenon in compactifications of 6D SCFTs.
The surprising complexity thus obtained would be exciting to further quantify.

\newpage

\section*{Acknowledgements}

We thank F. Apruzzi, J. Marincel, N. Miller and T. Rudelius for helpful
discussions. JJH thanks the 2017 Summer workshop at the Simons Center for
Geometry and Physics as well as the Aspen Center for Physics Winter Conference
in 2017 on Superconformal Field Theories in $d\geq4$, NSF grant PHY-1066293,
for hospitality during part of this work. The work of FH and JJH is supported
by NSF CAREER grant PHY-1756996. The work of FH is also supported by NSF grant PHY-1620311.

\appendix

\section{Periodic Orbit Proofs} \label{app:DOMINO}

With notation as in subsection \ref{porbitconds},
in this Appendix we establish that for an eternal puncture with initial momentum $p(1) = a / b$ with
$a,b$ coprime, when $b \neq 0$ mod $4$, we always obtain a periodic orbit.

We first establish some important properties of
the leading order contribution to the dynamical system, as given
in equation \ref{solDeltaL}.
\begin{lemma}\label{lemma1}
  Let $p=a/b$ where $a>0$ and $b>0$ are coprime. If $b\,\text{mod} \, 4%
  \ne 0$, there exists a periodic orbit for the leading contribution
  $\Delta^{(m)}_\mathrm{L}$ which is independent of the initial value
  $\Delta^{(1)}_\mathrm{L}$. The number of extrema in this orbit is given
  by
  \begin{equation}\label{kprime}
    k' = \begin{cases} b & b \text{ is even} \\
      2 b & b \text{ is odd}
    \end{cases}
  \end{equation}
  for which we have
  \begin{equation}\label{eqn:periodicDeltaL}
    \Delta^{(m+k')}_\mathrm{L} = \Delta^{(m)}_\mathrm{L}\,.
  \end{equation}
\end{lemma}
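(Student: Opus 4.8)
The plan is to exploit the fact that, although the leading-order update $\Delta^{(m+1)}_{\mathrm L}=\Delta^{(m)}_{\mathrm L}+1+2p-2\,\mathrm{Ceil}(\Delta^{(m)}_{\mathrm L})$ carries a ceiling, its composition with itself becomes, up to a bounded integer correction, a rigid rotation, after which the periodicity statement reduces to a purely arithmetic fact about rotations. Using the $\mathbb Z_2$ symmetry I may take $0<p\le 1/2$; for $p=1/2$ one checks directly that two steps return to the start, so assume $0<2p<1$. Writing $f(\Delta)=\Delta+1+2p-2\lceil\Delta\rceil$ and composing, $f^{2}(\Delta)=\Delta+4p+2\lceil\Delta\rceil-2\lceil\Delta+2p\rceil$ (the two-step relation already recorded in the text), which since $0<2p<1$ collapses to
\[
 g(\Delta):=f^{2}(\Delta)=\Delta+4p-2\,\mathbf{1}\!\left[\lceil\Delta\rceil-\Delta<2p\right].
\]
I would then introduce $v(\Delta):=\lceil\Delta\rceil-\Delta=(-\Delta)\bmod 1\in[0,1)$ and verify that $v(g(\Delta))=(v(\Delta)-4p)\bmod 1$, so that under $g$ the variable $v$ executes the rigid rotation of $\mathbb R/\mathbb Z$ by $-4a/b$, all of whose orbits have period $d:=b/\gcd(4,b)$.

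The key computation is to telescope the recursion over exactly $d$ steps of $g$:
\[
 g^{d}(\Delta)-\Delta=4dp-2\,\#\{\,0\le j<d:\ v(g^{j}\Delta)\in[0,2p)\,\}.
\]
Since $j=0,\dots,d-1$ traces out the whole rotation orbit, $\{v(g^{j}\Delta)\}_{j=0}^{d-1}=\{(v_{0}+r/d)\bmod 1:0\le r<d\}$ with $v_{0}=v(\Delta)$, so all I need is the elementary count: for any real $w$ and any integer $1\le p'\le d$, exactly $p'$ of the points $(w+r/d)\bmod 1$, $r=0,\dots,d-1$, lie in the half-open arc $[0,p'/d)$ (reduce $w$ modulo $1/d$; the statement is then immediate). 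I apply this with $p'=2pd=2a/\gcd(4,b)$, and this is exactly where $b\not\equiv 0\bmod 4$ enters: then $\gcd(4,b)\in\{1,2\}$ divides $2a$, so $p'$ is an integer with $1\le p'<d$; when $4\mid b$ one instead has $2pd=a/2\notin\mathbb Z$, the count genuinely depends on $v_{0}$, and one recovers the $\pm1$ drift behind Lemma \ref{lemma2} rather than a return. Hence the count equals $2pd$ for every $\Delta$, giving $g^{d}(\Delta)-\Delta=4dp-4pd=0$, i.e. $f^{2d}=\mathrm{id}$ identically on $\mathbb R$.

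Finally, $2d$ equals $2b$ for $b$ odd and $b$ for $b\equiv 2\bmod 4$, which is precisely the claimed $k'$, so $\Delta^{(m+k')}_{\mathrm L}=\Delta^{(m)}_{\mathrm L}$ for every $m$ and independently of $\Delta^{(1)}_{\mathrm L}$. To conclude that the orbit contains exactly $k'$ extrema I would note that $f$ reduces modulo $2$ to rotation of $\mathbb R/2\mathbb Z$ by $1+2p$, whose order is $2b$ ($b$ odd) resp.\ $b/2$ ($b\equiv 2\bmod 4$), while $\sgn y^{(m)}$ alternates, forcing the period of the sequence of extrema to be even; combining these pins the count to $k'$ in both cases. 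The delicate part, and essentially the whole content, is the reduction in the first step: getting the ceiling arithmetic exactly right so that $g$ splits as a rotation plus a $\{0,2\}$-valued correction governed precisely by the arc $[0,2p)$ — once that is in place, the rest is a one-line rotation count.
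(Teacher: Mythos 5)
Your proof is correct, and it takes a genuinely different route from the one in the paper. The paper works directly with the closed-form alternating sum in \eqref{solDeltaL}: it must show that $k'p+\sum_{l=0}^{k'-1}(-1)^l\,\text{Ceil}(\Delta^{(m)}_{\mathrm L}+2lp)$ vanishes, which it does by rewriting the alternating ceiling sum as a sum of $\sgn_-\bigl(2p-(l\bmod 4p)\bigr)$ terms and invoking a point (reflection) symmetry about $l_*$ to cancel the surviving terms pairwise. You instead conjugate the two-step map $g=f^2$ to the rigid rotation $v\mapsto (v-4p)\bmod 1$ of the circle via $v(\Delta)=\lceil\Delta\rceil-\Delta$, isolate the correction as the indicator of the arc $[0,2p)$, and reduce everything to the exact lattice count: a full period of a rational rotation of order $d=b/\gcd(4,b)$ places exactly $p'$ points in an arc $[0,p'/d)$ whenever $p'=2pd$ is an integer. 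This buys two things the paper's argument does not make explicit: the divisibility condition $b\not\equiv 0\bmod 4$ appears transparently as the integrality of $2pd=2a/\gcd(4,b)$, and the same computation immediately yields the unit drift of Lemma \ref{lemma2} when $4\mid b$ (where $2pd=a/2\notin\mathbb Z$ and the count depends on the phase). Your identity $g(\Delta)=\Delta+4p-2\,\mathbf{1}\bigl[\lceil\Delta\rceil-\Delta<2p\bigr]$ checks out for $0<2p<1$ (including the boundary convention, since $\lceil\Delta+2p\rceil=\lceil\Delta\rceil$ when $\lceil\Delta\rceil-\Delta=2p$ exactly, matching your strict inequality and half-open arc), and the telescoping over $d$ steps of $g$ together with the fact that the $v$-orbit is the full coset $v_0+\tfrac1d\mathbb Z/\mathbb Z$ gives $f^{2d}=\mathrm{id}$ on all of $\mathbb R$, hence independence of $\Delta^{(1)}_{\mathrm L}$. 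One small caveat: your closing remarks about pinning the \emph{exact} number of extrema to $k'$ via reduction mod $2$ are sketchier than the rest (and, as your own $p=1/2$ example shows, the minimal period of $\Delta_{\mathrm L}$ alone can be a proper divisor of $k'$ for special initial data); but the paper likewise only establishes that $k'$ \emph{is} a period, so this is not a gap relative to the statement being proved.
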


\begin{proof}[Proof of Lemma \ref{lemma1}]
In order to proof this lemma, we have to show that
\begin{equation}\label{involvedsum}
   2 k' p + \sum\limits_{l=0}^{k'-1} (-1)^l \,
    \text{Ceil}(\Delta^{(m)}_\mathrm{L} + 2 l p) = 0
\end{equation}
holds under the assumptions stated in the lemma. If this is the case,
\eqref{eqn:periodicDeltaL} follows directly from \eqref{solDeltaL}. Most of
the terms in the sum cancel due to the alternating sign. If we only keep the
non-vanishing contributions, the equation we have to check can be rewritten as
\begin{equation}\label{simplersum}
  2 k' p + \sum\limits_{l=0}^{k'-1} (-1)^l \, \text{Ceil}(\Delta^{(m)}_\mathrm{L}
    + 2 l p) =   \sum\limits_{l=0}^{2 k' p - 1} \sgn_- \left(
    2 p - ( l \, \mathrm{mod}\, 4 p ) \right) = 0
\end{equation}
with
\begin{equation}
  \sgn_- (x) = \begin{cases} \phantom{-} 1 & x>0 \\
    -1 & x \le 0\,. \end{cases}
\end{equation}
The argument of the $\sgn_-$ function has the point symmetry
\begin{equation}
  2 p - ( ( l_* + l ) \, \mathrm{mod} \, 4 p ) = - 2 p + ( ( l_* - l ) \,
    \mathrm{mod} \, 4 p ) \quad \forall \,
    l \in \mathbb{N}\,, \, 0 < l < 2 k' p - l_*
\end{equation}
around the point
\begin{equation}
  l_* = \begin{cases}
    a & \text{for } b \, \mathrm{mod} \, 2 = 0 \\
    2 a & \text{for } b \, \mathrm{mod} \, 2 \ne 0\,.
  \end{cases}
\end{equation}
Hence, there are even more terms in the sum \eqref{simplersum} which cancel, too.
Finally, one is left with
\begin{align}
  \sum\limits_{l=0}^{2 k' p - 1} \sgn_- \left( 2 p - ( l \, \mathrm{mod}\, 4 p )
    \right) &= \sum\limits_{l\in\{0,\,l_*\}} \sgn_- \left( 2 p - ( l \,
    \mathrm{mod}\, 4 p ) \right) \nonumber \\
  &= \sgn_- ( 2 p ) - \sgn_- (0) = 1 - 1 = 0\,.
\end{align}
\end{proof}

\begin{lemma}\label{lemma2}
  Let $p=a/b$ where $a>0$ and $b>0$ are coprime. If $b\,\text{mod}\, 4 = 0$,
  $\Delta^{(m)}_\mathrm{L}$ is shifted by $1$ after $b/2$ time steps
  \begin{equation}
    \Delta^{(m+b/2)}_\mathrm{L} = \Delta^{(m)}_\mathrm{L} + 1\,.
  \end{equation}
\end{lemma}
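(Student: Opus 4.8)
The plan is to mirror the proof of Lemma \ref{lemma1}, but to run the telescoping over \emph{half} a period and then read off the surplus that remains. The first and decisive observation is arithmetic: since $4\mid b$ and $\gcd(a,b)=1$, the numerator $a$ is necessarily odd, and $c:=b/4$ is a positive integer with $\gcd(c,a)=1$. This oddness of $a$ is the single new ingredient that will distinguish the present case from Lemma \ref{lemma1}, where $a$ could be even (or $b$ odd).

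Next I would apply the closed form \eqref{solDeltaL} for the leading evolution with $2n=b/2$; this is legitimate precisely because $n=b/4=c\in\mathbb{Z}_{>0}$. Using $4np = 4\cdot\tfrac{b}{4}\cdot\tfrac{a}{b}=a$, equation \eqref{solDeltaL} expresses $\Delta^{(m+b/2)}_\mathrm{L}-\Delta^{(m)}_\mathrm{L}$ as $a$ plus an alternating sum of ceiling functions evaluated along the orbit. I would then invoke exactly the telescoping/cancellation step of the proof of Lemma \ref{lemma1} — the passage from \eqref{involvedsum} to \eqref{simplersum}, which is purely formal and nowhere uses $b\not\equiv 0\bmod 4$ — with the full period there replaced by the half-period $b/2$. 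This collapses the alternating ceiling sum and, in particular, removes the dependence on $\Delta^{(m)}_\mathrm{L}$, reducing the claim to the identity
\begin{equation}
  \Delta^{(m+b/2)}_\mathrm{L}-\Delta^{(m)}_\mathrm{L}=\sum_{l=0}^{a-1}\sgn_-\bigl(2p-(l\bmod 4p)\bigr),
\end{equation}
the only change from Lemma \ref{lemma1} being that the $\sgn_-$ sum now runs over $\{0,\dots,a-1\}$.

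Finally I would evaluate this sum directly, rather than through the point symmetry about $l_*$. Writing $4p=a/c$ and $2p=a/(2c)$, one has $2p-(l\bmod 4p)=(a/c)\bigl(\tfrac12-\{lc/a\}\bigr)$, so the $l$-th term equals $+1$ when the fractional part $\{lc/a\}<\tfrac12$ and $-1$ otherwise. Since $\gcd(c,a)=1$, the map $l\mapsto lc\bmod a$ permutes $\{0,\dots,a-1\}$, so $\{lc/a\}$ takes each of the values $0,1/a,\dots,(a-1)/a$ exactly once; because $a$ is odd, none of these equals $\tfrac12$, exactly $(a+1)/2$ of them are $<\tfrac12$, and $(a-1)/2$ are $>\tfrac12$. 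Hence the sum equals $(a+1)/2-(a-1)/2=1$, giving $\Delta^{(m+b/2)}_\mathrm{L}=\Delta^{(m)}_\mathrm{L}+1$ as claimed. The same count run for $b\not\equiv 0\bmod 4$ instead returns $0$ (the boundary term $l=l_*$ of the full range supplies the compensating $-1$), which is the structural reason $4\mid b$ is special.

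The only genuinely delicate point I anticipate is the bookkeeping in the telescoping step: one must check that the reduction from the alternating ceiling sum to the $\sgn_-$ sum in the proof of Lemma \ref{lemma1} goes through verbatim when the period is replaced by the half-period $K=b/2$, so that $2Kp=a$ remains an integer (it does) even though $Kp=a/2$ is not, and that the half-open conventions for $\sgn_-$ at the boundary values $l\bmod 4p\in\{0,2p\}$ are tracked consistently. Everything past that reduction is the short residue count above.
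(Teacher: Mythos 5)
Your proposal is correct and follows the same skeleton as the paper's proof: apply \eqref{solDeltaL} over the half-period $b/2$ (legitimate because $b/4\in\mathbb{Z}$), collapse the alternating ceiling sum to $\sum_{l=0}^{a-1}\sgn_-\bigl(2p-(l\bmod 4p)\bigr)$ exactly as in the proof of Lemma \ref{lemma1}, and show that this sum equals $1$. The only divergence is in the final evaluation. The paper reuses the point symmetry of the argument of $\sgn_-$ about $l_*$, notes that now $l_*=a/2$ is not an integer, so the two boundary contributions that cancelled in Lemma \ref{lemma1} no longer pair up, and reads off $\sgn_-(2p)=1$. You instead evaluate the sum by an explicit residue count: with $c=b/4$ coprime to $a$, the map $l\mapsto lc\bmod a$ permutes $\{0,\dots,a-1\}$, and since $a$ is forced to be odd no fractional part $\{lc/a\}$ equals $1/2$, so $(a+1)/2$ terms contribute $+1$ and $(a-1)/2$ contribute $-1$. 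The count is correct and has the advantage of making the arithmetic role of $4\mid b$ (oddness of $a$) completely explicit, whereas the paper's symmetry argument is shorter but leaves that mechanism implicit. Both arguments rest on the same unproved intermediate step---the passage from the alternating ceiling sum to the $\sgn_-$ sum, i.e.\ the analogue of \eqref{involvedsum}$\to$\eqref{simplersum} over the half-period---which you at least flag as the delicate point; the paper asserts it without further comment, so on this point your proposal is no weaker than the original.
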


\begin{proof}[Proof of Lemma \ref{lemma2}]
The proof for this lemma, requires only a very slight modification of the
previous proof. Instead of \eqref{involvedsum}, we now want to show that
\begin{equation}
  b p + \sum\limits_{l=0}^{b/2-1} (-1)^l \, \text{Ceil}(\Delta^{(m)}_\mathrm{L}
    + 2 l p) = 1
\end{equation}
holds. Again, we rewrite this sum as
\begin{equation}
  b p + \sum\limits_{l=0}^{b/2-1} (-1)^l \, \text{Ceil}(\Delta^{(m)}_\mathrm{L} +
    2 l p) =  \sum\limits_{l=0}^{b p - 1} \sgn_- \left( 2 p -
    ( l \, \mathrm{mod}\, 4 p ) \right) = 1
\end{equation}
and take advantage of the point symmetry of the argument of the $\sgn_-$ function.
Now, $l_*=a/2$ and therewith not an integer. Thus, instead of two contributions to the
same which cancel each other, we find
\begin{equation}
  b p + \sum\limits_{l=0}^{b/2-1} (-1)^l \, \text{Ceil}(\Delta^{(m)}_\mathrm{L} +
    2 l p) = \sgn_- (2p) = 1\,.
\end{equation}
\end{proof}

\noindent Now consider a periodic orbit $\{\Delta^{(1)}_\mathrm{L}, \dots,%
\Delta^{(k')}_\mathrm{L}\}$ which arises from lemma \ref{lemma1} and take
instead of $\Delta^{(1)}_\mathrm{L}$, $\Delta^{(2)}_\mathrm{L}$ as the initial
condition. In this case, we obtain another periodic orbit $\{\Delta^{(2)}_\mathrm{L},%
\dots, \Delta^{(k')}_\mathrm{L}, \Delta^{(1)}_\mathrm{L}\}$. The only
difference between the two of them is that all elements are shifted to the
right by one position. Periodic orbits which arise from such shifts are members
of the same equivalence class. All their relevant properties do not change within
the class. Hence, it is sufficient to study only one representative of each
equivalence class.
\begin{lemma}\label{lemma3}
  A unique representative of each equivalence class for periodic orbits
  $\Delta^{(m)}_\mathrm{L}$ which arise from lemma \ref{lemma1} is given by
  the initial conditions
  \begin{equation}\label{classrepresentative}
    0 < |\Delta^{(1)}_\mathrm{L}| < \frac1{k'}
  \end{equation}
  for $p=a/b$, $a>0$ and $b>0$. For all other remaining $\Delta^{(m)}_\mathrm{L}$
  in this orbit we instead have:
  \begin{equation}\label{goodpoints}
    \left| \left(\Delta^{(m)}_\mathrm{L} + \frac12\right) \,\text{mod}\, 1 -%
      \frac12 \right| > \frac1{k'}
      \quad \forall \, m \in \{2, \dots,k'/2,k'/2+2, \dots,  k'\}
  \end{equation}
  and
  \begin{equation}\label{badpoint}
    \left| \left(\Delta^{(k'/2+1)}_\mathrm{L} + \frac12\right) \,\text{mod}\, 1 -%
      \frac12 \right| < \frac1{k'}\,.
  \end{equation}
\end{lemma}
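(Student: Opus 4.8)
The plan is to reduce the entire statement to elementary facts about the rotation by $2p$ on the circle $\mathbb{R}/\mathbb{Z}$. The key observation is that in the leading recursion \eqref{leading} the increment $1-2\,\text{Ceil}(\Delta^{(m)}_\mathrm{L})$ is an integer, so the fractional part of $\Delta^{(m)}_\mathrm{L}$ is transported by pure rotation: $\Delta^{(m)}_\mathrm{L}\equiv\Delta^{(1)}_\mathrm{L}+2(m-1)p\pmod 1$ for every $m$. Since $\bigl|(\Delta+\tfrac12)\bmod 1-\tfrac12\bigr|$ is exactly the distance $\|\Delta\|$ from $\Delta$ to the nearest integer, conditions \eqref{goodpoints} and \eqref{badpoint} become statements about $\|\Delta^{(1)}_\mathrm{L}+2(m-1)p\|$, i.e.\ about how closely the $+2p$-orbit returns to $\mathbb{Z}$; the distinguished condition $0<|\Delta^{(1)}_\mathrm{L}|<1/k'$ just asks in addition that the starting point sit in the corresponding neighbourhood of $0$ itself, consistent with the a priori bound \eqref{boundDeltaL1}.

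Next I would do the arithmetic. Writing $p=a/b$ with $a,b$ coprime gives $\gcd(2a,b)=\gcd(2,b)$, so $2p$ has reduced denominator $d:=b/\gcd(2,b)$, which by Lemma~\ref{lemma1} is exactly $k'/2$ in both cases ($d=b=k'/2$ for $b$ odd, $d=b/2=k'/2$ for $b$ even). Consequently, over one period $1\le m\le k'$ the displacement $2(m-1)p$ is an integer precisely when $d\mid(m-1)$, i.e.\ only at $m=1$ and $m=d+1=k'/2+1$, while for every other $m$ one has $\{2(m-1)p\}=j/d$ with $1\le j\le d-1$. Since also $k'p\in\mathbb{Z}$ (it equals $a$ or $2a$), we get $\Delta^{(k'/2+1)}_\mathrm{L}\equiv\Delta^{(1)}_\mathrm{L}\pmod 1$, which immediately yields \eqref{badpoint}: $\|\Delta^{(k'/2+1)}_\mathrm{L}\|=\|\Delta^{(1)}_\mathrm{L}\|\le|\Delta^{(1)}_\mathrm{L}|<1/k'$. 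For \eqref{goodpoints}, for the remaining $m$ the triangle inequality for the metric $\|\cdot\|$ gives $\|\Delta^{(m)}_\mathrm{L}\|=\|\,\Delta^{(1)}_\mathrm{L}+j/d\,\|\ge\|j/d\|-|\Delta^{(1)}_\mathrm{L}|\ge\tfrac1d-\tfrac1{k'}=\tfrac1{2d}=\tfrac1{k'}$, with strict inequality because $|\Delta^{(1)}_\mathrm{L}|<1/k'$ is strict.

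The ``unique representative'' assertion then splits into uniqueness, which is easy, and existence, which is the delicate point. For uniqueness: by \eqref{goodpoints} every orbit point except those at indices $1$ and $k'/2+1$ has $|\Delta^{(m)}_\mathrm{L}|\ge\|\Delta^{(m)}_\mathrm{L}\|>1/k'$, so two points of one orbit lying in $(-1/k',1/k')\setminus\{0\}$ would have to sit at those two distinguished indices; but those differ by an integer of absolute value $<1$ and hence coincide. For existence one must upgrade the clean mod-$1$ picture to honest real values: the $+2p$-orbit, being a shift of the full coset $\{j/d:0\le j<d\}$, always meets the $\tfrac1{k'}$-neighbourhood of an integer, and the task is to show that one of the two candidate points (indices $1$ and $k'/2+1$) in fact lands in the $\tfrac1{k'}$-neighbourhood of $0$ and is nonzero. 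I would get this from the a priori confinement of a Lemma~\ref{lemma1} orbit — equations \eqref{boundDeltaL1} and \eqref{l(m)new} together with the strict alternation of the signs of the extrema $y^{(m)}$ in \eqref{iteration} — which pins the orbit into a window narrow enough that exactly one of the two candidates is selected, the value $\Delta^{(1)}_\mathrm{L}=0$ being non-generic (the orbit through an integer) and in any case excluded by the strict inequality. I expect precisely this step — translating from residues modulo $1$ to actual values and identifying the canonical label among the $k'$ cyclic shifts — to be the main obstacle; everything preceding it is the bookkeeping above.
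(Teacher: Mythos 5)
Your treatment of the second half of the lemma --- conditions \eqref{goodpoints} and \eqref{badpoint} --- and of the uniqueness half of the first claim is correct, and it is essentially the paper's argument: the fractional part of $\Delta^{(m)}_\mathrm{L}$ is transported by the rotation $+2p$ on $\mathbb{R}/\mathbb{Z}$ (an immediate consequence of \eqref{leading}), its orbit over a half-period is an arithmetic progression of spacing $2/k'$ (the paper's lattice $\Lambda_{k'}$), hence exactly one point lies within $1/k'$ of an integer, and the relation $\Delta^{(m+k'/2)}_\mathrm{L}\equiv\Delta^{(m)}_\mathrm{L}\bmod 1$ follows from $k'p\in\mathbb{Z}$. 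Your packaging via the reduced denominator $d=k'/2$ of $2p$ and the triangle inequality for the distance to the nearest integer is a slightly cleaner rendering of the same computation, and your uniqueness argument (only the indices $1$ and $k'/2+1$ can carry a value in $(-1/k',1/k')$, and they carry the same value) is sound.

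The genuine gap is the existence half of the ``unique representative'' claim, which you explicitly leave open as ``the main obstacle.'' It is in fact closable with precisely the lattice fact you already derived, run in the forward direction, which is what the paper does: one shifts the time origin of an aligned orbit by an even number of steps and simultaneously renormalizes by an even integer, $\Delta'^{(m)}_\mathrm{L}=\Delta^{(m+2n)}_\mathrm{L}-2\,\text{Floor}\bigl((\Delta^{(1+2n)}_\mathrm{L}+1)/2\bigr)$; this combined operation leaves the recursion \eqref{leading} invariant and is by definition part of the equivalence, so there is no sign-alternation or confinement argument to supply. The resulting initial conditions \eqref{shiftedinitialcond}, as $n$ sweeps $1,\dots,k'$, fill a lattice of spacing $2/k'$ inside $(-1,1)$, and letting $\Delta^{(1)}_\mathrm{L}$ range over $(-1/k',1/k')$ fills the gaps between lattice points; since by \eqref{boundDeltaL1} every admissible initial condition lies in $(-1,1)$, every orbit is a shift of an aligned one. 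So the step you anticipated as hard --- passing from residues modulo $1$ to actual values --- is dissolved by the integer renormalization built into the equivalence relation. (Both your sketch and the paper quietly set aside the degenerate cases where the rotation orbit passes exactly through an integer or sits exactly at distance $1/k'$ from one.)
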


\begin{proof}[Proof of Lemma \ref{lemma3}]
First we prove that from the representatives \eqref{classrepresentative} all
other orbits of the same equivalence class can be obtained by shifting the time
evolution. More specifically, we shift by $n$ steps to the left
\begin{equation}
  \Delta'^{(m)}_\mathrm{L} = \Delta^{(m+2n)}_\mathrm{L} - 2 \, \text{Floor}\left(
    \frac{\Delta^{(1+2 n)}_\mathrm{L} + 1}2 \right)
\end{equation}
and perform an additional integer shift to keep $\Delta'^{(1)}$ in the fixed
interval $(-1,1)$. According to \eqref{boundDeltaL1}, this interval covers all
possible initial conditions. A shift by an integer leaves the iteration
\eqref{leading} for the leading contributions $\Delta^{(m)}_\mathrm{L}$ invariant
and therefore does not change the equivalence class of the orbit. After this
transformation, we obtain
\begin{equation}\label{shiftedinitialcond}
  \Delta'^{(1)}_\mathrm{L} = ( \Delta^{(1)}_\mathrm{L} + 2 n p + 1)
    \, \text{mod} \, 2 - 1\,.
\end{equation}
as new initial condition. Sweeping $n$ from $1$ to $k'$, we find that all
initial conditions fill the lattice
\begin{equation}
  \Delta'^{(1)}_\mathrm{L} \in \{  -1 + 1 / k' + \Delta^{(1)}_\mathrm{L}, \,
    -1/2 + 3 / k' + \Delta^{(1)}_\mathrm{L}, \dots, 1 - 1 / k' +
    \Delta^{(1)}_\mathrm{L} \}\,.
\end{equation}
Varying also $\Delta^{(1)}_\mathrm{L}$ in the bounds given by
\eqref{classrepresentative}, we eventually see that
\begin{equation}
  \forall \, \Delta'^{(1)}_\mathrm{L} \in ( -1 , 1)
    \quad \exists \, n \in \{1, \dots , k'\} \text{ and }
    \Delta^{(1)}_\mathrm{L} \in \left(-\frac1{k'}, \frac1{k'}\right)
    \text{ s.t. \eqref{shiftedinitialcond} holds.}
\end{equation}
This proves the first part of the lemma.

In order to prove the second part, we first note that
\begin{equation}
  \left(\Delta^{(m)}_\mathrm{L} + \frac12\right) \,\text{mod}\, 1 =
  \left(\Delta^{(1)}_\mathrm{L} + 2 (m - 1) p + \frac12\right) \,\text{mod}\, 1
    \quad \forall \, m \in \mathbb{Z}
\end{equation}
which follows from \eqref{leading} after stripping off all integer
contributions. Because
\begin{equation}
  k' p \in \mathbb{N}\,,
\end{equation}
we immediately find that
\begin{equation}\label{otherhalf}
  \left(\Delta^{(m + k'/2)}_\mathrm{L} + \frac12\right) \,\text{mod}\, 1 =
    \left(\Delta^{(m)}_\mathrm{L} + \frac12\right) \,\text{mod}\, 1
\end{equation}
holds. Under the assumption \eqref{classrepresentative}, this relation directly
implies \eqref{badpoint}. More generally,
\begin{equation}\label{latticevalues}
  \left(\Delta^{(1)}_\mathrm{L} + 2 (m - 1) p + \frac12\right) \,\text{mod}\, 1
    - \frac12 \in \Lambda_{k'}(\Delta^{(1)}_\mathrm{L})
\end{equation}
only takes values on the lattice
\begin{equation}
  \Lambda_{k'}(\Delta) = \{ -1/2 + 1 / k' + \Delta,\, -1/2 + 3 / k' + \Delta,
    \dots, 1/2 - 1 / k' + \Delta \}\,.
\end{equation}
with $|\Delta|<1/k'$. This lattice has exactly $k'/2$ elements and for
$m \in \{1, \dots,k'/2 \}$ every one occurs exactly one times. There is only
one element in this lattice whose absolute value is small that $1/k'$. This
element is $\Delta$. It arise for $m=0$ from \eqref{latticevalues}. All other
elements have an absolute value which is larger that $1/k'$. This proves
\eqref{goodpoints} for $m\in\{2, \dots, k'/2\}$. The remaining conditions follow
immediately from \eqref{otherhalf}.
\end{proof}

\noindent We call this canonical representative an aligned periodic orbit:
\begin{definition}\label{aligned}
  A periodic orbit ($p=a/b$, $a$, $b$ coprime and $b\, \text{mod} \, 4 \ne 0$) with
  \begin{equation}
    0 < |\Delta^{(1)}_\mathrm{L}| < \frac1{k'}
  \end{equation}
  is called aligned.
\end{definition}
\noindent A direct consequence of lemma \ref{lemma3} is that every orbit can be
aligned by a suitable shift.
\begin{lemma}\label{lemma4}
  Let $\Delta^{(m)}_\mathrm{L}$ capture a periodic orbit which arises from lemma
  \ref{lemma1}. All $\Delta^{(m)}_\mathrm{L}$ are bounded by
  \begin{equation}
    | \Delta^{(m)}_\mathrm{L} | < k' p + 2 = \begin{cases} a + 1 &
        \text{for } b \text{ even} \\
      2 a + 1 & \text{for } b \text{ odd}\,.
    \end{cases}
  \end{equation}
\end{lemma}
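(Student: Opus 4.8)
The plan is to combine the closed form (\ref{solDeltaL}) for the leading evolution with the point-symmetry argument already used in the proofs of Lemmas \ref{lemma1} and \ref{lemma2}, now applied to incomplete rather than full-period sums. By (\ref{boundDeltaL1}) every orbit produced by Lemma \ref{lemma1} has $\Delta^{(1)}_\mathrm{L}\in(-1,1)$, and by the $\mathbb{Z}_2$ symmetry $p\to -p$, $y^{(m)}\to -y^{(m)}$ noted above (\ref{l(m)new}) we may take $0<p<1/2$; there is no need to pass to the aligned representative.

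First I would rewrite (\ref{solDeltaL}) using $\text{Ceil}(x)=x+\phi(x)$ with $\phi(x)\in[0,1)$. Because $\sum_{l=0}^{2n-1}(-1)^l=0$ and $\sum_{l=0}^{2n-1}(-1)^l l=-n$, the terms linear in $l$ exactly cancel the $4np$ contribution, leaving
\[
\Delta^{(1+2n)}_\mathrm{L}=\Delta^{(1)}_\mathrm{L}+2\sum_{l=0}^{2n-1}(-1)^l\phi\!\left(\Delta^{(1)}_\mathrm{L}+2lp\right)=\Delta^{(1)}_\mathrm{L}+2\big(2np-M_n\big),
\]
where $M_n=\#\{\,0\le j<n:\ \{\Delta^{(1)}_\mathrm{L}+4jp\}\ge 1-2p\,\}$ counts the sign flips through the first $n$ pairs of extrema (at the isolated values of $\Delta^{(1)}_\mathrm{L}$ for which some argument of $\phi$ is an integer one keeps the unconditional form with $\phi\in[0,1)$ and uses that $\operatorname{sgn}_-$ is continuous at $0$, as in Lemmas \ref{lemma1}--\ref{lemma2}). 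So Lemma \ref{lemma4} is equivalent to the partial-sum estimate $|2np-M_n|<\tfrac12\big(k'p+2-|\Delta^{(1)}_\mathrm{L}|\big)$ over a half period $0\le n\le k'/2$; the full half-period sum vanishes, which is exactly the periodicity statement of Lemma \ref{lemma1}. The even-indexed extrema are covered by running the identical argument from the base point $m=2$, using that $\text{Ceil}(\Delta^{(1)}_\mathrm{L})\in\{0,1\}$ forces $|\Delta^{(2)}_\mathrm{L}|<1+2p<2$.

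Second, I would bound those partial sums with the point symmetry of the summand about $l_*$, exactly as in the proof of Lemma \ref{lemma1}. The coincidence to exploit is that the symmetry centre equals $l_*=k'p$, with $k'$ as in (\ref{kprime}): for $n$ not past $l_*$ the trivial ``at most one unit per $\pm$ pair'' estimate already controls the sum, while for $n$ beyond $l_*$ the symmetry reflects its tail back onto an initial block of length at most $l_*$ together with the two anchor terms $\operatorname{sgn}_-(2p)=1$, $\operatorname{sgn}_-(0)=-1$. A supplementary check --- of the same flavour as Lemmas \ref{lemma1}--\ref{lemma2} --- shows that the only way a contiguous block of sign flips can be as long as $l_*=k'p$ is when $b=4a\pm1$, in which case $1-2p$ differs from $\tfrac12$ by only $1/(2b)$, so the apparent $k'p(1-2p)$ growth is really $\tfrac12 k'p+O(1)$. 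Either way $|2np-M_n|<\tfrac12 k'p+O(1)$, and substituting this back yields $|\Delta^{(m)}_\mathrm{L}|\le|\Delta^{(1)}_\mathrm{L}|+2|2np-M_n|<k'p+2$.

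The step I expect to be the real obstacle is pinning down the constant: one must keep the errors from (i) the offset $\Delta^{(1)}_\mathrm{L}$, (ii) the rewriting of the alternating ceiling sum into $\operatorname{sgn}_-$ form when the upper limit is not an integer --- the ``rounding errors'' flagged in subsection \ref{porbitconds} --- and (iii) the boundary/anchor terms of the reflection, all simultaneously inside the single unit of slack allowed by the ``$+2$''. These are precisely the estimates that already underlie Lemmas \ref{lemma1}--\ref{lemma3}, so the required tools are in hand; the real work is in assembling them so that no more than $O(1)$ is lost at each stage.
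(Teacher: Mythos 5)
Your strategy is essentially the paper's: start from the closed form \eqref{solDeltaL}, reduce the bound to controlling partial sums of the alternating ceiling sequence, exploit the vanishing of the full half-period sum (Lemma~\ref{lemma1}) through a complementary-sum/reflection argument, and recover the even-indexed extrema from a one-step increment bound. Where you differ is in the bookkeeping, and that is exactly where your write-up does not close. The paper rewrites \eqref{solDeltaL} as
\begin{equation*}
  \Delta^{(1+2n)}_\mathrm{L} = \big(\Delta^{(1)}_\mathrm{L} + 4 n p\big)\,\mathrm{mod}\,1
  \;+\; \sum_{l=0}^{j-1} \mathrm{sgn}_-\big(2p - (l\,\mathrm{mod}\,4p)\big)\,,
  \qquad j=\text{Floor}(4np)\,.
\end{equation*}
The first term sits one-sidedly in $[0,1)$, and the second is a partial sum of a $\pm1$ sequence of length $2k'p$ whose total vanishes, hence is bounded by $\min(j,\,2k'p-j)\le k'p$ with no further work; this gives $|\Delta^{(1+2n)}_\mathrm{L}|<k'p+1$ outright, and the even case follows from $|\Delta^{(m+1)}_\mathrm{L}|<|\Delta^{(m)}_\mathrm{L}|+1$. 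Your decomposition keeps $\Delta^{(1)}_\mathrm{L}$ explicit and works pair-by-pair with $2np-M_n$, so you need $|2np-M_n|<\tfrac12 k'p+\tfrac12$ with essentially no slack (and literally none once you restart from $m=2$ with only $|\Delta^{(2)}_\mathrm{L}|<2$) --- this is the constant-chasing you correctly flag as the obstacle. Two concrete issues: (i) the ``one unit per $\pm$ pair'' estimate gives only $|2np-M_n|<n\le k'/2$, which is \emph{weaker} than $\tfrac12 k'p$ since $p<1/2$; the bound you want comes only from counting in the index $l$ of the $\mathrm{sgn}_-$ sum, where each step is $\pm1$ and the reflection argument applies cleanly. (ii) The ``supplementary check'' involving contiguous blocks and $b=4a\pm1$ is a red herring; no case analysis on $b$ is needed once the complementary-sum bound is in place. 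So: right skeleton, but the mod-1 decomposition above is the device that makes the constants a non-issue, and without it (or an equivalent one-sided bookkeeping) your argument as written does not yet establish the stated bound.
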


\begin{proof}[Proof of Lemma \ref{lemma4}]
 First, we deduce for \eqref{leading} that
\begin{equation}
  | \Delta^{(m+1)} | < |\Delta^{(m)} | + 1
\end{equation}
holds. In order to prove the lemma, we now show that
\begin{equation}\label{wantedlemma4}
  | \Delta^{(1+2n)} | < k' p + 1 \quad \forall \, n \in \mathbb{Z}\,.
\end{equation}
As in the proofs of the lemmas \ref{lemma1} and \ref{lemma2}, we write the
expression for $\Delta^{(1+2n)}_\mathrm{L}$ in \eqref{solDeltaL} as
\begin{equation}\label{deltaL1+2n}
	\Delta^{(1+2 n)}_\mathrm{L} =( \Delta^{(1)}_\mathrm{L} + 4 n p ) \,
		\text{mod} \, 1 + \sum\limits_{l=0}^{\text{Floor}( 4 n p )  - 1}
		\sgn_- \, ( 2 p - ( l \, \text{mod} \, 4p ) )\,.
\end{equation}
For $n=k'/2$ the sum on the right hand side vanishes according to lemma
\ref{lemma1}. Moreover the $\sgn_-$ function can only take values of $1$
or $-1$. Hence, the sum is bounded by
\begin{equation}
	\left| \sum\limits_{l=0}^{\text{Floor}( 4 n p )  - 1} \sgn_- \, ( 2 p
		- ( l \, \text{mod} \, 4p ) ) \right| < k' p \quad \forall n \in \mathbb{Z}
		\,.
\end{equation}
Taking into account the first part of \eqref{deltaL1+2n} too, we see that
\eqref{wantedlemma4} hold. This completes the proof.
\end{proof}

Let us now start to analyze the effect of the corrections $\delta^{(m)}$ in
\eqref{DeltaSplitting} for the case where the leading contributions
$\Delta^{(m)}_\mathrm{L}$ describe an aligned, periodic orbit defined in
definition \ref{aligned}. In this discussion the functions $\delta_1(\Delta, y)$,
$\delta_2(\Delta, y)$ and $\tilde \delta_2(\Delta, y)$, which we defined in
\eqref{delta1}, \eqref{delta2} and \eqref{tildedelta2}, play an important role.
Under the assumption that $\Delta$ is bound by
\begin{equation}
  |\Delta| < \Delta_\mathrm{B} \label{DeltaBOUND}\,,
\end{equation}
we can bound them from above by
\begin{equation}\label{BOUNDS}
  |\delta_1(\Delta, y)| < \frac{(2 \Delta_\mathrm{B} + 2)^2}{2 (l - 1)}\,, \quad
  |\delta_2(\Delta, y)| < \frac{(2 \Delta_\mathrm{B} + 2)^2}{2 (l - 1)}
    \quad \text{and} \quad
  |\tilde\delta_2(\Delta, y)| < \frac{(2 \Delta_\mathrm{B} + 2)^2}{2 (l - 1)} \,.
\end{equation}
We want to guarantee that the subleading part $|\delta^{(m)}|$ is small and
does not change the qualitative behavior of the time evolution. If this is the
case, the time evolution is governed by the leading contributions
$\Delta^{(m)}_\mathrm{L}$ for which we already proved an analytic expression
in \eqref{solDeltaL} above. More specifically, $|\delta^{(m)}|$ has to be smaller
than a certain boundary $\delta_\mathrm{B}<1$, at least for the first $k'/2$
steps. Thus, we require
\begin{equation}\label{boundsoncorrections}
  |\delta^{(m)}| < \delta_\mathrm{B} \quad \text{and} \quad
  |\Delta^{(m)}| < \Delta_\mathrm{B}  \quad \forall \,
    m \in \{ 1, \dots, k'/2 + 1\} \,.
\end{equation}
and fix $\delta_\mathrm{B}$ such that
\begin{equation}\label{constrceil}
  \text{Ceil}( \Delta^{(m)}_\mathrm{L} ) = \text{Ceil}\big( \Delta^{(m)} +%
    \delta_1( \Delta^{(m)}, y^{(m)} ) \big)
    \quad \forall \, m \in \{ 1, \dots, k'/2 \}
\end{equation}
is not violated. If it would be violated than $|\delta^{(m)}|$ would be
immediately bigger than one and our analysis would break down.
For $m=1$, we have
\begin{equation}
  \text{Ceil}( \Delta^{(1)}_\mathrm{L} ) = \text{Ceil}\big( \Delta^{(1)}_\mathrm{L}
    - \delta_1( \Delta^{(1)}, y^{(1)} )  + \delta_1( \Delta^{(1)}, y^{(1)} ) \big)
    = \text{Ceil}( \Delta^{(1)}_\mathrm{L} )
\end{equation}
automatically. For $m=2$, we obtain the constraint
\begin{equation}
  \text{Ceil}( \Delta^{(2)}_\mathrm{L} ) = \text{Ceil}\big( \Delta^{(2)}_\mathrm{L}
    + \delta^{(2)} + \delta_1( \Delta^{(2)}, y^{(2)} ) \big)
    = \text{Ceil}( \Delta^{(2)}_\mathrm{L} )
\end{equation}
where
\begin{equation}
  |\delta^{(2)}| \le | \delta^{(1)} + \delta_2(\Delta^{(1)}, y^{(1)}) |
    < 2 \frac{(2\Delta_\mathrm{B} + 2)^2}{2(l-1)}
\end{equation}
after taking into account the bounds \eqref{BOUNDS}. This constraint only
holds, if we restrict the domain of allowed values for $\Delta^{(2)}_\mathrm{L}$
to
\begin{equation}
  \left| \left(\Delta^{(2)}_\mathrm{L} + \frac12\right) \,\text{mod}\, 1 -
    \frac12 \right| \ge 3 \frac{(2 \Delta_\mathrm{B} + 2)^2}{2(l-1)}\,.
\end{equation}
Repeating this reasoning step for step, we eventually find
\begin{equation}
  \bigwedge\limits_{m=2}^{k'/2}  \left| \left(\Delta^{(m)}_\mathrm{L} +
    \frac12\right) \,\text{mod}\, 1  - \frac12  \right|
    \ge (m+1) \frac{(2 \Delta_\mathrm{B} + 2)^2}{2(l-1)}\,,
\end{equation}
which implies
\begin{equation}
  | \delta^{(k'/2+1)} | < k' \frac{(\Delta_\mathrm{B} + 1)^2}{l-1}
    = \delta_\mathrm{B}\,.
\end{equation}

For the following discussion it is sufficient to use the weaker, but
simpler version
\begin{equation}\label{restrDelta}
  \left| \left(\Delta^{(m)}_\mathrm{L} + \frac12\right) \,\text{mod}\, 1 -
    \frac12 \right| \ge (k' + 2) \frac{( \Delta_\mathrm{B} + 1)^2}{l-1}
    = \delta_\mathrm{B} \quad \forall \, m \in \{ 2, \dots, k'/2 \}\,.
\end{equation}
If it holds, the subleading part $\delta^{(m)}$ is given by
\begin{equation}\label{subleading}
  \delta^{(m)} = \delta^{(1)} + \sum\limits_{n=1}^{m-1} \delta_2\big(
    \Delta_\mathrm{L}^{(n)}, y^{(n)} \big)
    \quad \text{for} \quad \forall \, m \in \{ 2, \dots, k'/2 + 1\}\,.
\end{equation}
Taking into account lemma \ref{lemma3}, we know that:
\begin{equation} \label{kprimebound}
  \left| \left(\Delta^{(m)}_\mathrm{L} - \frac12\right)
    \,\text{mod}\, 1 + \frac12 \right| > \frac{1}{k'}.
\end{equation}
On the other hand, the constraint \eqref{restrDelta} also tells us that:
\begin{equation}
  \delta_\mathrm{B} \le \left| \left(\Delta^{(m)}_\mathrm{L} - \frac12\right)
    \,\text{mod}\, 1 + \frac12 \right|
\end{equation}
Scanning over the admissible values of $\delta_{B}$, we see that the largest $\delta_{B}$ which will satisfy
line \ref{kprimebound} satisfies the inequality:
\begin{equation}
  \delta_\mathrm{B} \le \frac{1}{k'}\,.
\end{equation}
If we furthermore combine lemma \ref{lemma4} with $|\delta^{(m)}| <
\delta_\mathrm{B} < 1$ for $m\in\{1,\dots,k'/2\}$, we see that there are two additional
inequalities we can write:
\begin{align}
|\Delta^{(m)}| & < \Delta_\mathrm{B} \\
|\Delta^{(m)}| & \le |\Delta^{(m)}_\mathrm{L}| + |\delta^{(m)}| < k' p +  3 \quad \forall \, m \in \{ 2, \dots, k'/2 \}
\end{align}
holds. Again, we seek out the largest value of $\Delta_{B}$ compatible with these two conditions. This is
fixed by taking:
\begin{equation}\label{DeltaB}
  \Delta_\mathrm{B} = k' p + 3\,.
\end{equation}
Hence, we can express both boundaries $\delta_\mathrm{B}$ and
$\Delta_\mathrm{B}$ in \eqref{boundsoncorrections} in terms of
$k'$ and $p$.

So far we just discussed the subleading part $\delta^{(m)}$ for the first
half of the periodic orbit. To go beyond $\Delta^{(k'/2 + 1)}$ is more
involved, because according to lemma \ref{lemma3}
\begin{equation}
  \left| \left(\Delta^{(k'/2+1)}_\mathrm{L} + \frac12\right) \,\text{mod}\, 1
    - \frac12 \right| < \frac1{k'}
\end{equation}
holds and enables us to apply the bounds discussed above. There are now two
ways one could go. First one could improve the analysis of the contributions
from $\delta_1(\Delta,y)$ and $\delta_2(\Delta,y)$. We only used very crude
bounds. They worked well for the analysis of the first $k'/2$ corrections
$\delta^{(m)}$ but are insufficient as soon as one want to go beyond the
special point $k'/2 + 1$. Alternatively, we can avoid this point completely
by approaching the second part of the orbit with the inverse iteration
prescription \eqref{inviter}. In analogy with \eqref{boundsoncorrections},
one now requires
\begin{equation}
  | \delta^{(m)} | < \delta_B \quad \text{and} \quad | \Delta^{(m)} | <
    \Delta_B \quad \forall \, m \in \{- k'/2 + 1 , \dots , 0 \}\,.
\end{equation}
As before, we again require
\begin{equation}
  \text{Ceil}( \Delta^{(m)}_\mathrm{L} ) = \text{Ceil}\big( \Delta^{(m)} +
    \delta_1( \Delta^{(m)}, y^{(m)} ) \big)
    \quad \forall \, m \in \{ - k'/2 + 1, \dots, 0 \}
\end{equation}
in order to keep $\delta^{(m)}$ smaller than one. Following the same reasoning
as above, we find that this constraint is fulfilled when:
\begin{equation}
  \delta_\mathrm{B} = (k' + 2) \frac{(\Delta_\mathrm{B} + 1)^2}{l - 1} \le \frac1{k'}
\end{equation}
with the $\Delta_\mathrm{B}$ in \eqref{DeltaB}. Alternatively, we can also state this
inequation for $l$, resulting in
\begin{equation}\label{lbound}
  1 + k' ( k' + 2 ) ( k' p + 4 )^2 \le l\,.
\end{equation}
If it is satisfied, the corrections for the second part of the orbits are
given by
\begin{equation}
  \delta^{(m)} = \delta^{(1)} + \sum\limits_{n=m+1}^{1} \tilde\delta_2\big(
    \Delta_\mathrm{L}^{(n)}, y^{(n)} \big)
    \quad \text{for} \quad \forall \, m \in \{ - k'/2 + 1, \dots, 0\}\,.
\end{equation}
We call the corresponding periodic orbits dominant orbits.
\begin{definition}\label{dominantorbit}
  Orbits with $p=a/b$ where $a$, $b$ are coprime and $b \, \text{mod}\, 4 \ne 0$
  are called dominant if they fulfill
  \begin{equation}
    l \ge 9 + 2 k' p + k' ( k' + 2 )( k' p + 4 )^2 = \begin{cases}
      9 + 2 a + b ( b + 2 ) ( a + 4 )^2 & \text{for } b \text{ even} \\
      9 + 4 a + 4 b ( b + 1 ) ( 2 a + 4 )^2 & \text{for } b \text{ odd}\,.
    \end{cases}
  \end{equation}
\end{definition}
\noindent Note that this definition also takes orbits into account which are
not aligned. To this end, we require that not only $l^{(1)}=l$ satisfies the
bound \eqref{lbound} but all $l^{(m)}$. For dominant orbits we can now establish
\begin{lemma}
  For dominant orbits, the subleading part is given by
  \begin{equation}
    \delta^{(m)} = \delta^{(1)} + \sum\limits_{n=1}^{m-1} \delta_2\big(
      \Delta_\mathrm{L}^{(n)}, y^{(n)} \big)
      \quad \text{for} \quad \forall \, m \in \{ 2, \dots, k'/2 + 1\}\,.
  \end{equation}
  and
  \begin{equation}
    \delta^{(m)} = \delta^{(1)} + \sum\limits_{n=m+1}^{1} \tilde\delta_2\big(
      \Delta_\mathrm{L}^{(n)}, y^{(n)} \big)
      \quad \text{for} \quad \forall \, m \in \{ - k'/2 + 1, \dots, 0\}\,.
  \end{equation}
\end{lemma}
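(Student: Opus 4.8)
The plan is to prove both displayed identities by the same mechanism: each is the telescoping of a one-step recursion for the correction $\delta^{(m)}$, so the only real content is to show that, on a dominant orbit, this recursion takes the clean form $\delta^{(m+1)} = \delta^{(m)} + \delta_2(\Delta^{(m)}_\mathrm{L}, y^{(m)})$ on $m\in\{1,\dots,k'/2\}$, and the analogous backward version $\delta^{(m-1)} = \delta^{(m)} + \tilde\delta_2(\Delta^{(m)}_\mathrm{L}, y^{(m)})$ on the second half of the orbit. To obtain the first form I would subtract the leading iteration \eqref{leading} from the full iteration for $\Delta^{(m)}$: the difference of the two $\text{Ceil}$ terms vanishes exactly when the rounding constraint \eqref{constrceil} holds, and the argument $\Delta^{(m)}$ appearing inside $\delta_2$ may be replaced by $\Delta^{(m)}_\mathrm{L}$, because $\delta_2$ sees $\Delta$ only through $\text{Ceil}(\Delta)$ and, on the admissible range, $\text{Ceil}(\Delta^{(m)}) = \text{Ceil}(\Delta^{(m)}_\mathrm{L})$. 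Both facts reduce to the bound pair \eqref{boundsoncorrections}, so the argument is really an induction establishing \eqref{boundsoncorrections} for all $m\in\{1,\dots,k'/2+1\}$ and, run backward, for $m\in\{-k'/2+1,\dots,0\}$; granted those bounds, the two stated formulas follow at once by summing (cf.\ \eqref{subleading}).

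I would carry out the forward induction first. The base case is immediate: $\delta^{(1)} = -\delta_1(\Delta^{(1)},y^{(1)})$ is controlled by \eqref{BOUNDS} with $\Delta_\mathrm{B}$ as in \eqref{DeltaB}, while $\Delta^{(1)}_\mathrm{L}\in(-1,1)$ by \eqref{boundDeltaL1} and $|\Delta^{(1)}_\mathrm{L}|<1/k'$ by alignment (Definition \ref{aligned}), so $|\Delta^{(1)}|$ lies safely below $\Delta_\mathrm{B}$. For the inductive step, assuming \eqref{boundsoncorrections} and \eqref{constrceil} through step $m$, the telescoped sum gives $|\delta^{(m+1)}| \le |\delta^{(1)}| + \sum_{n=1}^{m}|\delta_2(\Delta^{(n)}_\mathrm{L},y^{(n)})| < (k'+2)(\Delta_\mathrm{B}+1)^2/(l-1)$, which is at most $1/k' = \delta_\mathrm{B}$ precisely because of the lower bound on $l$ in Definition \ref{dominantorbit}; then $|\Delta^{(m+1)}| \le |\Delta^{(m+1)}_\mathrm{L}| + |\delta^{(m+1)}| < (k'p+2)+1 = \Delta_\mathrm{B}$ by Lemma \ref{lemma4}; and finally \eqref{constrceil} at step $m+1$ holds because, by Lemma \ref{lemma3} (equation \eqref{goodpoints}), $\Delta^{(m+1)}_\mathrm{L}$ stays farther than $1/k'$ from the integers for $m+1\le k'/2$, which dominates $|\delta^{(m+1)}| + |\delta_1|$. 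This closes the induction and gives the first identity on $\{2,\dots,k'/2+1\}$.

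For the second identity I would run the same induction backward, because it cannot be pushed past $m = k'/2+1$: by Lemma \ref{lemma3} (equation \eqref{badpoint}) the leading value $\Delta^{(k'/2+1)}_\mathrm{L}$ sits within $1/k'$ of an integer, so control of the ceiling fails there. Instead I would cover the second half of the orbit, which by the periodicity of Lemma \ref{lemma1} is $\{-k'/2+1,\dots,0\}$, by iterating backward from $m=1$ with the inverse prescription \eqref{inviter}. Solving \eqref{leading} for the leading inverse step and subtracting yields $\delta^{(m-1)} = \delta^{(m)} + \tilde\delta_2(\Delta^{(m)}_\mathrm{L},y^{(m)})$ under the obvious analogue of \eqref{constrceil}, whose validity again follows from the same estimates, now invoking the good-point bounds of Lemma \ref{lemma3} for the indices in $\{-k'/2+2,\dots,0\}$. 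The extra slack in the dominant-orbit bound of Definition \ref{dominantorbit} relative to \eqref{lbound} is exactly what is needed so that not only $l^{(1)}=l$ but every block length $l^{(m)}$ satisfies the hypothesis that makes \eqref{BOUNDS} and Lemma \ref{lemma4} close the backward induction as well as the forward one.

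The step I expect to be the main obstacle is the mutual consistency of the two bounds in \eqref{boundsoncorrections}: $\delta_\mathrm{B}$ is fixed in terms of $\Delta_\mathrm{B}$ through \eqref{BOUNDS}, while $\Delta_\mathrm{B}$ is fixed in terms of $\delta_\mathrm{B}$ through $|\Delta^{(m)}| \le |\Delta^{(m)}_\mathrm{L}| + |\delta^{(m)}|$ and Lemma \ref{lemma4}. One has to check that $\delta_\mathrm{B}=1/k'$ and $\Delta_\mathrm{B}=k'p+3$, together with the dominant-orbit inequality on $l$, form a genuine common solution of these coupled constraints over all $k'/2$ forward and $k'/2$ backward steps --- equivalently, that the rounding error accumulated along the orbit never reaches the threshold at which some $\text{Ceil}$ could jump and destroy the clean recursion. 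Everything else is routine telescoping and constant-chasing.
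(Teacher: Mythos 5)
Your proposal is correct and follows essentially the same route as the paper: the paper establishes this lemma through the discussion immediately preceding it (the bounds \eqref{BOUNDS}, the ceiling constraint \eqref{constrceil}, the forward telescoping \eqref{subleading}, and the backward analogue via the inverse iteration \eqref{inviter}, with the dominant-orbit condition on $l$ guaranteeing the rounding errors never trip a ceiling), and your induction is just that argument with its logical skeleton made explicit. Your added observation that $\delta_2$ depends on $\Delta$ only through $\mathrm{Ceil}(\Delta)$ — so that $\Delta^{(n)}$ may be replaced by $\Delta^{(n)}_{\mathrm{L}}$ in the summands — is a point the paper uses implicitly, and your flagging of the coupled choice $\delta_\mathrm{B}=1/k'$, $\Delta_\mathrm{B}=k'p+3$ matches the paper's own (somewhat informal) consistency check.
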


After all this preparation, we can finally state the important theorems for
dominant periodic orbits.
\begin{theorem}
  Dominant orbits are periodic and contain $k'$ (given by \eqref{kprime}) extrema.
\end{theorem}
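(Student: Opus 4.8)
The plan is to lift the already-understood leading skeleton $\Delta^{(m)}_{\mathrm L}$ up to the full orbit. By the $\mathbb{Z}_2$ symmetry $y^{(m)}\mapsto-y^{(m)}$, $p\mapsto-p$ we may assume $0<p<1/2$. Write $\Delta^{(m)}=\Delta^{(m)}_{\mathrm L}+\delta^{(m)}$ as in \eqref{DeltaSplitting}. Lemma \ref{lemma1} already gives $\Delta^{(m+k')}_{\mathrm L}=\Delta^{(m)}_{\mathrm L}$, so the whole job is to show the corrections satisfy $\delta^{(m+k')}=\delta^{(m)}$, equivalently that the net drift
\[
  S\;:=\;\sum_{n=1}^{k'}\delta_2\!\big(\Delta^{(n)}_{\mathrm L},y^{(n)}\big)
\]
vanishes. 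Then $\Delta^{(m+k')}=\Delta^{(m)}$; since the substitution $y^{(m)}\leftrightarrow\Delta^{(m)}$ is a bijection once $\sgn y^{(m)}$ is fixed, and $k'$ is even so $\sgn y^{(m+k')}=\sgn y^{(m)}$, the extrema sequence $\{y^{(m)}\}$ is $k'$-periodic; the momentum equals $p$ at every extremum and the inter-extremum spacing is determined by \eqref{l(m)}, so the full orbit $(p(i),x(i))$ closes up with exactly $k'$ extrema, inherited from the leading orbit.

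First I would make the reduction to $S=0$ rigorous. Using the dominant-orbit lower bound on every $l^{(m)}$ (Definition \ref{dominantorbit}) together with the estimates \eqref{BOUNDS}, the constraint \eqref{constrceil} is preserved through one full period of steps — handled, as in the run-up to this theorem, by running the $k'/2$ steps forward from $m=1$ and the remaining $k'/2$ steps via the inverse iteration \eqref{inviter}, so as to avoid the delicate point $m=k'/2+1$. On that range $\delta_2$ and the ceiling term see their first arguments only through $\mathrm{Ceil}$, so both take their leading-order values, and the exact recursion for $\Delta^{(m)}$ collapses to $\delta^{(m+1)}=\delta^{(m)}+\delta_2(\Delta^{(m)}_{\mathrm L},y^{(m)})$ across $k'$ consecutive steps. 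Summing, and using that $\Delta^{(m)}_{\mathrm L}$ and $\sgn y^{(m)}$ are both $k'$-periodic (the latter since $k'$ is even and extrema alternate in sign), gives $\delta^{(m+k')}-\delta^{(m)}=S$ with $S$ independent of $m$.

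The arithmetic core is then $S=0$. From \eqref{leading} one has $2\,\mathrm{Ceil}(\Delta^{(n)}_{\mathrm L})-1=2p-d_n$ and $1+4p-2\,\mathrm{Ceil}(\Delta^{(n)}_{\mathrm L})=2p+d_n$ with $d_n:=\Delta^{(n+1)}_{\mathrm L}-\Delta^{(n)}_{\mathrm L}$, so by \eqref{delta2}
\[
  \delta_2\!\big(\Delta^{(n)}_{\mathrm L},y^{(n)}\big)=\sgn y^{(n)}\,\frac{4p^{2}-d_n^{\,2}}{2(l-1)}\,.
\]
Since $\sum_{n=1}^{k'}\sgn y^{(n)}=0$, one gets $S=-\tfrac{1}{2(l-1)}\sum_{n}\sgn y^{(n)}\,d_n^{\,2}$, and (up to the overall sign $\sgn y^{(1)}$) it remains to show $\sum_{n=1}^{k'}(-1)^{n+1}d_n^{\,2}=0$. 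Expanding $d_n^{\,2}=(\Delta^{(n+1)}_{\mathrm L})^{2}-2\Delta^{(n+1)}_{\mathrm L}\Delta^{(n)}_{\mathrm L}+(\Delta^{(n)}_{\mathrm L})^{2}$, the squared terms cancel in the alternating sum after an index shift and $k'$-periodicity, and inserting $\Delta^{(n+1)}_{\mathrm L}+\Delta^{(n)}_{\mathrm L}=2\phi_n+2p-1$ (again from \eqref{leading}, with $\phi_n$ the fractional part of $\Delta^{(n)}_{\mathrm L}$) reduces the claim to $\sum_{n=1}^{k'}(-1)^{n+1}\phi_n^{\,2}=0$ together with $\sum_{n=1}^{k'}(-1)^{n+1}\phi_n=0$. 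Now $\phi_n$ depends only on $n$, and by \eqref{otherhalf} $\phi_{n+k'/2}=\phi_n$, while $k'/2$ is \emph{odd} — this is precisely where $b\not\equiv0\bmod4$ enters (for $b\equiv0\bmod4$, $k'=b$ and $k'/2$ is even, the cancellation fails, matching the drift of Lemma \ref{lemma2}). Hence pairing $n$ with $n+k'/2$ flips $(-1)^{n+1}$ and leaves $\phi_n$ fixed, so both alternating sums telescope to zero and $S=0$.

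The step I expect to be the real work is the middle one: verifying that the dominant-orbit hypothesis genuinely keeps $|\delta^{(m)}|<\delta_{\mathrm B}\le1/k'$ through all $k'$ steps, so that \eqref{constrceil} is never violated and the collapse of the recursion to $\delta^{(m+1)}=\delta^{(m)}+\delta_2(\Delta^{(m)}_{\mathrm L},y^{(m)})$ is legitimate over a whole period. This is the accumulated-rounding-error bookkeeping that the preceding lemmas set up (splitting the period at $m=k'/2+1$ and running half of it backwards); once it is in place, the identity $S=0$ is the clean payoff and the "$k'$ extrema" count follows because the full orbit tracks the leading one ceiling-by-ceiling, so by Lemma \ref{lemma1} it closes after exactly $k'$ extrema and no fewer.
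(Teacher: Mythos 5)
Your proposal is correct and shares the paper's overall architecture: both reduce periodicity of a dominant orbit to the vanishing of the accumulated subleading drift over one leading-order period, $\sum_{n=1}^{k'}\delta_2(\Delta^{(n)}_{\mathrm{L}},y^{(n)})=0$, with the rounding-error control (forward iteration for the first half-period, inverse iteration for the second, so as to avoid the delicate extremum at $m=k'/2+1$) inherited from the preparatory lemmas, and with the count of $k'$ extrema following because the full orbit tracks the leading one ceiling-by-ceiling. Where you genuinely depart from the paper is in how the key identity is verified. The paper expands $\delta_2$ in ceiling functions, introduces $s^{(n)}=\text{Ceil}(\Delta^{(1)}_{\mathrm{L}}+4pn)-\text{Ceil}(\Delta^{(1)}_{\mathrm{L}}+4pn+2p)$, and grinds the sum down to the explicit lattice identity \eqref{finalsum}, checked by pairing each admissible integer $m$ with $m'=(k'p-m)\bmod 2k'p$. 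You instead note from \eqref{leading} and \eqref{delta2} that $\delta_2(\Delta^{(n)}_{\mathrm{L}},y^{(n)})=\sgn y^{(n)}\,(4p^2-d_n^{\,2})/(2(l-1))$ with $d_n=\Delta^{(n+1)}_{\mathrm{L}}-\Delta^{(n)}_{\mathrm{L}}$, so the question becomes $\sum_n(-1)^{n+1}d_n^{\,2}=0$; telescoping the squares and using $\Delta^{(n+1)}_{\mathrm{L}}+\Delta^{(n)}_{\mathrm{L}}=2\phi_n+2p-1$ reduces this to the vanishing of the alternating sums of $\phi_n$ and $\phi_n^{\,2}$, which follows from the half-period symmetry $\phi_{n+k'/2}=\phi_n$ of \eqref{otherhalf} plus the observation that $k'/2$ is odd exactly when $b\not\equiv0\bmod4$. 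I verified this algebra (e.g.\ for $p=1/3$, $k'=6$ the increments give $9d_n^{\,2}=1,25,49,25,1,1$ and the alternating sum vanishes); your route is shorter and makes the role of $b\bmod4$ more transparent than the paper's lattice-counting argument. The one place you are looser than you should be is the assertion that the recursion collapses to $\delta^{(m+1)}=\delta^{(m)}+\delta_2(\Delta^{(m)}_{\mathrm{L}},y^{(m)})$ ``across $k'$ consecutive steps'': the bounds established before the theorem only cover $m\in\{-k'/2+1,\dots,k'/2+1\}$, so the period-$k'$ sum must be assembled from the forward half and the backward half via $\tilde\delta_2(\Delta^{(n)}_{\mathrm{L}},y^{(n)})=-\delta_2(\Delta^{(n-1)}_{\mathrm{L}},y^{(n-1)})$ and the $k'$-periodicity of $\Delta_{\mathrm{L}}$; you gesture at this split but should carry it through explicitly.
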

\begin{proof}
  In order to prove that we indeed have a periodic orbit, we have to show that
  \begin{equation}
    \Delta^{(-k'/2 + 1)} = \Delta^{(k'/2 + 1)}
  \end{equation}
  holds. This relation is equivalent to
  \begin{equation}
    \Delta^{(-k'/2 + 1)}_\mathrm{L} + \delta^{(-k'/2 + 1)} =
      \Delta^{(k'/2 + 1)}_\mathrm{L} + \delta^{(k'/2 + 1)}\,.
  \end{equation}
  For lemma \ref{lemma1}, we know that $\Delta^{(-k'/2 + 1)}_\mathrm{L} =%
  \Delta^{(k'/2 + 1)}_\mathrm{L}$ holds. Therefore, we only have to show
  $\delta^{(-k'/2 + 1)} = \delta^{(k'/2 + 1)}$ which is equivalent to
  \begin{equation}\label{toshow}
    \sum\limits_{n = -k'/2 + 2}^1 \tilde\delta_2 (\Delta_\mathrm{L}^{(n)}, y^{(n)}) =
    \sum\limits_{n = 1}^{k'/2} \delta_2 (\Delta_\mathrm{L}^{(n)}, y^{(n)})\,.
  \end{equation}
  First, we note that lemma~\ref{lemma1} implies
  \begin{equation}
    \sum\limits_{n = -k'/2 + 2}^1 \tilde\delta_2 (\Delta_\mathrm{L}^{(n)}, y^{(n)}) =
    \sum\limits_{n = k'/2 + 2}^{k'+1} \tilde\delta_2 (\Delta_\mathrm{L}^{(n)}, y^{(n)})
  \end{equation}
  and furthermore
  \begin{equation}
    \tilde\delta_2 (\Delta_\mathrm{L}^{(n)}, y^{(n)}) =
      - \delta_2 (\Delta_\mathrm{L}^{(n-1)}, y^{(n-1)})
  \end{equation}
  follows form the definitions \eqref{delta2}, \eqref{tildedelta2} and the
  iteration prescription \eqref{leading} for the leading contribution.
  Therefore \eqref{toshow} is equivalent to
  \begin{align}
    0 &=\sum\limits_{n=1}^{k'} \delta_2(\Delta_\mathrm{L}^{(n)}, y^{(n)}) \\
    & = \frac{2}{l-1} \sum\limits_{n=0}^{k'/2 - 1} \left[ 3 \text{Ceil}(\Delta_\mathrm{L}^{(2n+1)})
      - \text{Ceil}(\Delta_\mathrm{L}^{(2n+1)}+2p) - 1 \right] \times \nonumber \\
    & \quad\quad\quad\quad\quad\quad\quad\quad\quad \left[ 2 p
      + \text{Ceil}(\Delta_\mathrm{L}^{(2n+1)}) - \text{Ceil}(\Delta_\mathrm{L}^{(2n+1)}+2p) \right]\,.
      \nonumber
  \end{align}
  After introducing
  \begin{equation}
    s^{(n)} = \text{Ceil}(\Delta_\mathrm{L}^{(1)} + 4 p n) - \text{Ceil}(\Delta_\mathrm{L}^{(1)}
      + 4 p n + 2 p )\,,
  \end{equation}
  this equation is equivalent to
  \begin{align}
    0 = \sum\limits_{n=0}^{k'/2 - 1} 2 \text{Ceil}(\Delta_\mathrm{L}^{(2n+1)}) (2 p + s^{(n)}) +
        \sum\limits_{n=0}^{k'/2 - 1} (2 p + s^{(n)})(s^{(n)} - 1)\,.
  \end{align}
  Writing \eqref{solDeltaL} as
  \begin{equation}
    \Delta_\mathrm{L}^{(1+2n)} = \Delta_\mathrm{L}^{(1)} + 2 \sum\limits_{l=0}^{n-1} s^{(l)}\,,
  \end{equation}
  lemma \ref{lemma1} implies
  \begin{equation}
    \sum\limits_{l=0}^{k'/2+1} s^{(l)} = - k' p\,.
  \end{equation}
  Moreover, we take into account that $(s^{(l)})^2 = - s^{(l)}$. Thus, \eqref{toshow}
  is equivalent to
  \begin{align}
    \sum\limits_{n=0}^{k'/2-1} \text{Ceil}(\Delta_\mathrm{L}^{(2n+1)})(2 p + s^{(n)}) &=
    \sum\limits_{n=0}^{k'/2-1} \left(\text{Ceil}(\Delta_\mathrm{L}^{(1)} + 4 p n) + 2 \sum\limits_l^{n-1}
      s^{(l)} \right) \left( 2 p + s^{(n)} \right) \nonumber \\
    &= \frac{k' p}2 ( 2 p - 1)\,.
  \end{align}
  All what remains is to calculate the four remaining sums on the right hand side in the first line.
  Let us start with
  \begin{equation}\label{sum1}
    2 p \sum\limits_{n=0}^{k'/2-1} \text{Ceil}(\Delta_\mathrm{L}^{(1)}+4pn) = \frac{k'p}2 (2 k'p + 1)
      -2 k' p^2 + p \, \text{sgn }\Delta_\mathrm{L}^{(1)}
      \quad \text{for} \quad 0 < |\Delta_\mathrm{L}^{(1)}| < \frac1{k'}\,.
  \end{equation}
  In order to better understand this result, assume for a moment that $\Delta_\mathrm{L}^{(1)}=0$.
  Now
  \begin{equation}
    \text{Ceil}(4 n p) + \text{Ceil}\left( 4 ( k'/2 - n) p \right) = 2 k' p + 1
  \end{equation}
  holds. We find this contribution exactly $k'/4-1/2$ times in the sum in \eqref{sum1}. If
  $|\Delta_\mathrm{L}^{(1)}|<1/k'$ instead of being zero, it only affects the summand for $n=0$
  and therefore gives rise to the $\text{sgn}$ function in \eqref{sum1}. Second, we evaluate the
  sum
  \begin{equation}
    2 \sum\limits_{n=0}^{k'/2-1} \sum\limits_{l=0}^{n-1} s^{(n)} s^{(l)} =
    2 \sum\limits_{n=1}^{k' p} (n-1) = k'p(k'p - 1)\,.
  \end{equation}
  Thus, we finally need to show that
  \begin{equation}\label{sum34}
    \sum\limits_{n=0}^{k'/2-1} \left( \text{Ceil}(\Delta_\mathrm{L}^{(1)}+4pn) s^{(n)} +
    4 p \sum\limits_{l=0}^{n-1} s^{(l)}\right) =
    - k' p ( 2 k' p - 3 p) - p \text{sgn }(\Delta_\mathrm{L}^{(1)})
  \end{equation}
  holds. To do, let us consider the following problem: Choose $i=1, \dots, k'p$
  integers $0 \le m_i \le 2 k' p$ such that they fulfill
  \begin{equation}\label{condmi}
    \exists\, n_i \in \mathbb{N} \text{ s.t. }
    \Delta_\mathrm{L}^{(1)} + 4 p n_i < m_i < \Delta_\mathrm{L}^{(1)} + 4 p n_i + 2 p\,.
  \end{equation}
  In terms of these integers, we can write \eqref{sum34} as
  \begin{equation}
    \sum\limits_{i=1}^{k'p} \left( 4 p\,\text{Ceil} \left(
      \frac{m_i - \Delta_\mathrm{L}^{(1)}}{4p}\right) - m_i - 2 k' p \right) =
      - k' p ( 2 k' p - 3 p) - p \, \text{sgn }(\Delta_\mathrm{L}^{(1)})
  \end{equation}
  which simplifies to
  \begin{equation}\label{finalsum}
    4 \sum\limits_{i=1}^{k'p} \left( \frac{m_i - \Delta_\mathrm{L}^{(1)}}{4 p}
      \text{ mod } 1 \right) = k'p + \text{sgn } \Delta_\mathrm{L}^{(1)} - \Delta_\mathrm{L}^{(1)} k'\,.
  \end{equation}
  To see that this relations is indeed fulfilled, we again check first the case for
  $\Delta_\mathrm{L}^{(1)} = 0$. In this case, it is is easy to check that if an integer $m$
  satisfies \eqref{condmi} there is another integer
  \begin{equation}
    m' = ( k' p - m ) \text{ mod } 2 k' p
  \end{equation}
  which does so, too. Let us add up this two contributions to the sum on the left-hand
  side of \eqref{finalsum}:
  \begin{equation}
    \frac{m}{4 p} \text{ mod } 1 +
    \left( \frac{k'}4 - \frac{m}{4 p} \right) \text{ mod } 1 = \frac12\,.
  \end{equation}
  Here, we have used that
  \begin{equation}
    \frac{k'}4 \text{ mod } 1 = \frac12 \quad \text{and} \quad
    0 \le \frac{m}{4p} \text{ mod } < \frac12\,,
  \end{equation}
  which follows directly from \eqref{condmi}. This situation occurs $k' p/2$ times and we
  reproduces \eqref{finalsum} for $\Delta_\mathrm{L}^{(1)}=0$. For $|\Delta_\mathrm{L}^{(1)}|<\frac1{k'}$,
  the deviations from this argumentation are minor. It is straightforward to check that
  they reproduce exactly the remaining terms on the right-hand side of \eqref{finalsum}.
\end{proof}
\begin{corollary}
  All orbits with $p = a / b$ with $a$, $b$ coprime and
  $b \, \text{mod} \, 4 \ne 0$ are periodic.
\end{corollary}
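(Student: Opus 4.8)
The idea is to upgrade the preceding Theorem, which proves periodicity for \emph{dominant} orbits (Definition~\ref{dominantorbit}), to every orbit with $p=a/b$, $b\,\mathrm{mod}\,4\ne0$. Using the $\mathbb{Z}_2$ symmetry $y^{(m)}\to-y^{(m)}$, $p\to-p$ I would first assume $0\le p<1/2$; the case $p=0$ (i.e.\ $b=1$, all momenta integral) is the $\mathfrak{su}(2)$ situation and is periodic by inspection, so take $0<p<1/2$. Next I would dispose of orbits that vanish somewhere: if $x(i_*)=0$ for some $i_*$, the telescoping identity \eqref{telescoped} of Section~\ref{sec:DYNAMO} forces $x(i_*-n)+x(i_*+n)=0$ for all $n\ge1$, so the orbit is the mirror image of its forward tail and closes up into a short or long orbit precisely when that tail does. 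Hence it suffices to prove periodicity for an eternal orbit --- one with $x(i)\ne0$ for all $i$ --- and $0<p<1/2$.

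So suppose such an orbit is \emph{not} periodic; I would derive a contradiction by showing it must eventually become dominant. Reading off \eqref{evolve} one has $x(i+1)=x(i)+p(i+1)$ and $p(i+1)=p(i)-\mathrm{sgn}\,x(i)$, hence inductively $x(i)\in x(1)+\tfrac1b\mathbb{Z}$ and $p(i)\in p(1)+\mathbb{Z}$. If $\sup_i|x(i)|<\infty$ then both sequences take only finitely many values, so $(p(i),x(i))$ ranges over a finite set; since on states with $x\ne0$ the forward map is injective --- one recovers $x(i)=x(i+1)-p(i+1)$ and then $p(i)=p(i+1)+\mathrm{sgn}\,x(i)$ --- an eventually periodic orbit is already periodic. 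Thus a non-periodic eternal orbit has $\sup_i|x(i)|=\infty$; in particular its extremal values $|y^{(m)}|=|x(i_{\mathrm{ext}}^{(m)})|$ are unbounded, and since $l^{(m)}\sim\sqrt{2|y^{(m)}|}$ by \eqref{l(m)}, some $l^{(m_0)}$ eventually exceeds the dominance threshold $9+2k'p+k'(k'+2)(k'p+4)^2$ of Definition~\ref{dominantorbit}.

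It remains to check that being dominant is self-sustaining: once $l^{(m)}$ exceeds the threshold it stays above it for all later $m$, so that the tail of the orbit starting at $m_0$ is genuinely a dominant orbit, the Theorem applies, and the orbit is periodic --- contradicting our assumption, and proving the Corollary. Concretely I would work with $\Delta^{(m)}$ and its leading part $\Delta^{(m)}_\mathrm{L}$: Lemma~\ref{lemma1} makes $\Delta^{(m)}_\mathrm{L}$ periodic of period $k'$ and Lemma~\ref{lemma4} bounds $|\Delta^{(m)}_\mathrm{L}|<k'p+2$, so through the change of variables defining $\Delta^{(m)}$ the amplitude $|y^{(m)}|$ is trapped in a window around $(l-1)^2/8$ whose relative width is $O(k'p/l)$; together with the correction bounds \eqref{BOUNDS}--\eqref{boundsoncorrections} already established in this appendix, this keeps every $l^{(m)}$ above the threshold. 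The main obstacle is precisely this final quantitative step --- verifying that the explicit polynomial-in-$(k',p)$ bound defining dominance is preserved under the bounded but nontrivial variation of $l^{(m)}$ across a period, and that no $l^{(m)}$ dips below it before the orbit has had a chance to close --- but all the estimates needed are the ones assembled above, so this is careful bookkeeping rather than a new ingredient.
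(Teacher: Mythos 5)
Your proposal is correct and follows essentially the same route as the paper: argue by contradiction that a non-periodic orbit, because its successive values live on a lattice of spacing $1/b$, must have unbounded $|y^{(m)}|$, hence eventually satisfies the dominance bound and is periodic by the preceding theorem. The only cosmetic differences are that you phrase the unboundedness step as its contrapositive (bounded orbit $\Rightarrow$ finite state space $\Rightarrow$ a repeat, which by injectivity of the evolution gives periodicity) where the paper counts distinct lattice points directly, and the ``self-sustaining'' worry you flag at the end is exactly what the extra margin $9+2k'p$ in Definition~\ref{dominantorbit}, combined with Lemma~\ref{lemma4}, is there to absorb.
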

\begin{proof}
We prove this statement by contradiction. Assume we start from a $y^{(1)}$
which does not give rise to a periodic orbit. Of course this $y^{(1)}$ can
not be the initial condition of a dominant orbit because these are periodic.
Thus
\begin{equation}
  |y^{(1)}| < y_\mathrm{max}
\end{equation}
is bounded from above (with increasing $|y^{(1)}|$ also $l$ increases gradually
and at some point would fulfill the requirement in definition
\ref{dominantorbit} for a dominant orbit). Moreover, we note that $y^{(m)}$ can
only change in multiples of $1/b$ according to \eqref{iteration}. As the
sequence $y^{(m)}$ does not describes a periodic orbit, all $y^{(m)}$ are unique
and can not repeat. These two observations give rise to
\begin{equation}
  \exists \, y \in \{ y^{(2)}, \dots , y^{(N)} \} \quad \text{with} \quad
    | y - y^{(1)} | \ge \frac{N}{2 b}
\end{equation}
which implies
\begin{equation}
  | y | \ge \frac{N}{2 b} - y_\mathrm{max}\,.
\end{equation}
For $N=\text{Ceil}( 4 b y_\mathrm{max})$ there exists an $y$ in the first $N$
elements of the time evolution which is bigger that $y_\mathrm{max}$. This is
a contradiction. Such a $y$ implies a periodic, dominant orbit, but we started
with the assumption that the orbit is not periodic.
\end{proof}
\begin{theorem}
  Let $x^{(i)}$ describe a dominant, aligned, periodic orbit. Its length, which is defined as
  \begin{equation}
    x(i) = x(i+k)\,,
  \end{equation}
  is
  \begin{equation}
    k = k' ( l + 2 p ) = \begin{cases} l b + 2 a & \text{for } b \text{ even} \\
      2 l b + 4 a & \text{for } b \text{ odd}\,.
    \end{cases}
  \end{equation}
\end{theorem}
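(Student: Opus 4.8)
The plan is to obtain $k$ directly as the number of time steps in one full period, namely $k = \sum_{m=1}^{k'} l^{(m)}$, where $l^{(m)} = i_{\text{ext}}^{(m+1)} - i_{\text{ext}}^{(m)}$ is the spacing between consecutive extrema and $k'$ is the number of extrema in the orbit, as fixed by Lemma \ref{lemma1} and the preceding theorem. First I would replace each $l^{(m)}$ by its leading-order value: the exact identity \eqref{l(m)new} reads $l^{(m)} = l - \sgn y^{(m)} - 1 + 2\,\text{Ceil}\big(\Delta^{(m)} + \delta_1(\Delta^{(m)}, y^{(m)})\big)$, and since the orbit is dominant and aligned, the rounding control assembled above — the constraint \eqref{constrceil} for $m = 1,\dots,k'/2$ together with its mirror for the second half obtained via the inverse iteration \eqref{inviter} — gives $\text{Ceil}\big(\Delta^{(m)} + \delta_1(\Delta^{(m)}, y^{(m)})\big) = \text{Ceil}(\Delta^{(m)}_\mathrm{L})$ for every $m$ in the period. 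Hence $l^{(m)} = l - \sgn y^{(m)} - 1 + 2\,\text{Ceil}(\Delta^{(m)}_\mathrm{L})$ holds exactly.

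Next I would sum this identity over $m = 1,\dots,k'$, which reduces the problem to two elementary sums. The extrema strictly alternate between maxima ($y^{(m)} > 0$) and minima ($y^{(m)} < 0$), and $k'$ is even in both cases of \eqref{kprime}, so $\sum_{m=1}^{k'} \sgn y^{(m)} = 0$. Rewriting the leading iteration \eqref{leading} as $2\,\text{Ceil}(\Delta^{(m)}_\mathrm{L}) = \Delta^{(m)}_\mathrm{L} - \Delta^{(m+1)}_\mathrm{L} + 1 + 2p$ and summing, the right-hand side telescopes and the boundary terms cancel by the periodicity $\Delta^{(k'+1)}_\mathrm{L} = \Delta^{(1)}_\mathrm{L}$ of Lemma \ref{lemma1}, so $2\sum_{m=1}^{k'} \text{Ceil}(\Delta^{(m)}_\mathrm{L}) = k'(1+2p)$. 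Combining, the $\sgn$ contribution drops out and $k = k'l - k' + k'(1+2p) = k'(l+2p)$; substituting $k' = b$ when $b$ is even and $k' = 2b$ when $b$ is odd, with $p = a/b$, gives the two cases in the statement.

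The one step carrying the real content is the uniform validity of $\text{Ceil}\big(\Delta^{(m)} + \delta_1\big) = \text{Ceil}(\Delta^{(m)}_\mathrm{L})$ over the \emph{entire} period rather than just its first half; everything downstream is the telescoping bookkeeping above. This uniformity is exactly what the definition of a dominant orbit provides — the bound \eqref{lbound} is imposed on every $l^{(m)}$, not only $l^{(1)}$ — so the inverse-iteration argument around the exceptional extremum $m = k'/2 + 1$ can be run to reach the second half of the orbit. Weakening the dominant hypothesis would force one to track the corrections $\delta^{(m)}$ through that exceptional point explicitly, but this is unnecessary for the length formula, since $k$ depends only on the integer-valued ceilings of the leading orbit $\Delta^{(m)}_\mathrm{L}$, whose evolution is solved in closed form by \eqref{solDeltaL} and Lemma \ref{lemma1}.
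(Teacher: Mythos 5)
Your proposal is correct and follows essentially the same route as the paper: both reduce $k=\sum_m l^{(m)}$ to the leading-order ceilings via \eqref{l(m)new} and the dominant-orbit rounding control, then eliminate the alternating sum of ceilings using the periodicity of $\Delta^{(m)}_{\mathrm{L}}$ from Lemma \ref{lemma1}. The only cosmetic difference is that you telescope \eqref{leading} directly and cancel the $\sgn y^{(m)}$ terms separately, whereas the paper pairs consecutive extrema and invokes the iterated form \eqref{solDeltaL}; these are the same computation.
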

\begin{proof}
  The length of the orbit is given by
  \begin{equation}
    k = \sum\limits_{l=1}^{k'} l^{(m)} = \sum\limits_{l=0}^{k'/2-1} \left[ l^{(1+2 l)} + l^{(2 + 2l)} \right]\,.
  \end{equation}
  It is straightforward to check that
  \begin{equation}
    l^{(m)} + l^{(m+1)} = 2 l - 2 \text{Ceil} ( \Delta^{(m)}_\mathrm{L} ) + 2 \text{Ceil} ( \Delta^{(m)}_\mathrm{L} + 2p )
  \end{equation}
  follows from \eqref{l(m)new}. It allows us to write
  \begin{equation}
    k = l k' - \sum\limits_{l=0}^{k'-1} (-1)^l \, \text{Ceil} ( \Delta^{(1)}_\mathrm{L} + 2 l p )
      = l k' + \Delta^{(1)}_\mathrm{L} - \Delta_{\mathrm{L}}^{(1+k')} + 2 k' p
  \end{equation}
  after taking into account \eqref{solDeltaL}. According to lemma \ref{lemma1} the part $\Delta^{(1)}_\mathrm{L} - \Delta_{\mathrm{L}}^{(1+k')}$ vanishes.
\end{proof}

\newpage

\bibliographystyle{utphys}
\bibliography{DYNAMO}

\providecommand{\href}[2]{#2}\begingroup\raggedright\begin{thebibliography}{10}

\bibitem{Witten:1995ex}
E.~Witten, ``{String theory dynamics in various dimensions},''
  \href{http://dx.doi.org/10.1016/0550-3213(95)00158-O}{{\em Nucl. Phys.}
  {\bfseries B443} (1995) 85--126},
\href{http://arxiv.org/abs/hep-th/9503124}{{\ttfamily arXiv:hep-th/9503124}}.

\bibitem{Witten:1995zh}
E.~Witten, ``{Some comments on string dynamics},'' in {\em {Future perspectives
  in string theory. Proceedings, Conference, Strings'95, Los Angeles, USA,
  March 13-18, 1995}}.
\newblock 1995.
\newblock
\href{http://arxiv.org/abs/hep-th/9507121}{{\ttfamily arXiv:hep-th/9507121}}.
\newblock

\bibitem{Strominger:1995ac}
A.~Strominger, ``{Open p-branes},''
  \href{http://dx.doi.org/10.1016/0370-2693(96)00712-5}{{\em Phys. Lett.}
  {\bfseries B383} (1996) 44--47},
\href{http://arxiv.org/abs/hep-th/9512059}{{\ttfamily arXiv:hep-th/9512059}}.

\bibitem{Seiberg:1996qx}
N.~Seiberg, ``{Nontrivial fixed points of the renormalization group in
  six-dimensions},''
  \href{http://dx.doi.org/10.1016/S0370-2693(96)01424-4}{{\em Phys. Lett.}
  {\bfseries B390} (1997) 169--171},
\href{http://arxiv.org/abs/hep-th/9609161}{{\ttfamily arXiv:hep-th/9609161}}.

\bibitem{WittenSmall}
E.~Witten, ``{Small Instantons in String Theory},''
  \href{http://dx.doi.org/10.1016/0550-3213(95)00625-7}{{\em Nucl. Phys.}
  {\bfseries B460} (1996) 541--559},
\href{http://arxiv.org/abs/hep-th/9511030}{{\ttfamily arXiv:hep-th/9511030}}.

\bibitem{Ganor:1996mu}
O.~J. Ganor and A.~Hanany, ``{Small {$E_8$} instantons and tensionless
  noncritical strings},''
  \href{http://dx.doi.org/10.1016/0550-3213(96)00243-X}{{\em Nucl. Phys.}
  {\bfseries B474} (1996) 122--140},
\href{http://arxiv.org/abs/hep-th/9602120}{{\ttfamily arXiv:hep-th/9602120}}.

\bibitem{MorrisonVafaII}
D.~R. Morrison and C.~Vafa, ``{Compactifications of F-Theory on Calabi--Yau
  Threefolds -- II},''
  \href{http://dx.doi.org/10.1016/0550-3213(96)00369-0}{{\em Nucl. Phys.}
  {\bfseries B476} (1996) 437--469},
\href{http://arxiv.org/abs/hep-th/9603161}{{\ttfamily arXiv:hep-th/9603161}}.

\bibitem{Seiberg:1996vs}
N.~Seiberg and E.~Witten, ``{Comments on string dynamics in six-dimensions},''
  \href{http://dx.doi.org/10.1016/0550-3213(96)00189-7}{{\em Nucl. Phys.}
  {\bfseries B471} (1996) 121--134},
\href{http://arxiv.org/abs/hep-th/9603003}{{\ttfamily arXiv:hep-th/9603003}}.

\bibitem{Bershadsky:1996nu}
M.~Bershadsky and A.~Johansen, ``{Colliding Singularities in F-theory and Phase
  Transitions},'' \href{http://dx.doi.org/10.1016/S0550-3213(97)00027-8}{{\em
  Nucl. Phys.} {\bfseries B489} (1997) 122--138},
\href{http://arxiv.org/abs/hep-th/9610111}{{\ttfamily arXiv:hep-th/9610111}}.

\bibitem{Brunner:1997gf}
I.~Brunner and A.~Karch, ``{Branes at orbifolds versus Hanany Witten in
  six-dimensions},''
  \href{http://dx.doi.org/10.1088/1126-6708/1998/03/003}{{\em JHEP} {\bfseries
  03} (1998) 003},
\href{http://arxiv.org/abs/hep-th/9712143}{{\ttfamily arXiv:hep-th/9712143}}.

\bibitem{Blum:1997fw}
J.~D. Blum and K.~A. Intriligator, ``Consistency conditions for branes at
  orbifold singularities,'' {\em Nucl. Phys. B} {\bfseries 506} (1997)
  223--235,
\href{http://arxiv.org/abs/arXiv:hep-th/9705030}{{\ttfamily
  arXiv:hep-th/9705030}}.

\bibitem{Aspinwall:1997ye}
P.~S. Aspinwall and D.~R. Morrison, ``{Point - like instantons on K3
  orbifolds},'' \href{http://dx.doi.org/10.1016/S0550-3213(97)00516-6}{{\em
  Nucl. Phys.} {\bfseries B503} (1997) 533--564},
\href{http://arxiv.org/abs/hep-th/9705104}{{\ttfamily arXiv:hep-th/9705104}}.

\bibitem{Intriligator:1997dh}
K.~A. Intriligator, ``{New string theories in six-dimensions via branes at
  orbifold singularities},'' {\em Adv. Theor. Math. Phys.} {\bfseries 1} (1998)
  271--282,
\href{http://arxiv.org/abs/hep-th/9708117}{{\ttfamily arXiv:hep-th/9708117}}.

\bibitem{Hanany:1997gh}
A.~Hanany and A.~Zaffaroni, ``{Branes and six-dimensional supersymmetric
  theories},'' \href{http://dx.doi.org/10.1016/S0550-3213(98)00355-1}{{\em
  Nucl. Phys.} {\bfseries B529} (1998) 180--206},
\href{http://arxiv.org/abs/hep-th/9712145}{{\ttfamily arXiv:hep-th/9712145}}.

\bibitem{Heckman:2013pva}
J.~J. Heckman, D.~R. Morrison, and C.~Vafa, ``{On the Classification of 6D
  SCFTs and Generalized ADE Orbifolds},''
  \href{http://dx.doi.org/10.1007/JHEP06(2015)017,
  10.1007/JHEP05(2014)028}{{\em JHEP} {\bfseries 05} (2014) 028},
  \href{http://arxiv.org/abs/1312.5746}{{\ttfamily arXiv:1312.5746 [hep-th]}}.
[Erratum: {\textit{JHEP}} {$\mathbf{06}$} (2015) 017].

\bibitem{Gaiotto:2014lca}
D.~Gaiotto and A.~Tomasiello, ``{Holography for (1,0) theories in six
  dimensions},'' \href{http://dx.doi.org/10.1007/JHEP12(2014)003}{{\em JHEP}
  {\bfseries 12} (2014) 003},
\href{http://arxiv.org/abs/1404.0711}{{\ttfamily arXiv:1404.0711 [hep-th]}}.

\bibitem{DelZotto:2014hpa}
M.~Del~Zotto, J.~J. Heckman, A.~Tomasiello, and C.~Vafa, ``{6d Conformal
  Matter},'' \href{http://dx.doi.org/10.1007/JHEP02(2015)054}{{\em JHEP}
  {\bfseries 02} (2015) 054},
\href{http://arxiv.org/abs/1407.6359}{{\ttfamily arXiv:1407.6359 [hep-th]}}.

\bibitem{DelZotto:2014fia}
M.~Del~Zotto, J.~J. Heckman, D.~R. Morrison, and D.~S. Park, ``{6D SCFTs and
  Gravity},'' \href{http://dx.doi.org/10.1007/JHEP06(2015)158}{{\em JHEP}
  {\bfseries 06} (2015) 158},
\href{http://arxiv.org/abs/1412.6526}{{\ttfamily arXiv:1412.6526 [hep-th]}}.

\bibitem{Heckman:2015bfa}
J.~J. Heckman, D.~R. Morrison, T.~Rudelius, and C.~Vafa, ``{Atomic
  Classification of 6D SCFTs},''
  \href{http://dx.doi.org/10.1002/prop.201500024}{{\em Fortsch. Phys.}
  {\bfseries 63} (2015) 468--530},
\href{http://arxiv.org/abs/1502.05405}{{\ttfamily arXiv:1502.05405 [hep-th]}}.

\bibitem{Bhardwaj:2015xxa}
L.~Bhardwaj, ``{Classification of 6d $ \mathcal{N}=\left(1,0\right) $ gauge
  theories},'' \href{http://dx.doi.org/10.1007/JHEP11(2015)002}{{\em JHEP}
  {\bfseries 11} (2015) 002},
\href{http://arxiv.org/abs/1502.06594}{{\ttfamily arXiv:1502.06594 [hep-th]}}.

\bibitem{Chang:2017xmr}
C.-M. Chang and Y.-H. Lin, ``{Carving Out the End of the World or
  (Superconformal Bootstrap in Six Dimensions)},''
  \href{http://dx.doi.org/10.1007/JHEP08(2017)128}{{\em JHEP} {\bfseries 08}
  (2017) 128},
\href{http://arxiv.org/abs/1705.05392}{{\ttfamily arXiv:1705.05392 [hep-th]}}.

\bibitem{Apruzzi:2013yva}
F.~Apruzzi, M.~Fazzi, D.~Rosa, and A.~Tomasiello, ``{All $AdS_7$ solutions of
  type II supergravity},''
  \href{http://dx.doi.org/10.1007/JHEP04(2014)064}{{\em JHEP} {\bfseries 04}
  (2014) 064},
\href{http://arxiv.org/abs/1309.2949}{{\ttfamily arXiv:1309.2949 [hep-th]}}.

\bibitem{Heckman:2014qba}
J.~J. Heckman, ``{More on the Matter of 6D SCFTs},''
  \href{http://dx.doi.org/10.1016/j.physletb.2015.05.046}{{\em Phys. Lett.}
  {\bfseries B747} (2015) 73--75},
\href{http://arxiv.org/abs/1408.0006}{{\ttfamily arXiv:1408.0006 [hep-th]}}.

\bibitem{DelZotto:2015isa}
M.~Del~Zotto, J.~J. Heckman, D.~S. Park, and T.~Rudelius, ``{On the Defect
  Group of a 6D SCFT},''
  \href{http://dx.doi.org/10.1007/s11005-016-0839-5}{{\em Lett. Math. Phys.}
  {\bfseries 106} no.~6, (2016) 765--786},
\href{http://arxiv.org/abs/1503.04806}{{\ttfamily arXiv:1503.04806 [hep-th]}}.

\bibitem{Gaiotto:2015usa}
D.~Gaiotto and S.~S. Razamat, ``{$ \mathcal{N}=1 $ theories of class $
  {\mathcal{S}}_k $},'' \href{http://dx.doi.org/10.1007/JHEP07(2015)073}{{\em
  JHEP} {\bfseries 07} (2015) 073},
\href{http://arxiv.org/abs/1503.05159}{{\ttfamily arXiv:1503.05159 [hep-th]}}.

\bibitem{Ohmori:2015pua}
K.~Ohmori, H.~Shimizu, Y.~Tachikawa, and K.~Yonekura, ``{6d $\mathcal{N}=(1,0)$
  theories on $T^2$ and class S theories: Part I},''
  \href{http://dx.doi.org/10.1007/JHEP07(2015)014}{{\em JHEP} {\bfseries 07}
  (2015) 014},
\href{http://arxiv.org/abs/1503.06217}{{\ttfamily arXiv:1503.06217 [hep-th]}}.

\bibitem{Franco:2015jna}
S.~Franco, H.~Hayashi, and A.~Uranga, ``{Charting Class $\mathcal S_k$
  Territory},'' \href{http://dx.doi.org/10.1103/PhysRevD.92.045004}{{\em Phys.
  Rev.} {\bfseries D92} no.~4, (2015) 045004},
\href{http://arxiv.org/abs/1504.05988}{{\ttfamily arXiv:1504.05988 [hep-th]}}.

\bibitem{DelZotto:2015rca}
M.~Del~Zotto, C.~Vafa, and D.~Xie, ``{Geometric engineering, mirror symmetry
  and $ 6{\mathrm{d}}_{\left(1,0\right)}\to
  4{\mathrm{d}}_{\left(\mathcal{N}=2\right)} $},''
  \href{http://dx.doi.org/10.1007/JHEP11(2015)123}{{\em JHEP} {\bfseries 11}
  (2015) 123},
\href{http://arxiv.org/abs/1504.08348}{{\ttfamily arXiv:1504.08348 [hep-th]}}.

\bibitem{Heckman:2015ola}
J.~J. Heckman, D.~R. Morrison, T.~Rudelius, and C.~Vafa, ``{Geometry of 6D RG
  Flows},'' \href{http://dx.doi.org/10.1007/JHEP09(2015)052}{{\em JHEP}
  {\bfseries 09} (2015) 052},
\href{http://arxiv.org/abs/1505.00009}{{\ttfamily arXiv:1505.00009 [hep-th]}}.

\bibitem{Cordova:2015fha}
C.~Cordova, T.~T. Dumitrescu, and K.~Intriligator, ``{Anomalies,
  renormalization group flows, and the a-theorem in six-dimensional (1, 0)
  theories},'' \href{http://dx.doi.org/10.1007/JHEP10(2016)080}{{\em JHEP}
  {\bfseries 10} (2016) 080},
\href{http://arxiv.org/abs/1506.03807}{{\ttfamily arXiv:1506.03807 [hep-th]}}.

\bibitem{Hanany:2015pfa}
A.~Hanany and K.~Maruyoshi, ``{Chiral Theories of Class $ \mathcal{S} $},''
  \href{http://dx.doi.org/10.1007/JHEP12(2015)080}{{\em JHEP} {\bfseries 12}
  (2015) 080},
\href{http://arxiv.org/abs/1505.05053}{{\ttfamily arXiv:1505.05053 [hep-th]}}.

\bibitem{Aganagic:2015cta}
M.~Aganagic and N.~Haouzi, ``{ADE Little String Theory on a Riemann Surface
  (and Triality)},''
\href{http://arxiv.org/abs/1506.04183}{{\ttfamily arXiv:1506.04183 [hep-th]}}.

\bibitem{Louis:2015mka}
J.~Louis and S.~L{\"u}st, ``{Supersymmetric AdS$_{7}$ backgrounds in
  half-maximal supergravity and marginal operators of (1, 0) SCFTs},''
  \href{http://dx.doi.org/10.1007/JHEP10(2015)120}{{\em JHEP} {\bfseries 10}
  (2015) 120},
\href{http://arxiv.org/abs/1506.08040}{{\ttfamily arXiv:1506.08040 [hep-th]}}.

\bibitem{Ohmori:2015pia}
K.~Ohmori, H.~Shimizu, Y.~Tachikawa, and K.~Yonekura, ``{6d
  $\mathcal{N}=\left(1,\;0\right) $ theories on S$^{1}$ /T$^{2}$ and class S
  theories: Part II},'' \href{http://dx.doi.org/10.1007/JHEP12(2015)131}{{\em
  JHEP} {\bfseries 12} (2015) 131},
\href{http://arxiv.org/abs/1508.00915}{{\ttfamily arXiv:1508.00915 [hep-th]}}.

\bibitem{Coman:2015bqq}
I.~Coman, E.~Pomoni, M.~Taki, and F.~Yagi, ``{Spectral curves of
  $\mathcal{N}=1$ theories of class $\mathcal{S}_k$},''
\href{http://arxiv.org/abs/1512.06079}{{\ttfamily arXiv:1512.06079 [hep-th]}}.

\bibitem{Cremonesi:2015bld}
S.~Cremonesi and A.~Tomasiello, ``{6d holographic anomaly match as a continuum
  limit},'' \href{http://dx.doi.org/10.1007/JHEP05(2016)031}{{\em JHEP}
  {\bfseries 05} (2016) 031},
\href{http://arxiv.org/abs/1512.02225}{{\ttfamily arXiv:1512.02225 [hep-th]}}.

\bibitem{Heckman:2016ssk}
J.~J. Heckman, T.~Rudelius, and A.~Tomasiello, ``{6D RG Flows and Nilpotent
  Hierarchies},'' \href{http://dx.doi.org/10.1007/JHEP07(2016)082}{{\em JHEP}
  {\bfseries 07} (2016) 082},
\href{http://arxiv.org/abs/1601.04078}{{\ttfamily arXiv:1601.04078 [hep-th]}}.

\bibitem{Cordova:2016xhm}
C.~Cordova, T.~T. Dumitrescu, and K.~Intriligator, ``{Deformations of
  Superconformal Theories},''
  \href{http://dx.doi.org/10.1007/JHEP11(2016)135}{{\em JHEP} {\bfseries 11}
  (2016) 135},
\href{http://arxiv.org/abs/1602.01217}{{\ttfamily arXiv:1602.01217 [hep-th]}}.

\bibitem{Morrison:2016nrt}
D.~R. Morrison and C.~Vafa, ``{F-Theory and $\mathcal{N} = 1$ SCFTs in Four
  Dimensions},''
\href{http://arxiv.org/abs/1604.03560}{{\ttfamily arXiv:1604.03560 [hep-th]}}.

\bibitem{Heckman:2016xdl}
J.~J. Heckman, P.~Jefferson, T.~Rudelius, and C.~Vafa, ``{Punctures for
  Theories of Class $\mathcal{S}_\Gamma$},''
  \href{http://dx.doi.org/10.1007/JHEP03(2017)171}{{\em JHEP} {\bfseries 03}
  (2017) 171},
\href{http://arxiv.org/abs/1609.01281}{{\ttfamily arXiv:1609.01281 [hep-th]}}.

\bibitem{Cordova:2016emh}
C.~Cordova, T.~T. Dumitrescu, and K.~Intriligator, ``{Multiplets of
  Superconformal Symmetry in Diverse Dimensions},''
\href{http://arxiv.org/abs/1612.00809}{{\ttfamily arXiv:1612.00809 [hep-th]}}.

\bibitem{Kim:2016foj}
H.-C. Kim, S.~Kim, and J.~Park, ``{6d strings from new chiral gauge
  theories},''
\href{http://arxiv.org/abs/1608.03919}{{\ttfamily arXiv:1608.03919 [hep-th]}}.

\bibitem{Shimizu:2016lbw}
H.~Shimizu and Y.~Tachikawa, ``{Anomaly of strings of 6d $
  \mathcal{N}=\left(1,0\right) $ theories},''
  \href{http://dx.doi.org/10.1007/JHEP11(2016)165}{{\em JHEP} {\bfseries 11}
  (2016) 165},
\href{http://arxiv.org/abs/1608.05894}{{\ttfamily arXiv:1608.05894 [hep-th]}}.

\bibitem{Mekareeya:2016yal}
N.~Mekareeya, T.~Rudelius, and A.~Tomasiello, ``{T-branes, Anomalies and Moduli
  Spaces in 6D SCFTs},''
\href{http://arxiv.org/abs/1612.06399}{{\ttfamily arXiv:1612.06399 [hep-th]}}.

\bibitem{DelZotto:2016pvm}
M.~Del~Zotto and G.~Lockhart, ``{On Exceptional Instanton Strings},''
  \href{http://dx.doi.org/10.1007/JHEP09(2017)081}{{\em JHEP} {\bfseries 09}
  (2017) 081},
\href{http://arxiv.org/abs/1609.00310}{{\ttfamily arXiv:1609.00310 [hep-th]}}.

\bibitem{Apruzzi:2016nfr}
F.~Apruzzi, F.~Hassler, J.~J. Heckman, and I.~V. Melnikov, ``{From 6D SCFTs to
  Dynamic GLSMs},''
\href{http://arxiv.org/abs/1610.00718}{{\ttfamily arXiv:1610.00718 [hep-th]}}.

\bibitem{Razamat:2016dpl}
S.~S. Razamat, C.~Vafa, and G.~Zafrir, ``{4d $ \mathcal{N}=1 $ from 6d (1,
  0)},'' \href{http://dx.doi.org/10.1007/JHEP04(2017)064}{{\em JHEP} {\bfseries
  04} (2017) 064},
\href{http://arxiv.org/abs/1610.09178}{{\ttfamily arXiv:1610.09178 [hep-th]}}.

\bibitem{Bah:2017gph}
I.~Bah, A.~Hanany, K.~Maruyoshi, S.~S. Razamat, Y.~Tachikawa, and G.~Zafrir,
  ``{4d $ \mathcal{N}=1 $ from 6d $ \mathcal{N}=\left(1,0\right) $ on a torus
  with fluxes},'' \href{http://dx.doi.org/10.1007/JHEP06(2017)022}{{\em JHEP}
  {\bfseries 06} (2017) 022},
\href{http://arxiv.org/abs/1702.04740}{{\ttfamily arXiv:1702.04740 [hep-th]}}.

\bibitem{Mitev:2017jqj}
V.~Mitev and E.~Pomoni, ``{2D CFT blocks for the 4D class $\mathcal{S}_k$
  theories},'' \href{http://dx.doi.org/10.1007/JHEP08(2017)009}{{\em JHEP}
  {\bfseries 08} (2017) 009},
\href{http://arxiv.org/abs/1703.00736}{{\ttfamily arXiv:1703.00736 [hep-th]}}.

\bibitem{Bah:2017wxp}
I.~Bah, A.~Passias, and A.~Tomasiello, ``{$AdS_5$ compactifications with
  punctures in massive IIA supergravity},''
\href{http://arxiv.org/abs/1704.07389}{{\ttfamily arXiv:1704.07389 [hep-th]}}.

\bibitem{DelZotto:2017pti}
M.~Del~Zotto, J.~J. Heckman, and D.~R. Morrison, ``{6D SCFTs and Phases of 5D
  Theories},''
\href{http://arxiv.org/abs/1703.02981}{{\ttfamily arXiv:1703.02981 [hep-th]}}.

\bibitem{Apruzzi:2017iqe}
F.~Apruzzi, J.~J. Heckman, and T.~Rudelius, ``{Green-Schwarz Automorphisms and
  6D SCFTs},''
\href{http://arxiv.org/abs/1707.06242}{{\ttfamily arXiv:1707.06242 [hep-th]}}.

\bibitem{Heckman:2017uxe}
J.~J. Heckman and L.~Tizzano, ``{6D Fractional Quantum Hall Effect},''
\href{http://arxiv.org/abs/1708.02250}{{\ttfamily arXiv:1708.02250 [hep-th]}}.

\bibitem{Kim:2017toz}
H.-C. Kim, S.~S. Razamat, C.~Vafa, and G.~Zafrir, ``{E-String Theory on Riemann
  Surfaces},''
\href{http://arxiv.org/abs/1709.02496}{{\ttfamily arXiv:1709.02496 [hep-th]}}.

\bibitem{Razamat:2017hda}
S.~S. Razamat and G.~Zafrir, ``{$E_8$ orbits of IR dualities},''
\href{http://arxiv.org/abs/1709.06106}{{\ttfamily arXiv:1709.06106 [hep-th]}}.

\bibitem{Gaiotto:2009we}
D.~Gaiotto, ``{$\mathcal{N} = 2$ Dualities},''
  \href{http://dx.doi.org/10.1007/JHEP08(2012)034}{{\em JHEP} {\bfseries 08}
  (2012) 034},
\href{http://arxiv.org/abs/0904.2715}{{\ttfamily arXiv:0904.2715 [hep-th]}}.

\bibitem{Douglas:1996sw}
M.~R. Douglas and G.~W. Moore, ``{D-branes, quivers, and ALE instantons},''
\href{http://arxiv.org/abs/hep-th/9603167}{{\ttfamily arXiv:hep-th/9603167}}.

\bibitem{Xie:2013gma}
D.~Xie, ``{M5 brane and four dimensional $\mathcal{N} = 1$ theories I},''
  \href{http://dx.doi.org/10.1007/JHEP04(2014)154}{{\em JHEP} {\bfseries 04}
  (2014) 154},
\href{http://arxiv.org/abs/1307.5877}{{\ttfamily arXiv:1307.5877 [hep-th]}}.

\bibitem{Gaiotto:2009gz}
D.~Gaiotto and J.~Maldacena, ``{The Gravity duals of N=2 superconformal field
  theories},'' \href{http://dx.doi.org/10.1007/JHEP10(2012)189}{{\em JHEP}
  {\bfseries 10} (2012) 189},
\href{http://arxiv.org/abs/0904.4466}{{\ttfamily arXiv:0904.4466 [hep-th]}}.

\bibitem{Apruzzi:2015wna}
F.~Apruzzi, M.~Fazzi, A.~Passias, A.~Rota, and A.~Tomasiello,
  ``{Six-Dimensional Superconformal Theories and their Compactifications from
  Type IIA Supergravity},''
  \href{http://dx.doi.org/10.1103/PhysRevLett.115.061601}{{\em Phys. Rev.
  Lett.} {\bfseries 115} no.~6, (2015) 061601},
\href{http://arxiv.org/abs/1502.06616}{{\ttfamily arXiv:1502.06616 [hep-th]}}.

\bibitem{Denef:2011ee}
F.~Denef, \href{http://dx.doi.org/10.1142/9789814350525_0007}{``{TASI lectures
  on complex structures},''} in {\em {Proceedings, Theoretical Advanced Study
  Institute in Elementary Particle Physics (TASI 2010). String Theory and Its
  Applications: From meV to the Planck Scale: Boulder, Colorado, USA, June
  1-25, 2010}}, pp.~407--512.
\newblock 2011.
\newblock
\href{http://arxiv.org/abs/1104.0254}{{\ttfamily arXiv:1104.0254 [hep-th]}}.
\newblock

\end{thebibliography}\endgroup

\end{document}